\documentclass[12pt]{article}
\usepackage[english]{babel}
\usepackage{amsmath,amsthm,amsfonts,amssymb,dsfont,mathtools,url}
\usepackage{thm-restate}
\usepackage{enumitem}
\usepackage{xcolor}
\usepackage[backref,colorlinks,citecolor=cyan,bookmarks=true]{hyperref}

\usepackage[margin=1in]{geometry}

\def\eps{\varepsilon}

\setlist[enumerate]{topsep=5pt, itemsep=-0.5ex}
\setlist[itemize]{topsep=0pt, itemsep=-0.5ex}

\newcommand\numberthis{\addtocounter{equation}{1}\tag{\theequation}}

\newcommand\R{\mathbb R}
\newcommand\C{\mathbb C}
\newcommand\F{\mathbb F}

\def\B{\{0,1\}}
\def\pmone{\{-1, 1\}}

\DeclarePairedDelimiter\ceil{\lceil}{\rceil}
\DeclarePairedDelimiter{\abs}{\lvert}{\rvert}
\DeclarePairedDelimiter{\norm}{\lVert}{\rVert}

\DeclareMathOperator*{\E}{\mathbb{E}}                 

\def\xor{\oplus}

\def\polylog{\mathrm{polylog}}

\def\PAD{\textnormal{PAD}}
\def\Many{\textnormal{Many}}

\newtheorem*{theorem*}{Theorem}
\newtheorem{theorem}{Theorem}
\newtheorem{claim}[theorem]{Claim}
\newtheorem{definition}[theorem]{Definition}
\newtheorem{fact}[theorem]{Fact}
\newtheorem{lemma}[theorem]{Lemma}

\newtheorem{corollary}[theorem]{Corollary}

\def\hf{\hat f}
\def\hg{\hat g}

\def\1{\mathds{1}}

\begin{document}
\begin{titlepage}
  \title{Fourier bounds and pseudorandom generators for product tests}
  \author{Chin Ho Lee\footnote{Northeastern University.  Supported by NSF CCF award 1813930.  Work done in part while visiting Amnon Ta-Shma at Tel Aviv University, with support from the Blavatnik family fund and ISF grant no. 952/18.}}
  \maketitle

\begin{abstract}
  We study the Fourier spectrum of functions $f\colon \{0,1\}^{mk} \to \{-1,0,1\}$ which can be written as a product of $k$ Boolean functions $f_i$ on disjoint $m$-bit inputs.
We prove that for every positive integer $d$,
\[
  \sum_{S \subseteq [mk]: |S|=d} |\hat{f_S}| = O(m)^d .
\]
Our upper bound is tight up to a constant factor in the $O(\cdot)$.
Our proof builds on a new ``level-$d$ inequality'' that bounds above $\sum_{|S|=d} \hat{f_S}^2$ for any $[0,1]$-valued function $f$ in terms of its expectation, which may be of independent interest.

As a result, we construct pseudorandom generators for such functions with seed length $\tilde O(m + \log(k/\varepsilon))$, which is optimal up to polynomial factors in $\log m$, $\log\log k$ and $\log\log(1/\varepsilon)$.
Our generator in particular works for the well-studied class of combinatorial rectangles, where in addition we allow the bits to be read in any order.  
Even for this special case, previous generators have an extra $\tilde O(\log(1/\varepsilon))$ factor in their seed lengths. 

Using Schur-convexity, we also extend our results to functions $f_i$ whose range is $[-1,1]$.


\end{abstract}

%
%
%

\thispagestyle{empty}
\end{titlepage}

\section{Introduction}

In this paper we study tests on $n$ bits which can be written as a product of $k$ bounded real-valued functions defined on disjoint inputs of $m$ bits.
We first define them formally.

\begin{definition}[Product tests] \label{def:prod}
  A function $f\colon \B^n \to [-1,1]$ is a \emph{product test} with $k$ functions of input length $m$ if there exist $k$ disjoint subsets $I_1, I_2, \ldots, I_k \subseteq \{1, 2, \ldots, n\}$ of size $\le m$ such that $f(x) = \prod_{i \le k} f_i(x_{I_i})$ for some functions $f_i$ with range in $[-1,1]$.
Here $x_{I_i}$ are the $|I_i|$ bits of $x$ indexed by $I_i$.
\end{definition}

More generally, the range of each function $f_i$ can be $\C_{\le 1} := \{ z \in \C: \abs{z}=1 \}$, the complex unit disk~\cite{GopalanKM15,HLV-bipnfp}, or the set of square matrices over a field~\cite{ReingoldSV13}.
However, in this paper we only focus on the range $[-1,1]$.
As we will soon explain, our results do not hold for the broader range of $\C_{\le 1}$.

The class of product tests was first introduced by Gopalan, Kane and Meka under the name of {\em Fourier shapes}~\cite{GopalanKM15}.
However, in their definition, the subsets $I_i$ are fixed.
Motivated by the recent constructions of pseudorandom generators against {\em unordered} tests, which are tests that read input bits in arbitrary order~\cite{BogdanovPW11,ImpagliazzoMZ12,ReingoldSV13,SteinkeVW14}, Haramaty, Lee and Viola~\cite{HLV-bipnfp} considered the generalization in which the subsets $I_i$ can be arbitrary as long as they are of bounded size and pairwise disjoint. 

Product tests generalize several restricted classes of tests.
For example, when the range of the functions $f_i$ is $\B$, product tests correspond to the AND of disjoint Boolean functions, also known as the well-studied class of {\em combinatorial rectangles}~\cite{AKS87,Nis92,NiZ96,INW94,EvenGLNV98,ArmoniSWZ96,Lu02,Viola-rbd,GopalanMRTV12,GopalanY14}.
When the range of the $f_i$ is $\pmone$, they correspond to the XOR of disjoint Boolean functions, also known as the class of {\em combinatorial checkerboards}~\cite{Watson13}.  
More importantly, product tests also capture {\em read-once space computation}.
Specifically, Reingold, Steinke and Vadhan~\cite{ReingoldSV13} showed that the class of read-once width-$w$ branching programs can be encoded as product tests with outputs $\B^{w \times w}$, the set of $w \times w$ Boolean matrices.

In the past year, the study of product tests~\cite{HLV-bipnfp,LV-rop} has found applications in constructing state-of-the-art pseudorandom generators (PRGs) for space-bounded algorithms.
Using ideas in~\cite{GopalanMRTV12,GopalanY14,LV-rop,ChattopadhyayHRT18}, Meka, Reingold and Tal~\cite{MekaRT18} constructed a pseudorandom generator for width-3 read-once branching programs (ROBPs) on $n$ bits with seed length $\tilde O(\log n \log(1/\eps))$, giving the first improvement of Nisan's generator~\cite{Nis92} in the 90s.
Building on~\cite{ReingoldSV13,HLV-bipnfp,ChattopadhyayHRT18}, Forbes and Kelley significantly simplified the analysis of~\cite{MekaRT18} and constructed a generator that fools {\em unordered} polynomial-width read-once branching programs.
Thus, it is motivating to further study product tests, in the hope of gaining more insights into constructing better generators for space-bounded algorithms, and resolving the long-standing open problem of RL vs.\ L.

In this paper we are interested in understanding the Fourier spectrum of product tests.
We first define the {\em Fourier weight} of a function.
For a function $f\colon \B^n \to \R$, consider its Fourier expansion $f = \sum_{S \subseteq [n]} \hf_S \chi_S$.

\newpage

\begin{definition}[$d$th level Fourier weight in $L_q$-norm] \label{def:Fourierweight}
  Let $f\colon \B^n \to \C_{\le 1}$ be any function.
  The {\em $d$th level Fourier weight of $f$ in $L_q$-norm} is
  \[
    W_{q,d}[f] := \sum_{\abs{S}=d} \abs{\hf_S}^q .
  \]
  We denote by $W_{q,\le d}[f]$ the sum $\sum_{\ell=0}^d W_{q,\ell}[f]$.
\end{definition}

Several papers have studied the Fourier spectrum of different classes of tests.
This includes constant-depth circuits~\cite{Mansour95,Tal17},
read-once branching programs~\cite{ReingoldSV13,SteinkeVW14,ChattopadhyayHRT18}, and low-sensitivity functions~\cite{GopalanRW16}.
More specifically, these papers showed that they have {\em bounded $L_1$ Fourier tail},  that is, there exists a positive number $b$ such that for every test $f$ in the class and every positive integer $d$, we have
\[
  W_{1,d}[f] \le b^d .
\]
One technical contribution of this paper is giving tight upper and lower bounds on the $L_1$ Fourier tail of product tests.

\begin{restatable}{theorem}{loneprodbool} \label{thm:L1prod-Bool}
  Let $f\colon \B^n \to [-1,1]$ be a product test of $k$ functions $f_1, \ldots, f_k$ with input length $m$.
  Suppose there is a constant $c > 0$ such that $\abs{\E[f_i]} \le 1 - 2^{-cm}$ for every $f_i$.
  For every positive integer $d$, we have
  \[
    W_{1,d}[f] \le \bigl( 72 (\sqrt{c} \cdot m) \bigr)^d .
  \]
\end{restatable}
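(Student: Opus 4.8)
The plan is to reduce the bound to the single‑block functions $f_i$ by tensorization, bound each level‑$d'$ weight $W_{1,d'}[f_i]$ by combining Cauchy--Schwarz with a level‑$d$ inequality for $[0,1]$‑valued functions, and then recombine the blocks. \emph{Step 1 (tensorization).} Since the $I_i$ are disjoint, $\hf_S=\prod_{i\le k}\widehat{f_i}(S\cap I_i)$, with $\hf_S=0$ whenever $S\not\subseteq\bigcup_i I_i$; hence $|\hf_S|=\prod_i|\widehat{f_i}(S\cap I_i)|$, and grouping the sets $S$ according to the sizes $d_i:=|S\cap I_i|$ gives
\[
  W_{1,d}[f]=\sum_{\substack{d_1,\dots,d_k\ge 0\\ d_1+\cdots+d_k=d}}\ \prod_{i\le k} W_{1,d_i}[f_i].
\]
The structural point I will exploit is that in each term at most $d$ of the $d_i$ are nonzero.

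\emph{Step 2 (one block).} Fix $i$ and set $g_i:=(1+f_i)/2\colon\B^m\to[0,1]$; let $h_i$ be whichever of $g_i$ and $1-g_i$ has expectation at most $1/2$, and let $\beta_i:=\E[h_i]=(1-|\E[f_i]|)/2\in[0,1/2]$. The hypothesis $|\E[f_i]|\le 1-2^{-cm}$ says exactly that $\beta_i\ge 2^{-cm-1}$, so $\ln(e/\beta_i)=O(cm)$ (using $cm\ge 1$; the regime $cm=O(1)$, where $|\E[f_i]|<1/2$, is degenerate and can be treated directly). For $d'\ge 1$ one has $|\widehat{f_i}(S)|=2|\widehat{h_i}(S)|$ for every $|S|=d'$, so Cauchy--Schwarz together with the level‑$d$ inequality for $[0,1]$‑valued functions — which I will prove separately, in the form $W_{2,d'}[h]\le\bigl(K\ln(e/\E[h])/d'\bigr)^{d'}\E[h]^2$ — gives
\[
  W_{1,d'}[f_i]\ \le\ 2\sqrt{\tbinom{m}{d'}\,W_{2,d'}[h_i]}\ \le\ 2\beta_i\sqrt{\tbinom{m}{d'}}\Bigl(\tfrac{K\ln(e/\beta_i)}{d'}\Bigr)^{d'/2}\ \le\ \bigl(1-|\E[f_i]|\bigr)\Bigl(\tfrac{B}{d'}\Bigr)^{d'},
\]
where I used $\tbinom{m}{d'}\le(em/d')^{d'}$ and $2\beta_i=1-|\E[f_i]|$, and $B:=\sqrt{eKm\ln(e/\beta_i)}=O(\sqrt{c}\,m)$. (For $d'$ outside the range of validity of the level‑$d$ inequality the same bound follows from the trivial estimate $W_{2,d'}[h_i]\le\E[h_i]$.)

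\emph{Step 3 (recombination).} Write $a_i:=1-|\E[f_i]|\in[0,1]$, so $W_{1,0}[f_i]=1-a_i$ and $W_{1,d'}[f_i]\le a_i(B/d')^{d'}\le a_i B^{d'}$ for $d'\ge1$. Substituting into Step 1 and grouping the terms by the support $T:=\{i: d_i\ge 1\}$, whose size $t$ satisfies $1\le t\le d$, the number of compositions of $d$ into $t$ positive parts being $\binom{d-1}{t-1}$, I get
\[
  W_{1,d}[f]\ \le\ B^{d}\sum_{t=1}^{d}\binom{d-1}{t-1}\sum_{\substack{T\subseteq[k]\\ |T|=t}}\Bigl(\prod_{i\notin T}(1-a_i)\Bigr)\Bigl(\prod_{i\in T}a_i\Bigr).
\]
Now $\binom{d-1}{t-1}\le 2^{d-1}$ and $\sum_{T:\,|T|\le d}\prod_{i\notin T}(1-a_i)\prod_{i\in T}a_i\le\prod_{i\le k}\bigl((1-a_i)+a_i\bigr)=1$, so $W_{1,d}[f]\le 2^{d-1}B^{d}\le(2B)^{d}$; tracking the constants through $\ln(e/\beta_i)\le(cm+1)\ln 2+1$ then yields $W_{1,d}[f]\le\bigl(72\sqrt{c}\,m\bigr)^{d}$.

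\emph{Main obstacle.} The technical heart is the level‑$d$ inequality invoked in Step 2, and specifically getting the quadratic dependence $\E[h]^2$ (rather than $\E[h]$) on the expectation: it is precisely this quadratic saving that lets the per‑block factor $a_i$ telescope against $\prod_{i\notin T}(1-a_i)$ to the clean bound $1$, so that no dependence on $k$ survives. Everything else is routine bookkeeping — making the Step‑2 bound uniform in $d'$ and over $i$, and disposing of the degenerate case $cm=O(1)$.
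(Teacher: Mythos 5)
Your proof mirrors the paper's strategy (tensorize, bound each block via a level-$d$ inequality, recombine), but the recombination in Step 3 is genuinely cleaner: keeping $W_{1,0}[f_i]=1-a_i$ exactly and using $\sum_{T\subseteq[k]}\prod_{i\notin T}(1-a_i)\prod_{i\in T}a_i=\prod_{i}\bigl((1-a_i)+a_i\bigr)=1$ replaces the paper's passage through $x\le e^{-(1-x)}$, Maclaurin's inequality, and the maximization of $e^{-2x}x^\ell$. However, Step~2 contains a genuine gap as written. The level-$d$ inequality you propose to prove, $W_{2,d'}[h]\le\bigl(K\ln(e/\E[h])/d'\bigr)^{d'}\E[h]^2$, is false for $d'$ large compared to $\ln(1/\E[h])$: take $h=(1-\chi_{[m]})/2$, so $\E[h]=1/2$ and $W_{2,m}[h]=1/4$, while your bound gives $\bigl(K\ln(2e)/m\bigr)^{m}/4\to 0$ as $m\to\infty$. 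The correct statement (the paper's Lemma~\ref{lemma:level-k-inequalities}) puts the exponent $1/d$ \emph{inside} the logarithm, $W_{2,d}[g]\le 4\E[g]^2\bigl(2e\ln(e/\E[g]^{1/d})\bigr)^{d}$, and is not divided by $d$, so for large $d$ it becomes trivially large rather than vanishing. Your parenthetical fallback is also not right as stated: for the same $h$ with $d'=m$ and $a_i=1$, the claimed $W_{1,m}[f_i]\le a_i(B/m)^{m}$ forces $B\ge m$, but the trivial estimate together with Cauchy--Schwarz only delivers $B$ on the order of $\sqrt m$ when $\E[h]$ is a constant. What saves the argument is that Step~3 only ever uses the relaxed per-block bound $W_{1,d'}[f_i]\le a_i B^{d'}$ (you immediately pass from $(B/d')^{d'}$ to $B^{d'}$), and \emph{that} inequality does hold with $B=O(\sqrt c\,m)$: it follows directly from Lemma~\ref{lemma:level-k-inequalities} via $\ln(e/\beta_i^{1/d'})\le\ln(e/\beta_i)\le cm+2$, or from your posited form restricted to its valid range $d'\lesssim\ln(1/\beta_i)$ together with the trivial estimate outside that range. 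If you drop the intermediate $(B/d')^{d'}$ claim and aim directly for $a_iB^{d'}$, the proof goes through and in fact yields a slightly better constant than $72$.
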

Theorem~\ref{thm:L1prod-Bool} applies to Boolean functions $f_i$ with outputs $\B$ or $\pmone$.
Moreover, the parity function on $mk$ bits can be written as a product test with outputs $\pmone$, which has $\hf_{[mk]} = 1$.
So product tests do not have non-trivial $L_2$ Fourier tail.
(See~\cite{Tal17} for a definition.)

We also obtain a different upper bound when the $f_i$ are arbitrary $[-1,1]$-valued functions.

\begin{restatable}{theorem}{loneprod} \label{thm:L1prod}
  Let $f\colon \B^n \to [-1,1]$ be a product test of $k$ functions $f_1, \ldots, f_k$ with input length $m$.
  Let $d$ be a positive integer.
  We have
  \[
    W_{1,d}[f] \le \bigl( 85\sqrt{m \ln(4ek)} \bigr)^d .
  \]
\end{restatable}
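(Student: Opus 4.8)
The plan is to reduce to Theorem~\ref{thm:L1prod-Bool} by a randomized rounding argument that turns the $[-1,1]$-valued functions $f_i$ into Boolean functions while preserving the product structure and the Fourier coefficients in expectation. Specifically, given $x_{I_i} \in \B^m$, set $g_i(x_{I_i})$ to be $+1$ with probability $(1+f_i(x_{I_i}))/2$ and $-1$ otherwise, with all these choices independent across $i$ and across inputs; then $\E[g_i] = f_i$ pointwise, and $g = \prod_{i \le k} g_i$ is a (random) product test with outputs $\pmone$ satisfying $\E[\hat g_S] = \hat f_S$ for every $S$. The difficulty is that Theorem~\ref{thm:L1prod-Bool} controls $W_{1,d}$ only under the hypothesis $\abs{\E[g_i]} \le 1 - 2^{-cm}$, which a random rounding of an $f_i$ with $\E[f_i]$ close to $\pm 1$ need not satisfy, and moreover $\E[\abs{\hat g_S}] \ge \abs{\E[\hat g_S]} = \abs{\hat f_S}$ goes the wrong way for a direct triangle-inequality bound on $W_{1,d}[f]$ in terms of $\E[W_{1,d}[g]]$.

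So instead I would work directly with the second moment. The key point is that the underlying \emph{level-$d$ inequality} the paper advertises — a bound on $W_{2,d}[f] = \sum_{|S|=d}\hat f_S^2$ for any $[0,1]$-valued $f$ in terms of $\E[f]$ — should be the real engine here, since $W_{2,d}$ behaves well under the rounding (it does not blow up: $\E[W_{2,d}[g]]$ is finite and $W_{2,d}$ of the product factorizes nicely over the blocks). Concretely, for a product test, $\hat f_S = \prod_{i} \widehat{(f_i)}_{S \cap I_i}$, so
\[
  W_{1,d}[f] = \sum_{\substack{d_1 + \cdots + d_k = d \\ d_i \ge 0}} \ \prod_{i \le k} W_{1,d_i}[f_i].
\]
The strategy is then: (i) bound each single-block quantity $W_{1,d_i}[f_i]$ in terms of $\E[f_i^2]$ (or a related variance-type quantity) using Cauchy--Schwarz together with the level-$d$ inequality applied to the $[0,1]$-valued function $(1+f_i)/2$ — this gives something like $W_{1,d_i}[f_i] \le (O(1) \cdot \sqrt{m/d_i})^{d_i} \cdot (\text{something involving } \E[f_i^2])^{1/2}$, with $W_{1,0}[f_i] = \abs{\E[f_i]} \le 1$; (ii) plug into the multinomial sum above and optimize the choice of the $d_i$'s. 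The appearance of $\sqrt{m}$ rather than $m$ (as in Theorem~\ref{thm:L1prod-Bool}) comes precisely from the $\sqrt{1/d_i}$ savings in the level-$d$ inequality, which is lost in the Boolean case only because there one cannot rule out all the Fourier mass sitting at a single level.

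The main obstacle, and where I would spend the most care, is the combinatorial optimization in step (ii): bounding $\sum_{d_1 + \cdots + d_k = d} \prod_i (C\sqrt{m/d_i})^{d_i}$, subject to the constraint that only $d$ of the $k$ blocks can have $d_i \ge 1$. One must choose which $\le d$ blocks are ``active'', which contributes a $\binom{k}{\le d} \le (ek/d)^d$-type factor, and then bound $\prod_i (C\sqrt{m/d_i})^{d_i}$ over active blocks; since $\prod_i d_i^{-d_i/2}$ is maximized (subject to $\sum d_i = d$) when the mass is spread out, i.e. all $d_i = 1$, this product is at most $1$, leaving $(C\sqrt m)^d$. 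Multiplying the $(ek/d)^d$ and $(C\sqrt m)^d$ factors and simplifying $\sqrt{k/d} \cdot \sqrt m = \sqrt{mk/d}$ into the stated $\sqrt{m \ln(4ek)}$ form will require balancing $d$ against $k$ — the $\ln(4ek)$ ultimately coming from the worst case $d \approx \ln k$ — and tracking the numerical constant so that it lands at $85$. I expect Cauchy--Schwarz to have to be applied in a slightly cleverer weighted form (splitting the factor $\abs{\hat f_S}$ as $\abs{\hat f_S}^{1-\theta}\abs{\hat f_S}^{\theta}$, or pairing $W_{1,d_i}$ with $W_{2,d_i}$ block-by-block) to make the bookkeeping close; absent that, a naive application loses polynomial-in-$d$ factors that are fine in the base of the exponential but annoying to absorb.
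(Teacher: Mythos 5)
Your high-level plan — factor the product over blocks, invoke the level-$d$ ($W_{2,d}$) inequality per block, pass from $W_{2,d_i}$ to $W_{1,d_i}$ by Cauchy--Schwarz, and optimize over the partition $d = \sum_i d_i$ — is indeed the skeleton of the paper's proof (Lemma~\ref{lemma:L1prod}). But the optimization in your step~(ii) has a genuine gap that makes the argument fail for small $d$, and it is precisely the place where the paper does something you have not anticipated.

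The problem is that you propose to bound $W_{1,d_i}[f_i]$ for the active blocks by $(C\sqrt{m/d_i})^{d_i}$ and $W_{1,0}[f_i] = |\E[f_i]| \le 1$ for the inactive ones, and then pay a factor $\binom{k}{\le d} \le (ek/d)^d$ for choosing which $\le d$ blocks are active. That sum cannot be salvaged by ``balancing $d$ against $k$'': the theorem must hold for \emph{every} $d$, and already at $d=1$ you would be claiming $W_{1,1}[f] \lesssim ek\cdot\sqrt{m}$, which is exponentially worse than $\sqrt{m\ln k}$ when $k = 2^{\Theta(m)}$. The resolution, which your outline throws away, is that the level-$d$ inequality (Lemma~\ref{lemma:level-k-inequalities}) gives a bound proportional to $\alpha_i := \E[(1-f_i)/2]$ (not to anything like $\E[f_i^2]^{1/2}$), and the inactive blocks contribute $\prod_{i\notin S}|\E[f_i]| = \prod_{i\notin S}(1-2\alpha_i) \le e^{-2\sum_{i\notin S}\alpha_i}$. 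This exponential decay is exactly what tames the $\binom{k}{\ell}$ factor: in the paper's bound the term $\binom{k}{\ell}\alpha^\ell e^{-2k\alpha}$ is $O(1)$ once one maximizes over $\alpha$ (the maximum is near $\alpha\approx \ell/k$), and the only surviving $k$-dependence enters through $\ln(1/\alpha) \approx \ln(k/\ell)$. If you replace $\prod_{i\notin S}W_0[f_i]$ by $1$, that mechanism disappears and there is nothing to cancel the $(ek/d)^d$.

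Two further things you did not anticipate, both essential. First, to pass from arbitrary $(\alpha_1,\dots,\alpha_k)$ to the diagonal $(\alpha,\dots,\alpha)$ the paper proves that the relevant multivariate expression $g(\alpha_1,\dots,\alpha_k)$ is \emph{Schur-concave} (Claim~\ref{claim:schurconcave}, via the Schur--Ostrowski criterion); a naive AM-GM on the $\alpha_i$'s does not suffice because the expression also involves $\ln(1/\alpha_i)$ factors with exponents depending on the partition $z$. Second, the exponent $1/z_i$ inside the logarithm in Lemma~\ref{lemma:level-k-inequalities} is not a cosmetic refinement of Chang/Talagrand: it is exactly what makes the weighted AM-GM step
\[
  \prod_{i\in S}\bigl(\ln(e/\alpha^{1/z_i})\bigr)^{z_i/2} \le \bigl(\ln(e/\alpha^{\ell/d})\bigr)^{d/2}
\]
work, collapsing the dependence on the individual $z_i$'s. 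Your remark that $\prod d_i^{-d_i/2}\le 1$ is true but is not the relevant optimization; without the $1/z_i$ exponent the reduction to a single variable $\alpha$ breaks down, as the paper itself points out. So the proposal identifies the right ingredients but the combinatorial step would need to be replaced wholesale by the $e^{-2\sum\alpha_i}$ bookkeeping, the Schur-concavity reduction, and the single-variable calculus on $g_\ell(x) = e^{-2kx}(ekx/\ell)^\ell(\ln(e/x^{\ell/d}))^{d/2}$.
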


We note that Theorems~\ref{thm:L1prod-Bool} and~\ref{thm:L1prod} are incomparable, as one can take $m = 1$ and $k = n$, or $m = n$ and $k = 1$. 

\begin{restatable}{claim}{lowerbound} \label{claim:lowerbound}
  For all positive integers $m$ and $d$, there exists a product test $f\colon \B^{mk} \to \B$ with $k = d \cdot 2^m$ functions of input length $m$ such that
  \[
    W_{1,d}[f] \ge (m/e^{3/2})^d .
  \]
\end{restatable}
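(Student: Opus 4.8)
The plan is to exhibit an explicit product test achieving the claimed lower bound. A natural candidate is an indicator of a combinatorial rectangle: take each $f_i\colon \B^m \to \B$ to be the indicator of a single point, i.e. $f_i(y) = \1[y = a_i]$ for some $a_i \in \B^m$, so that $\E[f_i] = 2^{-m}$ and $f = \prod_i f_i$ is the indicator of the single string $(a_1, \dots, a_k) \in \B^{mk}$. For this $f$ one has $\hf_S = 2^{-mk}(\pm 1)$ for every $S$, so $W_{1,d}[f] = \binom{mk}{d} 2^{-mk}$, which is far too small once $k$ is large. The fix is to make the $f_i$ range over \emph{all} $2^m$ point indicators in blocks, so that Fourier mass on a fixed level $d$ comes from choosing $d$ of the blocks and, within each chosen block, a single nonconstant character; summing over the $2^m$ choices of center inside each block produces the reinforcement we need.

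Concretely, I would set $k = d \cdot 2^m$ and group the $k$ blocks into $d$ groups $G_1, \dots, G_d$, each of size $2^m$, indexing the blocks in $G_j$ by strings $a \in \B^m$. For block $(j,a)$ let $f_{j,a} = \1[y = a]$, whose Fourier expansion is $f_{j,a} = 2^{-m} \sum_{T \subseteq [m]} \chi_T(a)\, \chi_T$. Then $f = \prod_{j,a} f_{j,a}$, and I would extract the level-$d$ weight by expanding this product and keeping only terms where exactly $d$ of the $2^m \cdot d$ factors contribute a character of size $1$ on their own block and all other factors contribute their constant term $2^{-m}$. The key point is that a single block can contribute at most a degree-$m$ character, so to reach total degree $d$ with $d$ blocks contributing we must take exactly one block per group contributing a degree-$1$ character and the rest contributing their constant. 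Choosing one block from each of the $d$ groups, a coordinate within that block's $m$ coordinates, and tracking signs, the magnitude of each resulting level-$d$ coefficient is $2^{-m \cdot d \cdot 2^m} \cdot 2^{m(d \cdot 2^m - d)} \cdot 1 = 2^{-md}$ — wait, more carefully, the constant factors $2^{-m}$ from the $d\cdot 2^m$ blocks multiply to $2^{-md2^m}$, but I should instead track that the number of level-$d$ Fourier coefficients that are nonzero, each of magnitude $2^{-md2^m}\cdot 2^{m(d2^m-d)}\cdot m^{?}$, combine to give the bound; I will organize the bookkeeping so that summing $|\hf_S|$ over the relevant $S$ gives a product over the $d$ groups of a per-group sum, and each per-group sum is exactly $m$ up to the $e^{-3/2}$ slack. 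The cleanest route: show $W_{1,d}[f] \ge \bigl(\sum_{a \in \B^m}\sum_{i \in [m]} |\widehat{f_{j,a}}_{\{i\}}| / (\text{normalization})\bigr)^d$ and evaluate the inner quantity.

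The main obstacle is the bookkeeping of normalization constants: because every block contributes either its constant term or a character, the exponentially many constant contributions $2^{-m}$ multiply together and must be balanced against the exponentially many ($2^m$ per group) choices of which block carries the character. I expect these to cancel up to a factor that is $\Theta(1)$ per group — the $2^m$ choices of center $a$ each contribute a coefficient of absolute value proportional to $2^{-m}$ relative to the baseline, so their sum is $\Theta(1)$, and then multiplying by $m$ for the choice of coordinate $i$ and taking the $d$-th power yields $(\Theta(m))^d$; pinning the constant down to $1/e^{3/2}$ will require a short estimate, plausibly via $(1 - 2^{-m})^{2^m} \ge e^{-3/2}$ or a similar elementary inequality for the "all other blocks take their constant term" factor. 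Once the constant is identified, the bound $W_{1,d}[f] \ge (m/e^{3/2})^d$ follows, and since $f$ is $\B$-valued by construction (it is an AND of point indicators, hence itself a point indicator on $\B^{mk}$) all hypotheses of the claim are met. I would double-check that this $f$ is genuinely a product test in the sense of Definition~\ref{def:prod} — the $I_i$ are the $k$ disjoint blocks of $m$ coordinates, so yes.
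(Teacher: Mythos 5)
Your construction is the wrong one, and the gap is fatal rather than cosmetic. A product of point indicators on disjoint blocks is itself a point indicator on $\B^{mk}$, no matter how you distribute the centers $a$ across blocks or groups. Every Fourier coefficient of a point indicator on $\B^{mk}$ has magnitude $2^{-mk}$, so $W_{1,d}[f] = \binom{mk}{d} 2^{-mk}$, which for $k = d \cdot 2^m$ is doubly-exponentially small in $m$ --- the very problem you identified in your own first paragraph. Your ``fix'' of spreading the centers over all of $\B^m$ does not change this: there is still exactly one product test $f$ (once the centers are fixed), and it is still a point indicator. The ``summing over $2^m$ choices of center inside each block'' that you invoke for reinforcement is not a sum that actually appears when you compute $W_{1,d}[f]$; what does appear in each term is the factor $\prod_{i \notin S} \widehat{f_i}(\emptyset) = (2^{-m})^{k-|S|}$, because a point indicator has $\widehat{f_i}(\emptyset) = 2^{-m}$. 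That factor is $2^{-m(k - d)} \approx 2^{-m d 2^m}$, which kills the bound; no amount of bookkeeping will recover it.

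The right move --- and what the paper does --- is to take each $f_i$ to be the $\emph{complement}$ of a point indicator, i.e.\ the OR function on its block, so that $\widehat{f_i}(\emptyset) = 1 - 2^{-m}$ is close to $1$ while $|\widehat{f_i}(T)| = 2^{-m}$ for every nonempty $T$. Then the contribution of the $k - d$ idle blocks is $(1 - 2^{-m})^{k-d} \ge e^{-3d/2}$ (exactly the elementary inequality you named at the end), and picking level-$1$ coefficients in $d$ blocks gives $\binom{k}{d}(m \cdot 2^{-m})^d \ge (2^m)^d (m \cdot 2^{-m})^d = m^d$, yielding $W_{1,d}[f] \ge (m/e^{3/2})^d$. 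There is also no need for the grouping into $d$ blocks of size $2^m$, nor for the claim that ``to reach total degree $d$ we must take exactly one block per group contributing a degree-$1$ character'' (which is false in any case --- level-$d$ sets need not respect your grouping). You had the right estimate $(1 - 2^{-m})^{2^m} \ge e^{-3/2}$ in hand but applied it to the wrong function; switching from $\1[y = a]$ to $1 - \1[y = a]$ is the missing step.
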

This matches the upper bound $W_{1,d}[f] = O(m)^d$ in Theorem~\ref{thm:L1prod-Bool} up to the constant in the $O(\cdot)$.
Moreover, applying Theorem~\ref{thm:L1prod} to the product test $f$ in Claim~\ref{claim:lowerbound} gives $W_{1,d}[f] = O(\sqrt{m \log(2k)})^d = O(m + \sqrt{m \log d})^d$.
Therefore, for all integers $m$ and $d \le 2^{O(m)}$, there exists an integer $k$ and a product test $f$ such that the upper bound $W_{1,d}[f] = O(\sqrt{m \log(2k)})^d$ is tight up to the constant in the $O(\cdot)$.

We now discuss some applications of Theorems~\ref{thm:L1prod-Bool} and~\ref{thm:L1prod} in pseudorandomness.

\paragraph{Pseudorandom generators.}
In recent years, researchers have developed new frameworks to construct pseudorandom generators against different classes of tests.
Gopalan, Meka, Reingold, Trevisan and Vadhan~\cite{GopalanMRTV12} refined a framework introduced by Ajtai and Wigderson~\cite{AjtaiW89} to construct better generators for the classes of combinatorial rectangles and read-once DNFs.
Since then, this framework has been used extensively to construct new PRGs against different classes of tests~\cite{TrevisanX13,GopalanKM15,GopalanY14,ReingoldSV13,SteinkeVW14,ChenSV15,HLV-bipnfp,HatamiT18,ServedioT18,LV-rop,ChattopadhyayHRT18,ForbesK18,MekaRT18,DoronHH18}.
Recently, a beautiful work by Chattopadhyay, Hatami, Hosseini and Lovett~\cite{ChattopadhyayHHL18} developed a new framework of constructing PRGs against any classes of functions that are closed under restriction and have bounded $L_1$ Fourier tail.
Thus, applying their result to Theorems~\ref{thm:L1prod-Bool} and~\ref{thm:L1prod}, we can immediately obtain a non-trivial PRG for product tests.
However, using the recent result of Forbes and Kelley~\cite{ForbesK18} and exploiting the structure of product tests, we use the Ajtai--Wigderson framework to construct PRGs with much better seed length than using~\cite{ChattopadhyayHHL18} as a blackbox.

\begin{restatable}{theorem}{prgxor} \label{thm:prg-xor}
  There exists an explicit generator $G\colon \B^\ell \to \B^n$ that fools the XOR of any $k$ Boolean functions on disjoint inputs of length $\le m$ with error $\eps$ and seed length $O(m + \log(n/\eps))(\log m + \log\log(n/\eps))^2 = \tilde O(m + \log(n/\eps))$.
\end{restatable}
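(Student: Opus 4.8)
The plan is to run the Ajtai--Wigderson framework of iterated pseudorandom restrictions, instantiated with the Forbes--Kelley building block~\cite{ForbesK18} and analyzed through the $L_1$ Fourier tail bound of Theorem~\ref{thm:L1prod-Bool}. First I would normalize the problem: writing $(-1)^{\bigoplus_{i\le k} f_i(x_{I_i})} = \prod_{i\le k} g_i(x_{I_i})$ with $g_i := (-1)^{f_i}$, it suffices to $2\eps$-fool the $\pmone$-valued product test $g := \prod_i g_i$. Every non-constant Boolean $g_i$ on $\le m$ bits satisfies $\abs{\E[g_i]} \le 1 - 2^{-m}$ (since $\Pr[g_i=-1]\ge 2^{-m}$ when $g_i\not\equiv 1$, and symmetrically), while constant $g_i$ only contribute a global sign; hence Theorem~\ref{thm:L1prod-Bool} with $c=1$ gives $W_{1,d}[g] \le (72m)^d$ for every $d$. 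The only structural facts I will use are this bound and the observation that products of $\pmone$-valued functions on $\le m$ bits are closed under restriction (fixing coordinates of a block shortens it, and fully fixed blocks drop out as a sign), so the same Fourier bound holds verbatim for every restriction of $g$.

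The generator performs $t = O(\log m + \log\log(n/\eps))$ rounds. In round $j$ it maintains a set $L_{j-1}$ of live coordinates ($L_0 = [n]$), samples $L_j\subseteq L_{j-1}$ by keeping each coordinate with probability $p$ using an almost-independent (or small-bias) coordinate selector, and samples a $\delta$-biased string $y_j$ on $L_{j-1}\setminus L_j$; the output is $\bigoplus_{j\le t} y_j$ together with one final $\delta$-biased string on $L_t$. The parameter $p$ is chosen so that after $t$ rounds the surviving product test involves only few blocks: since a fixed coordinate survives all $t$ independent rounds with probability $p^t$, a union bound over the $\le m$ coordinates of a block and then over the $k$ blocks controls the expected number of surviving (not fully fixed) blocks, and with an appropriate schedule for $p$ this number is driven to $O(\log(1/\eps))$, whose XOR is a product of few functions on $\le m$ bits that a fresh $\delta$-biased string with $\delta = \eps\cdot 2^{-O(m)}$ then fools directly (alternatively one recurses on the survivors).

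The heart of the analysis is the per-round claim that a single pseudorandom restriction step $\eps/O(t)$-fools $g$ and all of its restrictions. The point of the Forbes--Kelley formulation is that the restriction step is \emph{exactly} mean-preserving in expectation over a truly random $\rho$ (fixing a random subset and averaging the rest equals averaging everything), so error enters only through (i) derandomizing the coordinate selector $\rho$ and (ii) replacing the fixed bits by a $\delta$-biased string. Both are controlled using $W_{1,d}[g]\le(72m)^d$: one truncates the Fourier expansion of the relevant (averaged, restricted) function at level $D = O(\log(n/\eps))$, bounds the tail above $D$ using that $\rho$ is $p$-random together with the level-$d$ weight bound, and bounds the error on levels $\le D$ by $\delta$ times $\sum_{d\le D}(72m)^d\binom{n}{d}^{1/2}$ via Cauchy--Schwarz and $W_{2,d}\le W_{1,d}$; taking $\log(1/\delta) = O(D\log(mn/\eps)) = \tilde O(\log(n/\eps))$ makes this negligible, and a hybrid argument over the $t$ rounds plus the final string yields total error $\le\eps$. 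Accounting for the cost of the almost-independent selector, the $\delta$-biased strings, and the base case on $\le m$ bits, each round costs $O(m+\log(n/\eps))\cdot O(\log m + \log\log(n/\eps))$, for a total seed length $O(m+\log(n/\eps))(\log m + \log\log(n/\eps))^2 = \tilde O(m+\log(n/\eps))$.

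I expect the main obstacle to be making step (ii)--(i) of the per-round analysis quantitatively tight enough that the $\delta$-biased seed carries no $\poly(m)$ factor, only a $\log m$: this is exactly what lets the $t = O(\log m + \log\log(n/\eps))$ rounds multiply out to $\tilde O(m+\log(n/\eps))$ rather than the $\tilde O\bigl(m\cdot\log(n/\eps)\bigr)$ one would get by applying the $L_1$-Fourier-tail PRG of Chattopadhyay--Hatami--Hosseini--Lovett~\cite{ChattopadhyayHHL18} as a black box. A secondary delicate point is balancing the schedule for $p$ across rounds: $p$ must shrink fast enough that $O(\log m + \log\log(n/\eps))$ rounds suffice to reduce the number of surviving blocks (and, in parallel, effectively the per-block degree) to $O(\log(1/\eps))$, yet stay bounded below by $1/\poly(m,\log(n/\eps))$ so that the per-round derandomization error in step (i) remains affordable.
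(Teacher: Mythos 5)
Your high-level plan — Ajtai--Wigderson iterated restrictions, a Forbes--Kelley building block (Theorem~\ref{thm:bipnfp}/Lemma~\ref{lemma:ForbesK18}), and the $L_1$ tail bound of Theorem~\ref{thm:L1prod-Bool} (which the paper notes does suffice for Theorem~\ref{thm:prg-xor}) — matches the paper, and you are right to avoid the black-box~\cite{ChattopadhyayHHL18} route. But the proposal has a genuine gap: the claimed round count $t = O(\log m + \log\log(n/\eps))$ cannot be reached by pseudorandom restrictions alone when $k$ is large. The Forbes--Kelley step keeps each coordinate live with probability essentially $1/2$ (the selector $T$ is almost $d$-wise independent with marginals close to uniform; there is no free parameter $p$ to tune, and any attempt to make $p$ much smaller would require reproving Lemma~\ref{lemma:ForbesK18} for a biased selector, which is not done). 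After $t$ such rounds each block of size $\le m$ is not fully fixed with probability $\Theta(m 2^{-t})$, so the expected number of surviving blocks is $\Theta(k m 2^{-t})$. Driving this from $k$ (which can be as large as $n/m$, e.g.\ $m=2$, $k=n/2$) down to $O(\log(1/\eps))$ forces $t = \Omega(\log(km))$, i.e.\ potentially $\Omega(\log n)$ rounds — not $O(\log m + \log\log(n/\eps))$ — which would blow the seed up to $\tilde O\bigl((m+\log(n/\eps))\log n\bigr)$.

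The paper's proof circumvents exactly this obstacle with three ingredients you omit. First, an auxiliary generator $G_{\oplus\Many}$ of Meka--Reingold--Tal (Lemma~\ref{lemma:XOR-many}), XORed into the output at every level, which directly fools $\pmone$-products whenever $k \ge 16^m$; this lets each recursive step assume $k \le 16^{O(m)}$, so a union bound over $k$ blocks has the right scale for $O(1)$ restriction rounds to halve $m$. Second, the recursion tracks one ``bad'' block of length $\le s = O(m+\log(n/\eps))$ to absorb the few blocks whose input length does not shrink in a given step. Third, after $m$ has been driven down to $O(\log\log(n/\eps))$ and hence $k$ to $\polylog(n/\eps)$, the proof switches progress measure and regroups $\Theta(\log n/\log\log n)$ short blocks into one, repeating the whole stage $O(1)$ times to shrink $k$ — none of which is a simple ``keep restricting until few blocks survive.'' Separately, your base case is off: if $O(\log(1/\eps))$ blocks each on $\le m$ bits survive, the resulting XOR has $L_1$ norm up to $2^{O(m\log(1/\eps))}$, so the small-bias parameter must be $\eps\cdot 2^{-O(m\log(1/\eps))}$, not $\eps\cdot 2^{-O(m)}$; that alone costs a seed of $O(m\log(1/\eps))$, which exceeds the target $\tilde O(m+\log(n/\eps))$. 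I would recommend introducing the $G_{\oplus\Many}$ component and the ``one long block'' bookkeeping before re-counting rounds.
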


Here $\tilde O(1)$ hides polynomial factors in $\log m$, $\log\log k$, $\log\log n$ and $\log\log(1/\eps)$.
When $mk = n$ or $\eps = n^{-\Omega(1)}$, the generator in Theorem~\ref{thm:prg-xor} has seed length $\tilde O(m + \log(k/\eps))$, which is optimal up to $\tilde O(1)$ factors.

We now compare Theorem~\ref{thm:prg-xor} with previous works.
Using a completely different analysis, Lee and Viola~\cite{LV-rop} obtained a generator with seed length $\tilde O((m + \log k))
 \log(1/\eps)$.  
 When $m = O(\log n)$ and $k = 1/\eps = n^{\Omega(1)}$, this is $\tilde O(\log^2 n)$, whereas the generator in Theorem~\ref{thm:prg-xor} has seed length $\tilde O(\log n)$.
When each function $f_i$ is computable by a read-once width-$w$ branching program on $m$ bits, Meka, Reingold and Tal~\cite{MekaRT18} obtained a PRG with seed length $O(\log (n/\eps)) (\log m + \log\log (n/\eps))^{2w+2}$.
When $m = O(\log(n/\eps))$, Theorem~\ref{thm:prg-xor} improves on their generator on the lower order terms.
As a result, we obtain a PRG for {\em read-once $\F_2$-polynomials}, which are a sum of monomials on disjoint variables over $\F_2$, with seed length $O(\log n/\eps) (\log\log (n/\eps))^2$.
This also improves on the seed length of their PRG for read-once polynomials in the lower order terms by a factor of $(\log\log(n/\eps))^4$.

Our generator in Theorem~\ref{thm:prg-xor} also works for the AND of the functions $f_i$, corresponding to the class of {\em unordered} combinatorial rectangles.
In fact, we have the following more general corollary.

\begin{corollary} \label{cor:prg-L1}
  There exists an explicit pseudorandom generator $G\colon \B^\ell \to \B^n$ with seed length $\tilde O(m + \log(n/\eps))$ such that the following holds.
  Let $f_1, \ldots, f_k\colon \B^{I_i} \to \B$ be $k$ Boolean functions where the subsets $I_i \subseteq [n]$ are pairwise disjoint and have size at most $m$.
  Let $g\colon \B^k \to \C_{\le 1}$ be any function and write $g$ in its Fourier expansion $g = \sum_{S \subseteq [k]} \hg_S \chi_S$.
  Then $G$ fools $g(f_1, \ldots, f_k)$ with error $L_1[g] \cdot \eps$,
  where $L_1[g] := \sum_{S \neq \emptyset} \abs{\hg_S}$.
\end{corollary}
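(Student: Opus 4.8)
The plan is to reduce fooling $g(f_1, \ldots, f_k)$ to fooling the individual Fourier characters $\chi_S(f_1, \ldots, f_k) = \prod_{i \in S} (1 - 2 f_i) / \text{(sign normalization)}$, each of which is itself a product test, and then combine the errors via the triangle inequality weighted by $|\hg_S|$. More precisely, first I would pass from the $\B$-valued functions $f_i$ to $\pmone$-valued functions $F_i := 1 - 2 f_i$ (or $(-1)^{f_i}$), so that for any $S \subseteq [k]$ the product $\prod_{i \in S} F_i$ is a product test with outputs in $\pmone$, on at most $|S| \cdot m \le km$ input bits partitioned into the disjoint blocks $I_i$. Expanding $g$ in the $\pm 1$-Fourier basis over these $F_i$'s, write $g(f_1, \ldots, f_k) = \sum_{S \subseteq [k]} \hg_S \prod_{i \in S} F_i$, where the coefficients $\hg_S$ are exactly those in the statement (this is just a change of variables in the Fourier expansion of $g$, since $f_i \mapsto F_i$ is an affine bijection $\B \to \pmone$).

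The second step is to invoke the generator $G$ from Theorem~\ref{thm:prg-xor}, which fools the XOR of any $k$ Boolean functions on disjoint inputs of length $\le m$ with error $\eps$ and seed length $\tilde O(m + \log(n/\eps))$. For each fixed nonempty $S$, the function $\prod_{i \in S} F_i$ is precisely such an XOR (of the $|S| \le k$ Boolean functions $f_i$, $i \in S$), so
\[
  \Bigl| \E_{x}\bigl[\textstyle\prod_{i \in S} F_i(x_{I_i})\bigr] - \E_{s}\bigl[\textstyle\prod_{i \in S} F_i(G(s)_{I_i})\bigr] \Bigr| \le \eps .
\]
For $S = \emptyset$ the corresponding term is the constant $\hg_\emptyset$, which contributes zero error. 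Now subtract the two expansions and apply the triangle inequality:
\[
  \bigl| \E_x[g(f_1, \ldots, f_k)(x)] - \E_s[g(f_1, \ldots, f_k)(G(s))] \bigr|
  \le \sum_{S \neq \emptyset} |\hg_S| \cdot \eps = L_1[g] \cdot \eps ,
\]
which is exactly the claimed bound. The complex-valued case $g\colon \B^k \to \C_{\le 1}$ is handled identically, since the Fourier coefficients $\hg_S$ may be complex but the characters $\prod_{i \in S} F_i$ are still real-valued product tests fooled by $G$, and the triangle inequality over $\C$ still gives the $\sum |\hg_S|$ factor.

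The only genuine subtlety — and the one step I would be careful about — is that Theorem~\ref{thm:prg-xor} as stated fools a single XOR with error $\eps$, whereas here we need $G$ to fool \emph{all} $2^k$ products $\prod_{i\in S} F_i$ simultaneously with the same seed. This is automatic: $G$ is a fixed generator (it does not depend on which test it is fooling), so the single-test guarantee applies to each $S$ with the same $G$, and the union over $S$ costs nothing beyond the $L_1[g]$ factor already accounted for. Equivalently, one can observe directly that fooling every Fourier character of $g$ on the $f_i$'s to within $\eps$ forces fooling $g$ itself to within $L_1[g]\cdot\eps$, which is the standard ``$L_1$-closure'' argument. No new quantitative input beyond Theorem~\ref{thm:prg-xor} is needed, so the seed length is inherited verbatim as $\tilde O(m + \log(n/\eps))$.
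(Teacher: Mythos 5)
Your proposal is correct and matches the paper's own proof essentially verbatim: expand $g$ in its Fourier basis, observe that each nonempty character $\chi_S(f_1,\ldots,f_k) = \prod_{i\in S}(-1)^{f_i}$ is a $\pmone$-valued product test fooled by the generator of Theorem~\ref{thm:prg-xor} with error $\eps$, and conclude by the triangle inequality weighted by $\abs{\hg_S}$. The ``subtlety'' you flag at the end is indeed a non-issue for exactly the reason you give, and the paper does not even comment on it.
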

\begin{proof}
  Let $G$ be the generator in Theorem~\ref{thm:prg-xor}.
  Note that $\chi_S(f_1(x_{I_1}), \ldots, f_k(x_{I_k}))$ is a product test with outputs $\pmone$.
    So by Theorem~\ref{thm:prg-xor} we have
    \begin{align*}
      &\quad \bigl| \E[g(f_1(U_{I_1}), \ldots, f_k(U_{I_k})) - \E[g(f_1(G_{I_1}), \ldots, f_k(G_{I_k})] \bigr| \\
      &\le \sum_S \abs{\hg_S} \bigl| \E[\chi_S(f_1(U_{I_1}), \ldots, f_k(U_{I_k}))] - \E[\chi_S(f_1(G_{I_1}), \ldots, f_k(G_{I_k})] \bigr| \\
      &\le L_1[g] \cdot \eps . \qedhere
    \end{align*}
\end{proof}
Note that the AND function has $L_1[\mathrm{AND}] \le 1$, and so the generator in Corollary~\ref{cor:prg-L1} fools unordered combinatorial rectangles.
Previous generators for unordered combinatorial rectangles use almost-bounded independence or small-bias distributions, and have seed length $O(\log(n/\eps)) (1/\eps)$~\cite{ChariRS00,DeETT10}.

When the functions $f_i$ in the product tests have outputs $[-1,1]$, we also obtain the following generator.
\begin{restatable}{theorem}{prgprod} \label{thm:prg-prod}
There exists an explicit generator $G\colon \B^\ell \to \B^n$ that fools any product test with $k$ functions of input length $m$ with error $\eps$ and seed length $O(m + \log(k/\eps))\log(k/\eps)\allowbreak(\log m + \log\log n) = \tilde O(m + \log(k/\eps)) \log(k/\eps)$.
\end{restatable}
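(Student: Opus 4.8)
The plan is to derive Theorem~\ref{thm:prg-prod} from the Boolean case together with the Fourier bound of Theorem~\ref{thm:L1prod}. A quick reduction to Theorem~\ref{thm:prg-xor} already gives essentially the statement: I would first discretize each $f_i\colon\B^{I_i}\to[-1,1]$ to a function $\tilde f_i$ with values in $2^{1-t}\Z\cap[-1,1]$ for $t=O(\log(k/\eps))$, and then, for each fixed input $x$, write the dyadic value $\tilde f_i(x)$ as an average $\E_{r\sim\B^t}[h_i(x,r)]$ of a genuinely $\pmone$-valued function $h_i\colon\B^{I_i}\times\B^t\to\pmone$ (for instance $h_i(x,r)=1$ iff the integer encoded by $r$ is below $2^{t-1}(1+\tilde f_i(x))$). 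Then $\tilde f=\prod_i\tilde f_i=\E_{r_1,\dots,r_k}\bigl[\prod_i h_i(\cdot,r_i)\bigr]$, and for each fixed $(r_1,\dots,r_k)$ the inner product is a product test with $\pmone$ outputs having the same $k$, $m$ and blocks $I_i$. Since $\abs{f_i},\abs{\tilde f_i}\le1$, telescoping gives $\norm{f-\tilde f}_\infty\le k\cdot2^{1-t}\le\eps/2$, so running the generator of Theorem~\ref{thm:prg-xor} with error $\eps/2$ and averaging its guarantee over $(r_1,\dots,r_k)$ fools $f$ with error $\eps$; this already yields seed length $\tilde O(m+\log(n/\eps))$.

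To obtain the precise form stated, with only $\log\log n$ at the price of an extra $\log(k/\eps)$ factor, I would instead run the Ajtai--Wigderson iterated-restriction framework directly on the product test, using Theorem~\ref{thm:L1prod} in place of the Forbes--Kelley ingredient behind Theorem~\ref{thm:prg-xor}. The enabling structural fact is that fixing any coordinates of a product test to constants again yields a product test with the same $k$ and $m$ (Definition~\ref{def:prod}), so the class is closed under restriction and the recursion makes sense.

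Concretely, I would use $R=O(\log(k/\eps))$ rounds; in each round I sample a $\poly(\eps/k)$-biased, almost $O(\log(k/\eps))$-wise independent string of length $O\bigl((m+\log(k/\eps))(\log m+\log\log n)\bigr)$, use it both to pick a pseudorandom density-$\tfrac12$ subset $T$ of the live coordinates and to fix the coordinates outside $T$, and then recurse on $T$. The error of a round is controlled by a Fourier-analytic restriction lemma of the kind in~\cite{GopalanMRTV12,ChattopadhyayHRT18,ForbesK18}: such a pseudorandom restriction perturbs $\E[f]$ by at most, up to absolute constants, $\sum_{d\ge1}W_{1,d}[f]\,\gamma^d$ for a small parameter $\gamma$, which by Theorem~\ref{thm:L1prod} is bounded by $\sum_{d\ge1}\bigl(85\sqrt{m\ln(4ek)}\cdot\gamma\bigr)^d=O(\eps/R)$ as soon as $\gamma$ is small enough that $85\sqrt{m\ln(4ek)}\cdot\gamma<\tfrac12$. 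Multiplying the per-round seed by $R$ and summing the $R$ errors then gives total seed length $\tilde O(m+\log(k/\eps))\log(k/\eps)$ and error $\eps$.

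The hardest part is this restriction lemma together with the coupled choice of parameters: the density, the amount of independence used to pick $T$, and the base-case distribution must be chosen so that simultaneously $\gamma$ is small enough that $b\gamma<\tfrac12$ for $b=85\sqrt{m\ln(4ek)}$, the per-round seed stays $\tilde O(m+\log(k/\eps))$, and $O(\log(k/\eps))$ rounds suffice to trivialize the instance. It is precisely this $\sqrt{m\log k}$ Fourier bound for $[-1,1]$-valued $f_i$ --- weaker than the $O(m)$ bound of Theorem~\ref{thm:L1prod-Bool} for Boolean $f_i$, which Theorem~\ref{thm:prg-xor} exploits through Forbes--Kelley --- that forces the extra $\log(k/\eps)$ round factor relative to Theorem~\ref{thm:prg-xor}.
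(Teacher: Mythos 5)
Your first approach is correct as far as it goes: dyadically discretize each $f_i$ to precision $2^{1-t}$ with $t=O(\log(k/\eps))$, write $\tilde f_i(x)=\E_{r\sim\B^t}[h_i(x,r)]$ with $h_i$ taking values in $\pmone$, average the guarantee of Theorem~\ref{thm:prg-xor} over $(r_1,\dots,r_k)$, and absorb the $k\cdot 2^{1-t}\le\eps/2$ truncation error. This is a clean reduction that the paper does not record. But it proves a statement incomparable to Theorem~\ref{thm:prg-prod}: you inherit the seed length $O(m+\log(n/\eps))(\log m+\log\log(n/\eps))^2$ of Theorem~\ref{thm:prg-xor}, whereas Theorem~\ref{thm:prg-prod} achieves $\tilde O(m+\log(k/\eps))\log(k/\eps)$, and the latter is smaller exactly in the regime $\log(k/\eps)\ll\log n$ that the paper highlights as the reason this theorem is interesting.

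Your second approach has the right skeleton and is essentially what the paper does (Ajtai--Wigderson iteration driven by Theorem~\ref{thm:bipnfp}, which is Lemma~\ref{lemma:ForbesK18} combined with Theorem~\ref{thm:L1prod}), but there is a real gap in exactly the place you flag as ``the hardest part''. Running $R=O(\log(k/\eps))$ rounds of density-$\tfrac12$ pseudorandom restriction does \emph{not} trivialize the instance: after $R$ such rounds each block $I_i$ of size $\le m$ still retains about $m\cdot 2^{-R}=m\cdot\poly(\eps/k)$ live coordinates, which is unbounded as $m$ grows. The paper's Lemma~\ref{lemma:recursion} uses the maximum input length $m$ as the progress measure: one \emph{stage} of $t=O(\log(k/\eps))$ rounds drives the density down to $\poly(\eps/k)$ and, by a union bound over the $k$ blocks, halves the maximum block size with probability $1-\eps/2$; then $O(\log m)$ stages are required to reach input length $0$, so the total is $O(\log m\cdot\log(k/\eps))$ rounds (the extra $\log m$ factor disappears into the $\tilde O(\cdot)$). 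Without this outer loop you have no base case. Two smaller parameter issues: the almost-independence degree per round must be $d=O(m+\log(k/\eps))$, not $O(\log(k/\eps))$ --- the level-$(d+m)$ Forbes--Kelley decomposition and the $2^{-(d-m)/2}$ tail term in Theorem~\ref{thm:bipnfp} both require $d>m$ --- and the per-round error bound you should be invoking is $k\bigl(\sqrt{\delta}\,W_{1,\le d+m}[f]+2^{-(d-m)/2}\bigr)$ rather than $\sum_d W_{1,d}[f]\gamma^d$, which is the CHHL-as-blackbox bound that the paper explicitly avoids because it gives a worse seed length.

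Finally, your last sentence misattributes the extra $\log(k/\eps)$ factor to the weaker $\sqrt{m\log k}$ Fourier bound. The paper remarks that Theorem~\ref{thm:L1prod-Bool} would also suffice for Theorem~\ref{thm:prg-xor}; the real reason Theorem~\ref{thm:prg-xor} saves that factor is the MRT-style recursion built around Lemma~\ref{lemma:XOR-many}, which is only available for $\pmone$-valued $f_i$ and reduces the number of recursion steps from $O(\log m\cdot\log(k/\eps))$ down to $O(\log m)$.
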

When $m = o(\log n)$ and $k = 1/\eps = 2^{o(\sqrt{\log n})}$, Theorem~\ref{thm:prg-prod} gives a better seed length than Theorem~\ref{thm:prg-xor}.
Thus the generator in Theorem~\ref{thm:prg-prod} remains interesting for $f_i \in \pmone$ when a product test $f$ depends on very few variables and the error $\eps$ is not so small.

Previous best generator~\cite{LV-rop} has an extra $\tilde O(\log (1/\eps))$ in the seed length.
However, the generator in~\cite{LV-rop} works even when the $f_i$ have range $\C_{\le 1}$, which implies generators for several variants of product tests such as generalized halfspaces and combinatorial shapes.
(See~\cite{GopalanKM15} for the reductions.)

Finally, when the subsets $I_i$ of a product test are fixed and known in advanced, Gopalan, Kane and Meka~\cite{GopalanKM15} constructed a PRG of the same seed length as Theorem~\ref{thm:prg-xor}, but again their PRG works more generally for the range of $\C_{\le 1}$ instead of $\pmone$.

\paragraph{$\F_2$-polynomials.}
Chattopadhyay, Hatami, Lovett and Tal~\cite{ChattopadhyayHLT19} recently constructed a pseudorandom generator for any class of functions that are closed under restriction, provided there is an upper bound on the second level Fourier weight of the functions in $L_1$-norm.
They conjectured that every $n$-variate $\F_2$-polynomial $f$ of degree $d$ satisfies the bound $W_{1,2}[f] = O(d^2)$.
In particular, a bound of $n^{1/2-o(1)}$ would already imply a generator for polynomials of degree $d = \Omega(\log n)$, a major breakthrough in complexity theory.
Theorem~\ref{thm:L1prod} shows that their conjecture is true for the special case of {\em read-once} polynomials.
In fact, it shows that $W_{1,t}[f] = O(d^t)$ for every positive integer $t$.
Previous bound for read-once polynomials gives $W_{1,t}[f] = O(\log^4 n)^t$~\cite{ChattopadhyayHRT18}.

\paragraph{The coin problem.}

Let $X_{n,\eps} = (X_1, \ldots, X_n)$ be the distribution over $n$ bits, where the variables $X_i$ are independent and each $X_i$ equals $1$ with probability $(1 - \eps)/2$ and $0$ otherwise.
The {\em $\eps$-coin problem} asks whether a given function $f$ can distinguish between the distributions $X_{n,\eps}$ and $X_{n,0}$ with advantage $1/3$.

This central problem has wide range of applications in computational complexity and has been studied extensively for different restricted classes of tests, including bounded-depth circuits~\cite{Ajt83,Valiant84-Majority,AjB84,Amano09,ViolaBPvsE,ShV-dec,Aaronson10,Viola-rbd,CohenGR14}, space-bounded algorithms~\cite{BrodyV10,Steinberger13,CohenGR14}, bounded-depth circuits with parity gates~\cite{ShV-dec,KoppartyS18,RossmanS17,LimayeSSTV18}, $\F_2$-polynomials~\cite{LimayeSSTV18,ChattopadhyayHLT19} and product tests~\cite{LeeV-coin}.

It is known that if a function $f$ has bounded $L_1$ Fourier tail, then it implies a lower bound on the smallest $\eps^*$ of $\eps$ that $f$ can solve the $\eps$-coin problem.

\begin{fact} \label{fact:L1impliescoin}
  Let $f\colon \B^n \to \C_{\le 1}$ be any function.
  If for every integer $d \in \{0, \ldots, n\}$ we have $W_{1,d}[f] \le b^d$, then $f$ solves the $\eps$-coin problem with advantage at most $2 b \eps$.
\end{fact}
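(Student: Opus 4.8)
The plan is to expand $f$ in the Fourier basis and use the multiplicativity of the characters under product distributions. The key computation is the value of a character on the biased distribution: since the bits of $X_{n,\eps}$ are independent and each $X_i$ equals $0$ with probability $(1+\eps)/2$ and $1$ with probability $(1-\eps)/2$, we get $\E[(-1)^{X_i}] = \eps$, and hence $\E[\chi_S(X_{n,\eps})] = \prod_{i \in S} \E[(-1)^{X_i}] = \eps^{|S|}$. Plugging this into the Fourier expansion $f = \sum_S \hf_S \chi_S$ yields $\E[f(X_{n,\eps})] = \sum_S \hf_S \, \eps^{|S|}$, and taking $\eps = 0$ gives $\E[f(X_{n,0})] = \hf_\emptyset$.

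Subtracting the two expressions, the advantage of $f$ on the $\eps$-coin problem is
\[
  \bigl| \E[f(X_{n,\eps})] - \E[f(X_{n,0})] \bigr| = \Bigl| \sum_{S \neq \emptyset} \hf_S \, \eps^{|S|} \Bigr| \le \sum_{d=1}^n \eps^d \sum_{|S|=d} \abs{\hf_S} = \sum_{d=1}^n \eps^d \, W_{1,d}[f] \le \sum_{d=1}^n (b\eps)^d ,
\]
where the first inequality is the triangle inequality and the last uses the hypothesis $W_{1,d}[f] \le b^d$. If $b\eps \le 1/2$, this geometric sum is at most $\frac{b\eps}{1 - b\eps} \le 2b\eps$, which is the claimed bound. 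If $b\eps > 1/2$, then $2b\eps > 1$ and the inequality holds trivially, since $\abs{f} \le 1$ bounds the advantage.

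This argument is essentially a direct calculation, so I do not anticipate a genuine obstacle; the only points that require care are the biased-coin identity $\E[\chi_S(X_{n,\eps})] = \eps^{|S|}$ (which is exactly what makes $\eps$ behave as a per-level decay factor) and the trivial case split for summing the geometric series.
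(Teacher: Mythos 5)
Your proof is correct and follows essentially the same route as the paper: expand $f$ in the Fourier basis, use $\E[\chi_S(X_{n,\eps})]=\eps^{|S|}$, apply the triangle inequality level by level, and sum the geometric series under the harmless assumption $b\eps\le 1/2$. The only cosmetic difference is that the paper bounds the geometric series by $b\eps\sum_{d\ge 1}2^{-(d-1)}\le 2b\eps$ while you use the closed form $\frac{b\eps}{1-b\eps}\le 2b\eps$; these are interchangeable.
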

\begin{proof}
  We may assume $b \eps \le 1/2$, otherwise the result is trivial.
  Observe that we have $\E[\chi_S(X_{n,\eps})] = \eps^{\abs{S}}$ for every subset $S \subseteq [n]$.
  Thus,
  \begin{multline*}
    \bigl| \E[f(X_{n,\eps})] - \E[f(X_{n,0})] \bigr|
    = \Bigl| \sum_{S \ne \emptyset} \hf_S \E[X_{n,\eps}] \Bigr| \\
    \le \sum_{d=1}^n \sum_{\abs{S}=d} \abs{\hf_S} \cdot \eps^d 
    = \sum_{d=1}^n (b \eps)^d
    \le b\eps \cdot \sum_{d=1}^n 2^{-(d-1)} 
    \le 2 b\eps . \qedhere
  \end{multline*}
\end{proof}

Lee and Viola~\cite{LeeV-coin} showed that product tests with range $[-1,1]$ can solve the $\eps$-coin problem with $\eps^* = \Theta(1/\sqrt{m \log k})$.
Hence, Fact~\ref{fact:L1impliescoin} implies that Theorem~\ref{thm:L1prod} recovers their lower bound. 
Moreover, their upper bound implies that the dependence on $m$ and $k$ in Theorem~\ref{thm:L1prod} is tight up to constant factors when $d$ is constant.
Claim~\ref{claim:lowerbound} complements this by showing that the dependence on $d$ in Theorem~\ref{thm:L1prod} is also tight for some choice of $k$.

The work~\cite{LeeV-coin} also shows that when the range of the functions $f_i$ is $\C_{\le 1}$, the right answer for $\eps^*$ is $\Theta(1/\sqrt{mk})$.
Therefore, one cannot obtain for a better tail bound than the trivial bound of $(\sqrt{mk})^d$ when the range is $\C_{\le 1}$.

\subsection{Techniques}

We now explain how to obtain Theorems~\ref{thm:L1prod-Bool} and~\ref{thm:L1prod} and our pseudorandom generators for product tests (Theorems~\ref{thm:prg-xor} and~\ref{thm:prg-prod}).

\subsubsection{Fourier spectrum of product tests}
The high-level idea of proving Theorems~\ref{thm:L1prod-Bool} and~\ref{thm:L1prod} is inspired from~\cite{LeeV-coin}.
For intuition, let us first assume that the functions $f_i$ have outputs $\B$ and are all equal to $f_1$ (but defined on disjoint inputs).
It will also be useful to think of the number of functions $k$ being much larger than input length $m$ of each function.
We first explain how to bound above $W_{1,1}[f]$.
(Recall in Definition~\ref{def:Fourierweight} we defined $W_{q,d}[f]$ of a function $f$ to be $\sum_{\abs{S}=d} \abs{\hf_S}^q$.)

\paragraph{Bounding $W_{1,1}[f]$.}
Since the functions $f_i$ of a product test $f$ are defined on disjoint inputs, each Fourier coefficient of $f$ is a product of the coefficients of the $f_i$, and so each weight-$1$ coefficent of $f$ is a product of $k-1$ weight-$0$ and $1$ weight-$1$ coefficients of the $f_i$.
From this, we can see that $W_{1,1}[f]$ is equal to 
\[
  \binom{k}{1} \cdot W_{1,1}[f_1] \cdot W_{1,0}[f_1]^{k-1} = k \cdot W_{1,1}[f_1] \cdot \E[f_1]^{k-1} \numberthis \label{eqn:intro1}.
\]
Because of the term $\E[f_1]^{k-1}$, to maximize $W_{1,1}[f]$ it is natural to consider taking $f_1$ to be a function with expectation $\E[f_1]$ as close to $1$ as possible, i.e.\ the OR function.
In such case, one would hope for a better bound on $W_{1,1}[f_1]$.
Indeed, Chang's inequality~\cite{Chang02} (see also~\cite{ImpagliazzoMR14} for a simple proof) says that for a $[0,1]$-valued function $g$ with expectation $\alpha \le 1/2$, we have 
\[
  W_{2,1}[g] \le 2 \alpha^2 \ln(1/\alpha) .
\]
(The condition $\alpha \le 1/2$ is without loss of generality as one can instead consider $1 - g$.)
It follows by a simple application of the Cauchy--Schwarz inequality that $W_{1,1}[g] \le O(\sqrt{n})\nolinebreak\cdot\nolinebreak\alpha \sqrt{\ln(1/\alpha)}$ (see Fact~\ref{fact:L1fromL2} below for a proof).
Moreover, when the functions $f_i$ are Boolean, we have $2^{-m} \le \E[f_i] \le 1 - 2^{-m}$, and so $\sqrt{\ln(1/\alpha)} \le \sqrt{m}$.
Plugging these bounds into Equation~\eqref{eqn:intro1}, we obtain a bound of $O(m) \cdot k (1 - \E[f_1]) \E[f_1]^{k-1}$.
So indeed $\E[f_1]$ should be roughly $1 - 1/k$ in order to maximize $W_{1,1}[f]$, giving an upper bound of $O(m)$.
For the case where the $f_i$ can be different, a simple convexity argument shows that $W_{1,1}[f]$ is maximized when the functions $f_i$ have the same expectation.

\paragraph{Bounding $W_{1,d}[f]$ for $d > 1$.}
To extend this argument to $d > 1$, one has to generalize Chang's inequality to bound above $W_{2,d}[g]$ for $d > 1$.
The case $d=2$ was already proved by Talagrand~\cite{Talagrand96}.
Following Talagrand's argument in~\cite{Talagrand96} and inspired by the work of Keller and Kindler~\cite{KellerK13}, which proved a similar bound in terms of a different measure than $\E[g]$, we prove the following bound on $W_{2,d}[g]$ in terms of its expectation.

\begin{restatable}{lemma}{levelk}\label{lemma:level-k-inequalities}
  Let $g\colon \B^n \to [0,1]$ be any function.
  For every positive integer $d$, we have
  \[
    W_{2,d}[g] \le 4 \E[g]^2 \bigl( 2e \ln(e/\E[g]^{1/d}) \bigr)^d .
  \]
\end{restatable}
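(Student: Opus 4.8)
The plan is to prove Lemma~\ref{lemma:level-k-inequalities} by hypercontractivity, in the spirit of Talagrand's argument for $d=2$. Write $\alpha := \E[g]$; we may assume $0 < \alpha < 1$, since if $\alpha \in \{0,1\}$ then $g$ is constant and $W_{2,d}[g] = 0$. For any noise rate $\rho \in (0,1]$, the $(2,\,1+\rho^2)$-hypercontractive inequality gives $\norm{T_\rho g}_2 \le \norm{g}_{1+\rho^2}$. On the one hand, by Parseval, $\norm{T_\rho g}_2^2 = \sum_{S} \hg_S^2 \rho^{2\abs{S}} \ge \rho^{2d}\, W_{2,d}[g]$. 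On the other hand, since $0 \le g \le 1$ we have $g^{1+\rho^2} \le g$ pointwise, so $\norm{g}_{1+\rho^2}^2 = \E[g^{1+\rho^2}]^{2/(1+\rho^2)} \le \E[g]^{2/(1+\rho^2)} = \alpha^{2/(1+\rho^2)}$. Combining these,
\[
  W_{2,d}[g] \le \rho^{-2d}\, \alpha^{2/(1+\rho^2)} .
\]

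The remaining step is to choose $\rho$ so as to balance the two factors. Setting $u := 1 + \ln(1/\alpha)/d = \ln(e/\alpha^{1/d}) \ge 1$ and $\rho^2 := 1/(2u) \in (0, \tfrac12]$ (a legal noise rate), we get $\rho^{-2d} = (2u)^d$ and $2/(1+\rho^2) = 4u/(2u+1) = 2 - 2/(2u+1)$, hence
\[
  \alpha^{2/(1+\rho^2)} = \alpha^2 \cdot \exp\!\Bigl( \tfrac{2}{2u+1}\ln(1/\alpha) \Bigr) = \alpha^2 \cdot \exp\!\Bigl( \tfrac{2d(u-1)}{2u+1} \Bigr) \le \alpha^2 e^{d},
\]
where the last inequality uses $2(u-1) \le 2u+1$ and $\ln(1/\alpha) = d(u-1)$. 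Therefore $W_{2,d}[g] \le (2u)^d \alpha^2 e^d = \alpha^2 \bigl( 2e\ln(e/\alpha^{1/d}) \bigr)^d$, which is even slightly stronger than the claimed bound; the factor $4$ in the statement leaves room to spare (for instance, to absorb a weaker hypercontractivity constant if one prefers to cite it as a black box).

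I expect the only genuinely delicate point to be the choice of $\rho$. A crude choice such as $\rho^2 = d/\ln(1/\alpha)$ (valid when $\ln(1/\alpha) \ge d$) yields only $W_{2,d}[g] \le \alpha^2\bigl(e^2 \ln(1/\alpha)/d\bigr)^d$, which loses a factor of order $(e/2)^d$ and misses the constant $2e$; one really wants $\rho^2 \asymp 1/\bigl(1 + \ln(1/\alpha)/d\bigr)$, so that $\rho^{-2d}$ and $\alpha^{2/(1+\rho^2)}$ are balanced up to a constant per level. Since the argument is uniform in $d$, it handles all positive integers $d$ at once, and no extra case analysis (e.g.\ small versus large $\alpha$) is needed.
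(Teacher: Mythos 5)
Your proof is correct, and it takes a genuinely different route from the paper. The paper proceeds via a duality/projection argument \`a la Talagrand: take $f$ to be the level-$d$ part of $g$ renormalized so that $\norm{f}_2 = 1$, note $\E[g f] = W_{2,d}[g]^{1/2}$, and then bound $\E[g\,\abs{f}]$ by splitting $\abs{f} = \int_0^\infty \1(\abs{f}\ge t)\,dt$ and using the degree-$d$ concentration inequality of Dinur--Friedgut--Kindler--O'Donnell together with an incomplete-Gamma integral estimate (Claims~\ref{claim:Talagrand} and~\ref{claim:incomplete-gamma}). Your argument instead applies $(2,1+\rho^2)$-hypercontractivity \emph{directly} to $g$, combined with the elementary observation $g^{1+\rho^2}\le g$ pointwise, and then optimizes over $\rho$. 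This bypasses the degree-$d$ tail bound (which is itself proved by hypercontractivity and Markov on moments) and the integral estimate, so it is shorter and more elementary; it also comes out a factor of $4$ sharper than the stated bound. The one place where care is needed --- your choice $\rho^2 = 1/(2u)$ with $u = \ln(e/\E[g]^{1/d})$ --- is handled correctly: $u\ge 1$ makes $\rho$ a legal noise rate, $\rho^{-2d}=(2u)^d$, and the exponent calculation $2/(1+\rho^2)=2-2/(2u+1)$ together with $2(u-1)\le 2u+1$ gives the bound $\alpha^{2/(1+\rho^2)}\le \alpha^2 e^d$. All steps are sound.
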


We note that the exponent $1/d$ of $\E[g]$ either did not appear in previous upper bounds (mentioned without proof in~\cite{ImpagliazzoMR14}), or only holds for restricted values of $d$~\cite{ODonnell14}.
This exponent is not important for proving Theorem~\ref{thm:L1prod-Bool}
, but will be crucial in the proof of Theorem~\ref{thm:L1prod}, which we will explain later on.

For $d > 1$, the expression for $W_{1,d}[f]$ becomes much more complicated than $W_{1,1}[f]$, as it involves $W_{1,z}[f_1]$ for different values of $z \in [m]$.
So one has to formulate the expression of $W_{1,d}[f]$ carefully.
(See Lemma~\ref{lemma:L1prod}.)
Once we have obtained the right expression for $W_{1,d}[f]$, the proof of Theorem~\ref{thm:L1prod-Bool} follows the outline above by replacing Chang's inequality with Lemma~\ref{lemma:level-k-inequalities}.
One can then handle functions $f_i$ with outputs $\pmone$ by considering the translation $f_i \mapsto (1-f_i)/2$, which only changes each $W_{1,d}[f_i]$ (for $d > 0$) by a factor of $2$.
We remark that Theorem~\ref{thm:L1prod-Bool} is sufficient for constructing the generator in Theorem~\ref{thm:prg-xor}.

\paragraph{Handling $[-1,1]$-valued $f_i$.}
Extending this argument to proving Theorem~\ref{thm:L1prod} poses several challenges.
Following the outline above, after plugging in Lemma~\ref{lemma:level-k-inequalities}, 
we would like to show that $\E[f_1]$ should be roughly $1 - 1/k$ to maximize $W_{1,d}[f]$.
However, it is no longer clear why this is the case even assuming the maximum is attained by functions $f_i$ with the same expectation, as we now do not have the bound $\sqrt{\ln(1/\alpha)} \le \sqrt{m}$, and so it cannot be used to simplify the expression of $W_{1,d}[f]$ as before.
In fact, the above assumption is simply false if we plug in the upper bound in Lemma~\ref{lemma:level-k-inequalities} with the exponent $1/d$ omitted to the $W_{1,z_i}[f_i]$.

Using Lemma~\ref{lemma:level-k-inequalities} and the symmetry of the expression for $W_{1,d}[f]$, we reduce the problem of bounding above $W_{1,d}[f]$ with different $f_i$ to bounding the same quantity but with the additional assumption that the $f_i$ have the same expectation $\E[f_1]$.
This uses Schur-convexity (see Section~\ref{sec:L1prod} for its definition).
Then by another convexity argument we show that the maximum is attained when $\E[f_1]$ is roughly equal to $1 - d/k$.
Both of these arguments critically rely on the aforementioned exponent of $1/d$ in Lemma~\ref{lemma:level-k-inequalities}.

%
\subsubsection{Pseudorandom generators}
We now discuss how to use Theorems~\ref{thm:L1prod-Bool} and~\ref{thm:L1prod} to construct our pseudorandom generators for product tests.
Our construction follows the Ajtai--Wigderson framework~\cite{AjtaiW89} that was recently revived and refined by Gopalan, Meka, Reingold, Trevisan and Vadhan~\cite{GopalanMRTV12}.

The high-level idea of this framework involves two steps.
For the first step, we show that {\em derandomized bounded independence plus noise} fools $f$.
More precisely, we will show that if we start with a small-bias or almost-bounded independent distribution $D$ (``bounded independence''), and select roughly half of $D$'s positions $T$ pseudorandomly and set them to uniform $U$ (``plus noise''),
then this distribution, denoted by $D + T \wedge U$, fools product tests.

Forbes and Kelley~\cite{ForbesK18} recently improved the analysis in~\cite{HLV-bipnfp} and implicitly showed that $\delta$-almost $d$-wise independent plus noise fools product tests, where $d = O(m + \log(k/\eps))$ and $\delta = n^{-\Omega(d)}$.
Using Theorem~\ref{thm:L1prod}, we improved the dependence on $\delta$ to $(m \ln k)^{-\Omega(d)}$ and obtain the following theorem.

\begin{restatable}{theorem}{bipnfp} \label{thm:bipnfp}
  Let $f\colon \B^n \to [-1,1]$ be a product test with $k$ functions of input length $m$.
  Let $d$ be a positive integer.
  Let $D$ and $T$ be two independent $\delta$-almost $d$-wise independent distributions over $\B^n$, and $U$ be the uniform distribution over $\B^n$.
  Then
  \[
    \bigl| \E[f(D + T \wedge U)] - \E[f(U)] \bigr| \le k \cdot \bigl( \sqrt{\delta} \cdot (170 \cdot \sqrt{m \ln(ek)})^d + 2^{-(d-m)/2} \bigr) ,
  \]
  where ``$+$'' and ``$\wedge$'' are bit-wise XOR and AND respectively.
\end{restatable}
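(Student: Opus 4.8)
The plan is to reduce the statement to a Fourier-analytic estimate that feeds Theorem~\ref{thm:L1prod} into the standard "bounded independence plus noise" machinery. First I would expand $f = \sum_S \hf_S \chi_S$ and analyze the behavior of each character under the distribution $D + T \wedge U$. Fixing the "noise selector" $T$ to a set $T \subseteq [n]$, the bits in $T$ are set to uniform, so $\chi_S(D + T \wedge U)$ has expectation zero over $U$ unless $S \cap T = \emptyset$, in which case it equals $\chi_S(D)$. Hence $\E_U[f(D + T\wedge U) \mid T] = \sum_{S : S \cap T = \emptyset} \hf_S \chi_S(D)$, and averaging over $T$ and $D$ gives $\E[f(D+T\wedge U)] - \E[f(U)] = \sum_{S \neq \emptyset} \hf_S \cdot \Pr_T[S \cap T = \emptyset] \cdot \E_D[\chi_S(D)]$. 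The two error sources are now visible: for small $|S|$ the "miss probability" $\Pr_T[S \cap T = \emptyset]$ is close to $2^{-|S|}$ but not exactly, controlled by $\delta$-almost $d$-wise independence of $T$; for large $|S|$ the $\E_D[\chi_S(D)]$ term is controlled by $\delta$-almost $d$-wise independence of $D$.

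Next I would split the sum at level $d$. For $|S| = \ell \le d$: $T$ being $\delta$-almost $\ell$-wise independent gives $|\Pr_T[S \cap T = \emptyset] - 2^{-\ell}| \le \delta$, and $|\E_D[\chi_S(D)]| \le 2^{-\ell} + \delta$ likewise; so each such character contributes at most $|\hf_S|\cdot(2^{-\ell}+\delta)(2^{-\ell}+\delta) \le |\hf_S| \cdot O(4^{-\ell} + \delta)$. Summing over the level and over $\ell \le d$, and invoking Theorem~\ref{thm:L1prod} ($W_{1,\ell}[f] \le (85\sqrt{m\ln(4ek)})^\ell$), the dominant $\delta$-term is $\sum_{\ell \le d} \delta \cdot (85\sqrt{m\ln(4ek)})^\ell = O(\delta) \cdot (85\sqrt{m\ln(4ek)})^d$ up to the geometric-series factor, which after absorbing constants is at most $\sqrt{\delta}\,(170\sqrt{m\ln(ek)})^d$ — here the $\sqrt{\delta}$ (rather than $\delta$) and the slack in the constant from $85$ to $170$ and from $\ln(4ek)$ to $\ln(ek)$ give room to absorb the $2^{-\ell}$-weighted "exact" part and the geometric factors. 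For $|S| > d$: I would bound $\Pr_T[S\cap T=\emptyset] \le \Pr_{T'}[\text{some fixed } d\text{ coordinates of } S \text{ all miss}] \le 2^{-d} + \delta$, and separately note $f$ is a product over $k$ blocks, so I can peel off block by block. Concretely, writing $f = \prod_i f_i$ and using that each $f_i$ depends on $\le m$ bits, the high-degree contribution is handled by a hybrid/union-bound over the $k$ blocks: replacing the input inside one block at a time, the error per block from the "$|S\cap I_i| > $ (something)" tail is at most $2^{-(d-m)/2}$ because within a single $m$-bit block, a $\delta$-almost $d$-wise independent distribution looks uniform on any $\le d-m$... — more carefully, one uses that conditioned on the other blocks, the relevant test on block $i$ has its high-degree weight bounded, and the $d$-wise independence with $d > m$ means $D$ restricted to the $\le m$ bits of $I_i$ together with any $d - m$ further bits is $\delta$-close to uniform, yielding the $2^{-(d-m)/2}$ per block. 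Multiplying by the $k$ blocks gives the $k \cdot 2^{-(d-m)/2}$ term.

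The cleanest route for the high-degree part is probably to avoid a naive Fourier tail sum (which does not converge nicely past level $n$) and instead argue blockwise: by the product structure, $\E[f(D+T\wedge U)]$ factors conditionally, and I would induct on the number of blocks, at each step swapping the true distribution on block $i$ for a fresh copy of (bounded independence plus noise) restricted to that block, paying the single-block error each time. The single-block error combines (i) the low-degree part of $f_i$, handled exactly as above since $f_i$ is itself a (trivial, $k=1$) product test with $W_{1,\ell}[f_i] \le (2m)^\ell$ or so, and (ii) a $2^{-(d-m)/2}$ term coming from the fact that $d$-wise independence with $d \ge m$ makes $D$ indistinguishable from uniform on the $m$ bits of the block, up to the $\delta$ and the $2^{-(d-m)}$ slack from the not-quite-independence of the selector $T$ on those bits. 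Summing $k$ such single-block errors gives the stated bound.

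The main obstacle I expect is getting the high-degree ($|S| > d$) contribution down to $k \cdot 2^{-(d-m)/2}$ cleanly: a direct Fourier-weight argument there is awkward because Theorem~\ref{thm:L1prod} grows exponentially in the level and there is no decaying factor once $|S|$ exceeds $d$ (the noise "miss probability" stops shrinking and the $\delta$-almost independence of $D$ gives only $2^{-d}+\delta$, not $2^{-|S|}$). The blockwise hybrid argument sidesteps this, but making it rigorous requires carefully tracking that when we fix all blocks except block $i$, the residual function on block $i$ is still bounded in $[-1,1]$ (true, since it is a product of $[-1,1]$ values) and that the "plus noise" operation on the $n$ bits restricts correctly to the $\le m$ bits of $I_i$ — i.e.\ that $(D + T\wedge U)$ restricted to $I_i$ is distributed as $(D|_{I_i}) + (T|_{I_i})\wedge(U|_{I_i})$ with $D|_{I_i}, T|_{I_i}$ still $\delta$-almost $d$-wise (hence, since $|I_i|\le m \le d$, $\delta$-close to uniform on all of $I_i$). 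That last point is where the $2^{-(d-m)/2}$ really comes from, via a standard conversion from $\delta$-almost $d$-wise independence to statistical distance, with $\delta$ taken small enough (as in the regime $\delta = (m\ln k)^{-\Omega(d)}$ mentioned before the theorem) that the $\sqrt\delta$-term absorbs any leftover.
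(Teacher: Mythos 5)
Your opening computation is the right starting point and does correspond to the paper's decomposition. (One small imprecision: for $|S|=\ell \le d$ you wrote $|\E_D[\chi_S(D)]| \le 2^{-\ell}+\delta$, but for a $\delta$-almost $d$-wise independent distribution the right bound is simply $|\E_D[\chi_S(D)]| \le \delta$, since the uniform distribution has $\E[\chi_S]=0$; this only helps you.) Feeding Theorem~\ref{thm:L1prod} into the low-degree ($\abs{S}\le d$) sum is exactly what the paper does.

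The genuine gap is the high-degree part, and your own diagnosis of the difficulty is accurate, but the proposed fix does not hold up. The ``blockwise hybrid'' you sketch --- replacing the distribution on one block at a time --- runs into the problem that $D$ and $T$ are only $\delta$-almost $d$-wise independent \emph{jointly over all $n$ coordinates}, not as a product across the blocks $I_1,\dots,I_k$. Once you condition on the realization of blocks $1,\dots,i$, which occupy up to $im$ bits, the conditional distribution of $D$ (and of $T$) on block $i+1$ is no longer controlled by $d$-wise independence as soon as $im > d$; so the ``single-block error'' you want to pay does not have the form you claim. Relatedly, the mechanism you invoke --- ``$D$ restricted to the $m$ bits of $I_i$ plus $d-m$ further bits is $\delta$-close to uniform, giving $2^{-(d-m)/2}$'' --- is not where that term comes from. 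The $2^{-(d-m)/2}$ is a \emph{noise} phenomenon: it comes from the operation $T\wedge U$ annihilating Fourier mass at levels $\ge d-m$ (each live coordinate is retained in $T$ with probability $\approx 1/2$), not from $D$ being close to uniform.

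What the paper actually does (Lemma~\ref{lemma:ForbesK18}) is a \emph{function} decomposition rather than a distributional hybrid. Write $f = \hat f_\emptyset + L + \sum_{i=1}^k H_i\, f^{>i}$, where $L$ collects all Fourier terms of $f$ of degree strictly below $d$, $f^{>i} = \prod_{j>i} f_j$, and $H_i$ collects those Fourier terms of $f^{\le i}=\prod_{j\le i}f_j$ whose $d$th ``$1$''-coordinate falls in block $i$. The crucial structural facts are: (i) $L$ is low-degree and handled exactly as in your first paragraph; (ii) each $H_i$ has degree between $d$ and $d+m-1$, so the noise term in $T\wedge U$ shrinks it by $2^{-\Theta(d)}$; (iii) $H_i$ depends only on blocks $1,\dots,i$ while $f^{>i}$ is a bounded function of blocks $i+1,\dots,k$, which is exactly the hook that lets a Forbes--Kelley-style argument (via Cauchy--Schwarz, producing the $\sqrt{\delta}$) close the bound per $i$, summing to the factor $k$. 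This one-shot decomposition is what sidesteps the cross-block correlation problem your hybrid cannot. If you want to salvage your approach, the right move is to replace the hybrid over distributions with this hybrid over the \emph{index where the $d$th $1$ appears}, and then invoke (or re-prove) the single-block Forbes--Kelley bound for each $H_i f^{>i}$.
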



The second step of the Ajtai--Wigderson framework builds a pseudorandom generator by applying the first step (Theorem~\ref{thm:bipnfp}) recursively.
Let $f\colon \B^n \to \B$ be a product test with $k$ functions of input length $m$.
As product tests are closed under restrictions (and shifts), 
after applying Theorem~\ref{thm:bipnfp} to $f$ and fixing $D$ and $T$ in the theorem, the function $f_{D,T}\colon \B^T \to \B$ defined by $f_{D,T}(y) := f(D + T \wedge y)$ is also a product test.
Thus one can apply Theorem~\ref{thm:bipnfp} to $f_{D,T}$ again and repeat the argument recursively.
We will use different progress measures to bound above the number of recursion steps in our constructions.
We first describe the recursion in Theorem~\ref{thm:prg-prod} as it is simpler.

\paragraph{Fooling $[-1,1]$-valued product tests.}
Here our progress measure is the maximum input length $m$ of the functions $f_i$.
We show that after $O(\log(k/\eps))$ steps of the recursion, the functions $f_i$ of the restricted product test have their input length halved with high probability.
Therefore, repeating above for $O(\log m)$ steps, the product test is restricted to a constant function.
This simple recursion gives our second PRG (Theorem~\ref{thm:prg-prod}).

\paragraph{Fooling Boolean-valued product tests.}
Our construction of the first generator (Theorem~\ref{thm:prg-xor}) is more complicated and uses two progress measures.
The first one is again the maximum input length $m$ of the functions $f_i$, and the second is the number $k$ of the functions $f_i$.
We reduce the number of recursion steps from $O(\log (k/\eps)) \log m$ to $O(\log m)$. 
This requires a more delicate construction and analysis that are similar to the recent work of Meka, Reingold and Tal~\cite{MekaRT18}, which constructed a pseudorandom generator against XOR of disjoint constant-width read-once branching programs.
There are two main ideas in their construction.
First, they ensure $k \le 16^m$ in each step of the recursion, by constructing another PRG to fool the test $f$ for the case $k \ge 16^m$.
We will also use this PRG in our construction.
Next, throughout the recursion they allow one ``bad'' function $f_i$ of the product test $f$ to have a longer input length than $m$, but not longer than $O(\log(n/\eps))$.
Using these two ideas, they show that whenever $m \ge \log\log n$ during the recursion, then after $O(1)$ steps of the recursion all but the ``bad'' $f_i$ have their input length restricted by a half, while the ``bad'' $f_i$ always has length $O(\log (n/\eps))$.
This allows us to repeat $O(\log m)$ steps until we are left with a product test of $k' \le \polylog(n)$ functions, where all but one of the $f_i$ have input length at most $m' = O(\log \log n)$.

Now we switch our progress measure to the number of functions.
This part is different from~\cite{MekaRT18}, in which their construction relies on the fact that the $f_i$ are computable by read-once branching programs.
Here because our functions $f_i$ are arbitrary, by grouping $c$ functions as one, we can instead think of the parameters $k'$ and $m'$ in the product test as $k''=k'/c$ and $m''=cm'$, respectively.
Choosing $c$ to be $O(\log n/\log\log n)$, we have $m'' = O(\log n)$ and so we can repeat the previous argument again.
Because each time $k'$ is reduced by a factor of $c$, after repeating this for $O(1)$ steps, we are left with a product test defined on $O(\log n)$ bits, which can be fooled using a small-bias distribution.
This gives our first generator (Theorem~\ref{thm:prg-xor}).

\paragraph{Organization}
In Section~\ref{sec:L1prod} we prove Theorems~\ref{thm:L1prod-Bool} and \ref{thm:L1prod}.
In Section~\ref{sec:prg} we construct our pseudorandom generators for product tests, proving Theorems~\ref{thm:prg-xor} and~\ref{thm:prg-prod}.
In Section~\ref{sec:levelk} we prove Lemma~\ref{lemma:level-k-inequalities}, which is used in the proof of Theorem~\ref{thm:L1prod}.

\section{Fourier spectrum of product tests} \label{sec:L1prod}

In this section we prove Theorems~\ref{thm:L1prod-Bool} and~\ref{thm:L1prod}.
We first restate the theorems.

\loneprodbool*

\loneprod*

Both theorems rely on the following lemma which gives an upper bound on $W_{2,d}[g]$ in terms of the expectation of a $[0,1]$-valued function $g$.
The case $d=1$ is known as Chang's inequality~\cite{Chang02}.
(See also~\cite{ImpagliazzoMR14} for a simple proof.)
This was then generalized by Talagrand to $d=2$~\cite{Talagrand96}.
Using a similar argument to~\cite{Talagrand96}, we extend this to $d > 2$.

\levelk*

We defer its proof to Section~\ref{sec:levelk}.
We remark that a similar upper bound was proved by Keller and Kindler~\cite{KellerK13}.
However, the upper bound in~\cite{KellerK13} was proved in terms of $\sum_{i=1}^n I_i[g]^2$, where $I_i[g]$ is the influence of the $i$th coordinate on $g$, instead of $\E[g]$.
A similar upper bound in terms of $\E[g]$ can be found in~\cite{ODonnell14} under the extra condition $d \le 2 \ln (1/\E[g])$.

We will also use the following well-known fact that bounds above $W_{1,d}[f]$ in terms of $W_{2,d}[f]$.

\begin{fact} \label{fact:L1fromL2}
  Let $f\colon \B^n \to \R$ be any function.
	We have $W_{1,d}[f] \le n^{d/2} \sqrt{W_{2,d}[f]}$.
\end{fact}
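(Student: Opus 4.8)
The plan is simply to apply the Cauchy--Schwarz inequality to the sum defining $W_{1,d}[f]$. Writing each term as $\abs{\hf_S}\cdot 1$ and splitting the sum over all size-$d$ subsets accordingly, Cauchy--Schwarz gives
\[
  W_{1,d}[f] \;=\; \sum_{\abs{S}=d} \abs{\hf_S}\cdot 1 \;\le\; \Bigl(\sum_{\abs{S}=d} \hf_S^2\Bigr)^{1/2}\Bigl(\sum_{\abs{S}=d} 1\Bigr)^{1/2} \;=\; \sqrt{W_{2,d}[f]}\cdot \sqrt{\binom{n}{d}} .
\]
It then only remains to observe that there are exactly $\binom{n}{d}$ subsets of $[n]$ of size $d$, and that $\binom{n}{d} \le n^d$, so $\sqrt{\binom{n}{d}} \le n^{d/2}$. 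Combining this with the displayed inequality yields the claimed bound $W_{1,d}[f] \le n^{d/2}\sqrt{W_{2,d}[f]}$.

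There is essentially no obstacle here: the statement is just the standard comparison between the level-$d$ Fourier weights in $L_1$- and $L_2$-norm, with the factor $n^{d/2}$ being the square root of the number of level-$d$ characters. One could instead use the slightly sharper estimate $\binom{n}{d} \le (en/d)^d$, but the cruder bound $n^{d/2}$ is all that is needed for the applications in this paper (e.g.\ when combining with Lemma~\ref{lemma:level-k-inequalities} to prove Theorems~\ref{thm:L1prod-Bool} and~\ref{thm:L1prod}), so no refinement is carried out.
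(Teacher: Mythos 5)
Your proof is correct and matches the paper's argument exactly: both apply Cauchy--Schwarz to the sum $\sum_{|S|=d}|\hf_S|\cdot 1$, obtain the factor $\sqrt{\binom{n}{d}}$, and then bound $\binom{n}{d}\le n^d$. Nothing to add.
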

\begin{proof}
  By the Cauchy--Schwarz inequality, 
  \[
    W_{1,d}[f]
    = \sum_{\abs{S}=d} \abs{\hf_S}
    \le \sqrt{\binom{n}{d} \sum_{\abs{S}=d} \hf_S^2}
    \le n^{d/2} \sqrt{W_{2,d}[f]} . \qedhere
  \]
\end{proof}

\begin{lemma} \label{lemma:L1prod}
  Let $f\colon \B^n \to [-1,1]$ be a product test of $k$ functions $f_1, \ldots, f_k$ with input length $m$, and $\alpha_i := (1 - \E[f_i])/2$ for every $i \in [k]$.
  Let $d$ be a positive integer.
  We have
  \[
    W_{1,d}[f]
    \le \bigl(\sqrt{32e^3 m} \bigr)^d g(\alpha_1, \ldots, \alpha_k) ,
  \]
  where the function $g\colon (0,1]^k \to \R$ is defined by
  \[
    g(\alpha_1, \ldots, \alpha_k)
    := e^{-2 \sum_{i=1}^k \alpha_i} \sum_{\ell=1}^d \sum_{\substack{S \subseteq [k] \\ \abs{S}=\ell}} \sum_{\substack{z \in [m]^S \\ \sum_i z_i = d}} \prod_{i \in S} \Bigl( \alpha_i \bigr( \ln \bigl( e/\alpha_i^{1/z_i} \bigr) \bigr)^{z_i/2} \Bigr) .
  \]
\end{lemma}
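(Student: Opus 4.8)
The plan is to use the disjointness of the sets $I_i$ to write $W_{1,d}[f]$ as a weighted sum of products of the quantities $W_{1,z}[f_i]$, to bound each such factor by combining Fact~\ref{fact:L1fromL2} with Lemma~\ref{lemma:level-k-inequalities}, and then to reassemble these bounds into the stated expression for $g$. Before starting I would reduce to the case $\alpha_i\in(0,1/2]$ for every $i$: replacing $f_i$ by $-f_i$ only flips signs of the Fourier coefficients of $f$ and hence leaves $W_{1,d}[f]$ unchanged, so we may assume $\E[f_i]\ge0$; and a factor with $f_i\equiv1$ (that is, $\alpha_i=0$) can be dropped without affecting any Fourier weight of $f$.

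First I would record the factorization. Since the $I_i$ are pairwise disjoint, $\hat f_S=\prod_{i=1}^k\widehat{f_i}_{S\cap I_i}$ for every $S\subseteq[n]$, and $\abs{S\cap I_i}\le\abs{I_i}\le m$. Writing $z_i:=\abs{S\cap I_i}$ and grouping the sum defining $W_{1,d}[f]$ according to the tuple $z=(z_1,\dots,z_k)$, this gives
\[
  W_{1,d}[f]=\sum_{\substack{z\in\{0,1,\dots,m\}^k\\ z_1+\dots+z_k=d}}\ \prod_{i=1}^k W_{1,z_i}[f_i].
\]

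Next I would bound each factor $W_{1,z_i}[f_i]$. For $i$ with $z_i=0$ we have $W_{1,0}[f_i]=\abs{\E[f_i]}=1-2\alpha_i\le e^{-2\alpha_i}$. For $i$ with $z_i\ge1$, set $g_i:=(1-f_i)/2$, which is $[0,1]$-valued on $\le m$ bits with $\E[g_i]=\alpha_i$ and $\widehat{g_i}_S=-\tfrac12\widehat{f_i}_S$ for $S\ne\emptyset$; then Fact~\ref{fact:L1fromL2} followed by Lemma~\ref{lemma:level-k-inequalities} yields
\[
  W_{1,z_i}[f_i]=2\,W_{1,z_i}[g_i]\le 2m^{z_i/2}\sqrt{W_{2,z_i}[g_i]}\le 4\alpha_i(2em)^{z_i/2}\bigl(\ln(e/\alpha_i^{1/z_i})\bigr)^{z_i/2}.
\]
Since $z_i\ge1$ and $\alpha_i\le1/2$ we have $4\le(4e)^{z_i}e^{-1}\le(16e^2)^{z_i/2}e^{-2\alpha_i}$, which turns the last bound into
\[
  W_{1,z_i}[f_i]\le\bigl(\sqrt{32e^3m}\bigr)^{z_i}e^{-2\alpha_i}\alpha_i\bigl(\ln(e/\alpha_i^{1/z_i})\bigr)^{z_i/2}.
\]

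Finally I would reassemble the estimate. Fix $z$ with $\sum_iz_i=d$, let $S=\{i:z_i\ge1\}$ and $\ell=\abs{S}$ (so $1\le\ell\le d$), and multiply the two per-factor bounds over all $i$: the $e^{-2\alpha_i}$ factors multiply to $e^{-2\sum_i\alpha_i}$ and the $\bigl(\sqrt{32e^3m}\bigr)^{z_i}$ factors to $\bigl(\sqrt{32e^3m}\bigr)^d$, giving
\[
  \prod_{i=1}^k W_{1,z_i}[f_i]\le\bigl(\sqrt{32e^3m}\bigr)^d e^{-2\sum_i\alpha_i}\prod_{i\in S}\alpha_i\bigl(\ln(e/\alpha_i^{1/z_i})\bigr)^{z_i/2}.
\]
Summing this over $z$, reorganized as a sum over $\ell$, then over $S$ with $\abs{S}=\ell$, then over $z\in[m]^S$ with $\sum_{i\in S}z_i=d$, reproduces exactly $\bigl(\sqrt{32e^3m}\bigr)^d g(\alpha_1,\dots,\alpha_k)$, which is the claim. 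The step I expect to require the most care is making the per-factor bound carry a clean $e^{-2\alpha_i}$ for \emph{every} $i$, not only those with $z_i=0$: it is the product of these across all $k$ functions that produces the $e^{-2\sum_i\alpha_i}$ appearing in $g$, and this is exactly what pins down the constant $\sqrt{32e^3m}$ and forces the use of Lemma~\ref{lemma:level-k-inequalities} (with the exponent $1/z_i$ inside its logarithm) in place of a cruder level-$d$ estimate.
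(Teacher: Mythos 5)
Your proof is correct and takes essentially the same route as the paper's: decompose $W_{1,d}[f]$ over the tuples $z=(z_1,\dots,z_k)$, apply Fact~\ref{fact:L1fromL2} together with Lemma~\ref{lemma:level-k-inequalities} to each factor with $z_i\ge 1$, and estimate the $z_i=0$ factors via $1-2\alpha_i\le e^{-2\alpha_i}$. The only difference is bookkeeping: you attach an $e^{-2\alpha_i}$ factor to each coordinate individually, absorbing the slack into the constant via $4\le(16e^2)^{z_i/2}e^{-2\alpha_i}$, whereas the paper extracts $e^{-2\sum_i\alpha_i}$ in a single chain at the cost of an extra $e^d$; both routes land on $\bigl(\sqrt{32e^3m}\bigr)^d$.
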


\begin{proof}
  For notational simplicity, we will use $W_d[f]$ to denote $W_{1,d}[f]$.
  Write $f = \prod_{i=1}^k f_i$.
  Without loss of generality we will assume each function $f_i$ is non-constant.
  Since $f_i$ and $-f_i$ have the same weight $W_d[f_i]$, we will further assume $\E[f_i] \in [0,1)$.
  Note that for a subset $S = S_1 \times \cdots \times S_k \subseteq (\{0,1\}^m)^k$, we have $\hf_S = \prod_{i=1}^k \hf_{i_{S_i}}$.
  So,
  \[
    W_d[f]
    = \sum_{\substack{z \in \{0,\ldots,m\}^k \\ \sum_i z_i = d}} \prod_{i=1}^k W_{z_i}[f_i]
    = \sum_{\ell=1}^d \sum_{\substack{S \subseteq [k]\\ \abs{S}=\ell}} \sum_{\substack{z \in [m]^S \\ \sum_i z_i=d}}
    \Bigl( \prod_{i\in S} W_{z_i}[f_i] \cdot \prod_{i\not\in S} W_0[f_i] \Bigr) .
  \]
  Since $x = 1 - (1 - x) \le e^{-(1-x)}$ for every $x \in \R$, for every subset $S \subseteq [k]$ of size at most $d$, we have
  \[
    \prod_{i \not\in S} W_{z_i}[f_i]
    \le e^{-\sum_{i \not\in S} (1 - W_{z_i}[f_i])}
    \le e^{-\sum_{i \not\in S} (1 - W_{z_i}[f_i])} \cdot e^{\sum_{i \in S} W_{z_i}[f_i]}
    \le e^d \cdot e^{-\sum_{i=1}^k (1 - W_{z_i}[f_i])} .
  \]
  Hence,
  \begin{align*}
    W_d[f]
    &= \sum_{\ell=1}^d \sum_{\substack{S \subseteq [k]\\ \abs{S}=\ell}} \sum_{\substack{z \in [m]^S \\ \sum_i z_i=d}} \Bigl( \prod_{i\in S} W_{z_i}[f_i] \cdot \prod_{i\not\in S} W_0[f_i] \Bigr) \\
    &\le e^d \cdot e^{- \sum_{i=1}^k (1 - W_0[f_i])} \sum_{\ell=1}^d \sum_{\substack{S \subseteq [k]\\ \abs{S}=\ell}} \sum_{\substack{z \in [m]^S \\ \sum_i z_i=d}} \prod_{i\in S} W_{z_i}[f_i] . \numberthis \label{eqn:L1proda}
  \end{align*}
  Define $f_i' := (1-f_i)/2 \in [0,1]$.
  Let $\alpha_i := \E[f_i'] = (1 - \E[f_i])/2 \in (0, 1/2]$.
  Applying Lemma~\ref{lemma:level-k-inequalities} and Fact~\ref{fact:L1fromL2} to the functions $f_i'$, we have for every subset $S \subseteq [k]$ of size at most $d$,
  \begin{align*}
    \sum_{\substack{z \in [m]^S \\ \sum_i z_i=d}} \prod_{i\in S} W_{z_i}[f_i']
    &\le  \sum_{\substack{z \in [m]^S \\ \sum_i z_i=d}} \prod_{i\in S} \Bigl( 2 m^{z_i/2} \alpha_i \bigl( 2e \ln \bigl(e/\alpha_i^{1/z_i} \bigr) \bigr)^{z_i/2} \Bigr) \\
    &\le (\sqrt{8em})^d \sum_{\substack{z \in [m]^S \\ \sum_i z_i=d}} \prod_{i\in S} \Bigl( \alpha_i \bigl( \ln \bigl(e/\alpha_i^{1/z_i} \bigr) \bigr)^{z_i/2} \Bigr) .
  \end{align*}
  Note that for every integer $d \ge 1$, we have $W_d[f_i] = 2W_d[f_i']$.
  Plugging the bound above into Equation~\eqref{eqn:L1proda}, we have
  \[
    W_d[f]
    \le (2e)^d \cdot e^{- 2\sum_{i=1}^k \alpha_i} \sum_{\ell=1}^d \sum_{\substack{S \subseteq [k]\\ \abs{S}=\ell}} \sum_{\substack{z \in [m]^S \\ \sum_i z_i=d}} \prod_{i\in S} W_{z_i}[f_i'] 
    \le \bigl(\sqrt{32e^3 m} \bigr)^d g(\alpha_1, \ldots, \alpha_k) ,
  \]
  where the function $g\colon (0,1]^k \to \R$ is defined by
  \[
    g(\alpha_1, \ldots, \alpha_k)
    := e^{-2 \sum_{i=1}^k \alpha_i} \sum_{\ell=1}^d \sum_{\substack{S \subseteq [k] \\ \abs{S}=\ell}} \sum_{\substack{z \in [m]^S \\ \sum_i z_i = d}} \prod_{i \in S} \Bigl( \alpha_i \bigr( \ln \bigl( e/\alpha_i^{1/z_i} \bigr) \bigr)^{z_i/2} \Bigr) . \qedhere
  \]
  \end{proof}

  We now prove Theorems~\ref{thm:L1prod-Bool} and~\ref{thm:L1prod}.
  For every $(\alpha_1, \ldots, \alpha_k) \in (0,1]^k$, let $\alpha := \sum_{i=1}^k \alpha_i/k \in (0,1]$.
  We note that the upper bound in Theorem~\ref{thm:L1prod-Bool} is sufficient to prove Theorem~\ref{thm:prg-xor}.

  \begin{proof}[Proof of Theorem~\ref{thm:L1prod-Bool}]
  We will bound above $g(\alpha_1, \ldots, \alpha_k)$ in Lemma~\ref{lemma:L1prod}.
  Recall that $\alpha_i = (1 - \E[f_i])/2$.
  Since $\abs{\E[f_i]} \le 1 - 2^{-cm}$, we have $\alpha_i \ge 2^{-(cm+1)}$, and so $\ln(1/\alpha_i) \le cm+1$.
  For every subset $S \subseteq [k]$, the set $\{z \in [m]^S: \sum_i z_i = d\}$ has size at most $\binom{d-1}{\abs{S}-1} \le 2^d$.
  Hence, 
  \[
    \sum_{\substack{z \in [m]^S \\ \sum_i z_i=d}} \prod_{i\in S} \bigl( \ln(1/\alpha_i) \bigr)^{z_i/2}
    \le 2^d (cm+1)^{d/2} .
  \]
  By Maclaurin's inequality (cf.~\cite[Chapter~12]{Steele04}), we have
  \[
    \sum_{\substack{S \subseteq [k]\\ \abs{S}=\ell}} \prod_{i\in S} \alpha_i
    \le (e/\ell)^\ell \Bigl( \sum_{i=1}^k \alpha_i \Bigr)^{\ell}
    = (e/\ell)^\ell (k\alpha)^{\ell} .
  \]
  Because the function $x \mapsto e^{-2x} x^{\ell}$ is maximized when $x=\ell/2$, it follows that
  \[
    \sum_{\ell=1}^d e^{-2k\alpha} \sum_{\substack{S \subseteq [k]\\ \abs{S}=\ell}} \prod_{i\in S} \alpha_i
    \le \sum_{\ell=1}^d e^{-2k\alpha} (e/\ell)^\ell (k\alpha)^\ell
    \le \sum_{\ell=1}^d e^{-\ell} (e/\ell)^\ell (\ell/2)^\ell
    = \sum_{\ell=1}^d 2^{-\ell}
    \le 1 .
  \]
  Therefore,
  \begin{align*}
    g(\alpha_1, \ldots, \alpha_k)
    &= e^{-2 \sum_{i=1}^k \alpha_i} \sum_{\ell=1}^d \sum_{\substack{S \subseteq [k]\\ \abs{S}=\ell}} \sum_{\substack{z \in [m]^S \\ \sum_i z_i=d}} \prod_{i\in S} \Bigl( \alpha_i \bigl( \ln(1/\alpha_i^{1/z_i}) \bigr)^{z_i/2} \Bigr) \\
    &\le 2^d (cm+1)^{d/2} \sum_{\ell=1}^d e^{-2 k \alpha} \sum_{\substack{S \subseteq [k]\\ \abs{S}=\ell}} \prod_{i\in S} \alpha_i \\
    &\le 2^d (cm+1)^{d/2}.
  \end{align*}
  Plugging this bound into Lemma~\ref{lemma:L1prod}, we have
  \[
    W_{1,d}[f]
    \le \bigl( \sqrt{32 e^3 m} \bigr)^d \cdot \bigl( \sqrt{4 (cm+1)} \bigr)^d
    \le \bigl( 72 (\sqrt{c} \cdot m) \bigr)^d . \qedhere
  \]
  \end{proof}

  We now prove Theorem~\ref{thm:L1prod}.
  Recall that we let $\alpha := \sum_{i=1}^k \alpha_i/k \in (0,1]$ for every $(\alpha_1, \ldots, \alpha_k) \in (0,1]^k$.
  We will show that the maximum of the function $g$ defined in Lemma~\ref{lemma:L1prod} is attained at the diagonal $(\alpha, \ldots, \alpha)$.
  We state the claim now and defer the proof to the next section.
  \begin{claim} \label{claim:schurconcave}
    Let $g$ be the function defined in Lemma~\ref{lemma:L1prod}.
    For every $(\alpha_1, \ldots, \alpha_k) \in (0,1]^k$, we have $g(\alpha_1, \ldots, \alpha_k) \le g(\alpha, \ldots, \alpha)$.
  \end{claim}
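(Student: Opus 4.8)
The plan is to show that $g$ is Schur-concave on $(0,1]^k$, so that since $(\alpha,\ldots,\alpha)$ is majorized by every $(\alpha_1,\ldots,\alpha_k)$ with the same average $\alpha$, we get $g(\alpha_1,\ldots,\alpha_k)\le g(\alpha,\ldots,\alpha)$. First I would peel off the factor $e^{-2\sum_i\alpha_i}$, which depends only on $\sum_i\alpha_i$ and hence is constant on any set of points with fixed average; so it suffices to prove the remaining sum $h(\alpha_1,\ldots,\alpha_k):=\sum_{\ell=1}^d\sum_{|S|=\ell}\sum_{z\in[m]^S,\ \sum z_i=d}\prod_{i\in S}\phi_{z_i}(\alpha_i)$ is Schur-concave, where I set $\phi_{z}(x):=x\bigl(\ln(e/x^{1/z})\bigr)^{z/2}=x\bigl(1+\tfrac1z\ln(1/x)\bigr)^{z/2}$ for $z\in[m]$. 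Because $h$ is a symmetric function, by the Schur--Ostrowski criterion it is enough to check that for all $i\neq j$,
\[
  (\alpha_i-\alpha_j)\Bigl(\frac{\partial h}{\partial\alpha_i}-\frac{\partial h}{\partial\alpha_j}\Bigr)\le 0 .
\]
By symmetry I may fix $i=1,j=2$ and group the terms of $h$ according to how $S$ meets $\{1,2\}$: terms with $S\cap\{1,2\}=\emptyset$ contribute nothing to the difference of partials; terms containing exactly one of $1,2$ pair up; terms containing both pair up. In each case the coefficient multiplying a given monomial in $(\alpha_3,\ldots,\alpha_k)$ is a nonnegative combinatorial weight, so it suffices to show, for each fixed pair of exponents, that the corresponding elementary expression in $\alpha_1,\alpha_2$ has the right sign.

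Concretely, the single-index terms reduce to showing $(\alpha_1-\alpha_2)\bigl(\phi_z'(\alpha_1)-\phi_z'(\alpha_2)\bigr)\le 0$ for each $z\in[m]$, i.e.\ that each $\phi_z$ is concave on $(0,1]$; and the two-index terms reduce to showing $(\alpha_1-\alpha_2)\bigl(\phi_a'(\alpha_1)\phi_b(\alpha_2)-\phi_a'(\alpha_2)\phi_b(\alpha_1)\bigr)\le 0$ for $a,b\in[m]$ (after symmetrizing over which of $z_1,z_2$ is $a$ and which is $b$). The first reduces to a one-variable calculus check: writing $u=\ln(1/x)\ge 0$, one has $\phi_z(x)=e^{-u}(1+u/z)^{z/2}$, and differentiating twice in $u$ (noting $x\,d/dx=-d/du$) one finds $\phi_z''(x)\le 0$ throughout $(0,1]$ for every integer $z\ge 1$ — this is exactly the place where having the exponent $1/z$ inside the logarithm, rather than a bare $\ln(1/x)$, makes the function concave. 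The second, ``cross'' inequality I would handle by reducing it to log-concavity-type monotonicity: dividing through, it is equivalent to $\bigl(\ln\phi_a\bigr)'(x)-\bigl(\ln\phi_b\bigr)'(x)$ having constant sign (WLOG the expression $x\mapsto \phi_a'(x)/\phi_a(x)-\phi_b'(x)/\phi_b(x)$, together with a monotonicity of $\phi_a'/\phi_a$, forces the product $\phi_a'(\alpha_1)\phi_b(\alpha_2)-\phi_a'(\alpha_2)\phi_b(\alpha_1)$ to be antitone in $\alpha_1-\alpha_2$); again writing everything in terms of $u=\ln(1/x)$ turns this into an elementary comparison of $\frac{1}{2}\ln(1+u/a)-\frac{u}{2(a+u)}$ against the same with $b$, which is monotone in the parameter.

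The main obstacle I anticipate is the cross term: verifying the sign of $\phi_a'(\alpha_1)\phi_b(\alpha_2)-\phi_a'(\alpha_2)\phi_b(\alpha_1)$ uniformly over all $a,b\in[m]$ and all $\alpha_1,\alpha_2\in(0,1]$ is genuinely a two-parameter inequality, and the boundary behavior as $x\to 0$ (where $\phi_z(x)\to 0$ but $\ln(1/x)\to\infty$) needs care. I would control it by the substitution $u=\ln(1/x)$ and the observation that $\phi_z$, its derivative, and the logarithmic derivative $\phi_z'/\phi_z$ are all well-behaved real-analytic functions of $u$ on $[0,\infty)$, reducing every required sign to the monotonicity in the integer parameter $z$ of a handful of explicit one-variable functions — each of which is checked by a short derivative computation. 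Once Schur-concavity of $h$ is in hand, the claim follows immediately since $(\alpha,\dots,\alpha)\prec(\alpha_1,\dots,\alpha_k)$.
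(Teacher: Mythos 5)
Your proposal follows the paper's strategy almost exactly: establish Schur-concavity of $g$ via the Schur--Ostrowski criterion, split the partial-derivative difference by how $S$ intersects the distinguished pair of indices, and reduce to two facts about the single-variable functions $\phi_z$ — concavity for the terms with $|S\cap\{u,v\}|=1$, and a two-variable ``cross'' inequality for the terms with $|S\cap\{u,v\}|=2$ — both ultimately resting on $\phi_z\ge 0$, $\phi_z'>0$, $\phi_z''\le 0$ (the paper's Claim~\ref{claim:derivatives}).

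The one place you deviate is the cross term, and there your route is a valid simplification of what the paper does. The pre-symmetrized term in $\partial_{\alpha_u}g_2-\partial_{\alpha_v}g_2$ is actually $\phi_a'(\alpha_u)\phi_b(\alpha_v)-\phi_a(\alpha_u)\phi_b'(\alpha_v)$ rather than your $\phi_a'(\alpha_u)\phi_b(\alpha_v)-\phi_a'(\alpha_v)\phi_b(\alpha_u)$, so your displayed expression is not literally the term that appears; but after symmetrizing over $a\leftrightarrow b$ both expressions sum to the same $\psi_{a,b}(\alpha_u,\alpha_v)$, so proving your $(\alpha_u-\alpha_v)\bigl(\phi_a'(\alpha_u)\phi_b(\alpha_v)-\phi_a'(\alpha_v)\phi_b(\alpha_u)\bigr)\le 0$ for \emph{all} $a,b$ is indeed a sufficient (in fact slightly stronger) condition. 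Moreover, that inequality is immediate and does not need the log-derivative machinery you sketch: for $\alpha_u\ge\alpha_v$ one has $0\le\phi_a'(\alpha_u)\le\phi_a'(\alpha_v)$ from concavity and $0\le\phi_b(\alpha_v)\le\phi_b(\alpha_u)$ from monotonicity, and multiplying these two chains of nonnegative inequalities gives it directly. The paper instead shows $\psi_{a,b}(x,x)=0$ and $\partial_y\psi_{a,b}(x,y)\ge 0$ and integrates; your pointwise route is arguably cleaner. So: correct, same high-level approach, with a somewhat more direct disposal of the cross term than the paper's — though your write-up of that step (``log-concavity-type monotonicity'' of $\phi_a'/\phi_a-\phi_b'/\phi_b$) should be replaced by the simple product-of-monotone-inequalities argument above, which is what actually closes the gap.
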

  
  \begin{proof}[Proof of Theorem~\ref{thm:L1prod}]
    We first apply Claim~\ref{claim:schurconcave} and obtain
  \begin{align*}
    g(\alpha_1, \ldots, \alpha_k)
    \le g(\alpha, \ldots, \alpha)
    = e^{-2 k\alpha} \sum_{\ell=1}^d \sum_{\substack{S \subseteq [k] \\ \abs{S}=\ell}} \alpha^\ell \sum_{\substack{z \in [m]^S \\ \sum_i z_i = d}} \prod_{i \in S} \bigl( \ln \bigl( e/\alpha^{1/z_i} \bigr) \bigr)^{z_i/2} .
  \end{align*}
  We next give an upper bound on $g(\alpha, \ldots, \alpha)$ that has no dependence on the numbers $z_i$.
  By the weighted AM-GM inequality, for every subset $S \subseteq [k]$ of size $\ell$ and numbers $z_i$ such that $\sum_{i \in S} z_i = d$,
  \begin{align*}
    \prod_{i \in S} \bigl( \ln\bigl( e/\alpha^{1/z_i} \bigr) \bigr)^{z_i/2}
    &\le \Bigl( \sum_{i\in S} \frac{z_i \ln\bigl( e/\alpha^{1/z_i} \bigr)}{d} \Bigr)^{d/2} \\
    &= \Bigl( \frac{1}{d} \sum_{i\in S} z_i \Bigl(1 + \frac{1}{z_i} \ln(1/\alpha) \Bigr) \Bigr)^{d/2} \\
    &= \Bigl( 1 + \frac{\ell}{d} \ln(1/\alpha) \Bigr)^{d/2} \\
    &= \bigl( \ln \bigl(e/\alpha^{\ell/d} \bigr) \bigr)^{d/2} .
  \end{align*}
  For every subset $S \subseteq [k]$, the set $\{z \in [m]^S: \sum_i z_i = d\}$ has size at most $\binom{d-1}{\abs{S}-1} \le 2^d$.
  Thus,
  \begin{align*}
    g(\alpha, \ldots, \alpha)
    &\le e^{-2 k\alpha} \sum_{\ell=1}^d \sum_{\substack{S \subseteq [k] \\ \abs{S}=\ell}} \alpha^\ell \sum_{\substack{z \in [m]^S \\ \sum_i z_i = d}} \bigl( \ln\bigl(e/\alpha^{\ell/d} \bigr) \bigr)^{d/2} \\
    &\le 2^d \sum_{\ell=1}^d  e^{-2 k\alpha} \sum_{\substack{S \subseteq [k] \\ \abs{S}=\ell}} \alpha^\ell \bigl( \ln\bigl(e/\alpha^{\ell/d} \bigr) \bigr)^{d/2} \\
    &\le 2^d \sum_{\ell=1}^d  e^{-2 k\alpha} \Bigl( \frac{e k\alpha}{\ell} \Bigr)^\ell \bigl( \ln\bigl(e/\alpha^{\ell/d} \bigr) \bigr)^{d/2} . \numberthis \label{eqn:L1prodc}
  \end{align*}
  For every $\ell \in [k]$, define $g_\ell\colon (0,1] \to \R$ to be
  \[
    g_\ell(x)
    := e^{-2 kx} \Bigl( \frac{e kx}{\ell} \Bigr)^\ell \bigl( \ln\bigl(e/x^{\ell/d} \bigr) \bigr)^{d/2} .
  \]
  We now bound above the maximum of $g_\ell$ over $x \in (0,1]$.
  One can verify easily that the derivative of $g$ is
  \[
    g_\ell'(x)
    = \frac{g_\ell(x)}{2x \ln \bigl(e/x^{\ell/d} \bigr)} \bigl( \ln(1/x^{2\ell/d}) (\ell - 2kx) + (\ell - 4kx) \bigr) .
  \]
  Observe that when $x \le \ell/4k$, then $g_\ell'(x) \ge \frac{g_\ell(x)}{4x \ln (e/x^{\ell/d})} \bigl( \ell \ln(1/x^{2\ell/d}) \bigr) \ge 0$.
  Likewise, when $x \ge \ell/2k$, then $g_\ell'(x) \le \frac{g_\ell(x)}{2x \ln(e/x^{\ell/d})} (-\ell) \le 0$.
  Also, we have $g_\ell(0) = 0$.
Hence, $g_\ell(x) \le g_\ell(\beta_\ell \ell/4k)$ for some $\beta_\ell \in [1,2]$, which is at most
  \[
    e^{-\ell/2} \cdot (e/2)^\ell \cdot \Bigl( \ln\bigl(e (4k/\ell)^{\ell/d} \bigr) \Bigr)^{d/2} .
  \]
  (In the case when $\ell/4k \ge 1$, we have $g_\ell(x) \le g_\ell(1) \le e^{-2k}(ek/\ell)^\ell$.)
  Therefore, plugging this back into Equation~\eqref{eqn:L1prodc},
  \begin{align*}
    g(\alpha, \ldots, \alpha)
    \le 2^d \sum_{\ell=1}^d g_\ell(\alpha) 
    \le 2^d \sum_{\ell=1}^d g_\ell(\beta_\ell \ell/4k) 
    &\le 2^d \sum_{\ell=1}^d e^{-\ell/2} \cdot (e/2)^\ell \cdot \Bigl( \ln\bigl(e (4k/\ell)^{\ell/d} \bigr) \Bigr)^{d/2} \\
    &\le 2^d \bigl(e \ln(4ek) \bigr)^{d/2} \sum_{\ell=1}^d 2^{-\ell}  \\
    &\le \bigl( \sqrt{4e\ln(4ek)} \bigr)^d .
  \end{align*}
  Putting this back into the bound in Lemma~\ref{lemma:L1prod}, we conclude that
  \[
    W_{1,d}[f] \le \bigl( 84 \sqrt{m \ln(4ek)} \bigr)^d ,
  \]
  proving the theorem.
\end{proof}

\subsection{Schur-concavity of $g$}

We prove Claim~\ref{claim:schurconcave} in this section.
First recall that the function $g\colon (0,1]^k \to \R$ is defined as
\[
	g(\alpha_1, \ldots, \alpha_k)
	:= \sum_{\ell=1}^d \sum_{\substack{S \subseteq[k] \\ \abs{S}=\ell}} \sum_{\substack{z \in [m]^S \\ \sum_i z_i = d}} \prod_{i \in S} \phi_{z_i}(\alpha_i) ,
\]
where for every positive integer $z$, the function $\phi_z \colon (0,1] \to \R$ is defined by
\[
  \phi_z(x) = x \ln(e/x^{1/z})^{z/2} .
\]

The proof of Claim~\ref{claim:schurconcave} follows from showing that $g$ is {\em Schur-concave}.
Before defining it, we first recall the concept of majorization.
Let $x, y \in \R^k$ be two vectors.
We say that $y$ {\em majorizes} $x$, denoted by $x \prec y$, if for every $j \in [k]$ we have
\[
  \sum_{i=1}^j x_{(i)} \le \sum_{i=1}^j y_{(i)} ,
\]
and $\sum_{i=1}^k (x_i - y_i) = 0$, where $x_{(i)}$ and $y_{(i)}$ are the $i$th largest coordinates in $x$ and $y$ respectively.

A function $f\colon D \to \R$ where $D \subseteq \R^k$ is {\em Schur-concave} if whenever $x \prec y$ we have $f(x) \ge f(y)$.
We will show that $g$ is Schur-concave using the Schur--Ostrowski criterion.

\begin{theorem}[Schur--Ostrowski criterion (Theorem~12.25 in~\cite{PevcaricPT92})] \label{thm:schur-ostrowski}
Let $f\colon D \to \R$ be a function where $D \subseteq \R^k$ is permutation-invariant, and assume that the first partial derivatives of $f$ exist in $D$.
Then $f$ is Schur-concave in $D$ if and only if
\[
	(x_j - x_i) \Bigl( \frac{\partial f}{\partial x_i} - \frac{\partial f}{\partial x_j} \Bigr) \ge 0
\]
for every $x \in D$, and every $1 \le i \neq j \le k$.
\end{theorem}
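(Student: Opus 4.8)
\emph{Setting and plan.} As is standard for this criterion (and as holds in the application in Claim~\ref{claim:schurconcave}), we take $D$ to be convex in addition to permutation-invariant, with $f$ symmetric; the symmetry hypothesis is essential here (for instance $f(x_1,x_2)=\phi(x_1-x_2)$ for a non-even unimodal $\phi$ satisfies the displayed inequality on a small box yet is not symmetric, hence not Schur-concave). Replacing $f$ by $-f$ interchanges Schur-concavity and Schur-convexity, so we treat the Schur-concave case as stated. Every step of the proof runs along a \emph{compressing transfer}: given a point $w\in D$, indices $p\ne q$ with $w_p\le w_q$, and $0\le s\le (w_q-w_p)/2$, let $w(s)$ denote the point agreeing with $w$ except that its $p$-th coordinate is $w_p+s$ and its $q$-th is $w_q-s$. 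The path $s\mapsto w(s)$ stays in $D$: it is the segment from $w$ to the midpoint of $w$ and its $(p,q)$-transpose, and both endpoints lie in $D$ by permutation-invariance, hence the whole segment by convexity. Along this path the chain rule gives $\frac{d}{ds}f(w(s))=\partial_p f(w(s))-\partial_q f(w(s))$, while $w(s)_q-w(s)_p=(w_q-w_p)-2s\ge0$ throughout.

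\emph{Necessity.} Suppose $f$ is Schur-concave and fix $x\in D$, $i\ne j$. If $x_i=x_j$ the asserted inequality is $0\ge0$. Otherwise, as the inequality is symmetric in $i,j$, assume $x_i<x_j$ and apply the construction above with $p=i$, $q=j$. For $0\le s\le (x_j-x_i)/2$ the point $x(s)$ arises from $x$ by a compressing transfer, so $x(s)\prec x$, so $f(x(s))\ge f(x)=f(x(0))$. Hence the one-sided derivative at $s=0$ is nonnegative, i.e.\ $\partial_i f(x)-\partial_j f(x)\ge0$, and multiplying by $x_j-x_i>0$ yields $(x_j-x_i)(\partial_i f(x)-\partial_j f(x))\ge0$.

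\emph{The majorization lemma and sufficiency.} We invoke the classical characterization of majorization (Hardy--Littlewood--P\'olya; see~\cite{PevcaricPT92}): if $x\prec y$ then, writing $x^{\downarrow},y^{\downarrow}$ for the descending rearrangements, there is a chain $y^{\downarrow}=w^{(0)},w^{(1)},\ldots,w^{(N)}=x^{\downarrow}$ with $N\le k-1$ in which each $w^{(t+1)}$ is obtained from $w^{(t)}$ by a single non-overshooting compressing transfer on some pair of coordinates $(p_t,q_t)$ with $w^{(t)}_{p_t}\le w^{(t)}_{q_t}$ (one repeatedly moves $\delta_t=\min(w^{(t)}_{q_t}-x^{\downarrow}_{q_t},\,x^{\downarrow}_{p_t}-w^{(t)}_{p_t})$ from a coordinate that is currently too large to one that is too small, matching one more coordinate to $x^{\downarrow}$ at each step and keeping $x^{\downarrow}\prec w^{(t+1)}$). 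By permutation-invariance $x^{\downarrow},y^{\downarrow}\in D$, and since each step is a non-overshooting compressing transfer, every $w^{(t)}$ and every connecting segment lies in $D$. Now assume the displayed inequality holds throughout $D$ and let $x\prec y$ with $x,y\in D$; take the chain above. On the $t$-th step apply the construction of the first paragraph at $w^{(t)}$ with $(p,q)=(p_t,q_t)$ to get a path $s\mapsto w(s)$, $s\in[0,\delta_t]$, from $w^{(t)}$ to $w^{(t+1)}$; since $w(s)_{q_t}-w(s)_{p_t}\ge0$ on this range, the hypothesis at $w(s)$ for the pair $(p_t,q_t)$ gives $\frac{d}{ds}f(w(s))=\partial_{p_t}f(w(s))-\partial_{q_t}f(w(s))\ge0$, so $f(w^{(t+1)})\ge f(w^{(t)})$. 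Chaining over $t=0,\ldots,N-1$ gives $f(x^{\downarrow})\ge f(y^{\downarrow})$, and symmetry of $f$ gives $f(x)=f(x^{\downarrow})\ge f(y^{\downarrow})=f(y)$; thus $f$ is Schur-concave.

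\emph{Main obstacle.} The only non-routine ingredient is the majorization lemma---that $\prec$ is generated by non-overshooting compressing transfers that can be routed entirely through $D$---together with the bookkeeping that uses the symmetry of $f$ to pass between $x,y$ and their descending rearrangements. The remaining content is one-variable calculus: one-sided derivatives in the necessity direction, and the chain rule along a segment in the sufficiency direction.
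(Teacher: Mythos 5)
The paper does not prove this statement --- it is imported wholesale as Theorem~12.25 of~\cite{PevcaricPT92} --- so there is no internal argument to compare yours against; what you give is essentially the standard textbook proof, and it is correct. Your necessity argument (one-sided derivative along a Robin Hood transfer, using that the transferred point is majorized by the original) and your sufficiency argument (the Hardy--Littlewood--P\'olya chain of at most $k-1$ non-overshooting transfers, integrated along segments on which the hypothesis forces $\partial_{p_t}f-\partial_{q_t}f\ge 0$) are exactly how the criterion is proved in the literature, and your bookkeeping that each transfer amount is at most half the gap, so every connecting segment stays in a convex permutation-invariant $D$, is right. You are also correct that the statement as printed is too weak to be literally true: symmetry of $f$ must be assumed (or derived, as it is automatic in the ``only if'' direction since Schur-concavity forces invariance under permutations), and $D$ should be convex (or at least contain the relevant segments); your counterexample $f(x_1,x_2)=\phi(x_1-x_2)$ with $\phi$ unimodal but not even correctly isolates the symmetry issue. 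One further small caveat in the same spirit: your chain-rule and one-sided-derivative steps need $f$ to be differentiable along the transfer directions (e.g.\ $C^1$, as in the cited textbook), not merely to have first partial derivatives as the paper's phrasing suggests. None of these caveats affects the paper's use of the criterion: the function $g$ of Lemma~\ref{lemma:L1prod} used in Claim~\ref{claim:schurconcave} and Lemma~\ref{lemma:schurconcave} is symmetric and continuously differentiable on the convex, permutation-invariant domain $(0,1]^k$.
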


Claim~\ref{claim:schurconcave} then follows from the observation that $(\sum_i x_i/k, \ldots, \sum_i x_i/k) \prec x$ for every $x \in [0,1]^k$.

\begin{claim} \label{claim:derivatives}
For every $x \in (0,1]$ we have
\begin{enumerate}
	\item $\phi_z(x) \ge 0$;
	\item $\phi_z'(x) = \frac{1}{2} \ln \bigl(\frac{e}{x^{2/z}}\bigr) \ln\bigl(\frac{e}{x^{1/z}}\bigr)^{z/2-1} > 0$, and
	\item $\phi_z''(x) = -\frac{1}{2xz} \ln\bigl(\frac{e}{x^{1/z}}\bigr)^{z/2-2}\bigl( 2\ln\bigl(\frac{e}{x^{1/z}}\bigr) + (\frac{z}{2}-1) \ln\bigl(\frac{e}{x^{2/z}}\bigr) \bigr) \le 0$.
\end{enumerate}
\end{claim}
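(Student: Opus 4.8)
The plan is to reduce everything to two rounds of differentiation of $\phi_z(x) = x\,u(x)^{z/2}$, where I abbreviate $u(x) := \ln(e/x^{1/z}) = 1 - \tfrac1z\ln x$. Two elementary observations drive the whole argument. First, on $(0,1]$ we have $\ln x \le 0$, hence $u(x) \ge 1 > 0$; in particular every real power $u(x)^t$ is well-defined and strictly positive, which is what makes the negative exponents $z/2-1$ and $z/2-2$ harmless in the edge case $z=1$. Second, $u'(x) = -\tfrac1{zx}$. Part~(1) is then immediate: $x>0$ and $u(x)^{z/2}>0$ give $\phi_z(x) > 0 \ge 0$.

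For part~(2) I would apply the product rule,
\[
  \phi_z'(x) = u(x)^{z/2} + x\cdot\tfrac z2\,u(x)^{z/2-1}u'(x) = u(x)^{z/2} - \tfrac12\,u(x)^{z/2-1} = u(x)^{z/2-1}\bigl(u(x)-\tfrac12\bigr),
\]
and then rewrite the last factor via $u(x)-\tfrac12 = \tfrac12\bigl(1-\tfrac2z\ln x\bigr) = \tfrac12\ln(e/x^{2/z})$, which is exactly the claimed formula. Strict positivity follows because $u(x)^{z/2-1}>0$ and $\ln(e/x^{2/z}) = 1-\tfrac2z\ln x \ge 1 > 0$.

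For part~(3) I would differentiate the intermediate form $\phi_z'(x) = u^{z/2} - \tfrac12 u^{z/2-1}$ once more,
\[
  \phi_z''(x) = u'\Bigl(\tfrac z2\,u^{z/2-1} - \tfrac12\bigl(\tfrac z2-1\bigr)u^{z/2-2}\Bigr) = u'\,u^{z/2-2}\Bigl(\tfrac z2\,u - \tfrac12\bigl(\tfrac z2-1\bigr)\Bigr).
\]
Substituting $u' = -\tfrac1{zx}$ produces the prefactor $-\tfrac1{2xz}$ and the power $u^{z/2-2} = \ln(e/x^{1/z})^{z/2-2}$; to match the stated parenthetical I use the identity $2u(x)-1 = \ln(e/x^{2/z})$ to rewrite $\tfrac z2 u - \tfrac12(\tfrac z2-1) = u + \tfrac12(\tfrac z2-1)(2u-1) = \tfrac12\bigl(2\ln(e/x^{1/z}) + (\tfrac z2-1)\ln(e/x^{2/z})\bigr)$. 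Finally, for the sign, substituting $u(x) = 1-\tfrac1z\ln x$ back into that parenthetical shows it equals $1 + \tfrac z2 - \ln x \ge 1 + \tfrac z2 > 0$, so $\phi_z''(x)$ equals a strictly negative constant times strictly positive factors, hence $\le 0$.

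There is no genuine obstacle here: the statement is a routine, if slightly fiddly, calculus exercise. The only points that require care are keeping the bookkeeping uniform when $z=1$ (handled once and for all by $u(x)\ge 1$, which legitimizes the negative powers) and carrying out the two algebraic rewrites, $u-\tfrac12 = \tfrac12\ln(e/x^{2/z})$ and $2u-1 = \ln(e/x^{2/z})$, that massage the derivatives into the precise shape asserted in the claim.
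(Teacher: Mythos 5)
Your proof is correct and takes essentially the same approach as the paper: direct differentiation of $\phi_z(x) = x\,u(x)^{z/2}$ with $u(x) = 1 - \tfrac1z\ln x$, followed by algebraic massaging into the stated forms. The one small difference is in the sign check for $\phi_z''$: the paper splits into the case $z\ge 2$ (where $\tfrac z2 - 1\ge 0$ makes the parenthetical obviously nonnegative) and the residual case $z=1$ (handled by $\tfrac12\ln(e/x^2)\le 2\ln(e/x)$), whereas you collapse the parenthetical to $1+\tfrac z2-\ln x > 0$, which handles all $z\ge 1$ uniformly and is marginally cleaner, though the content is the same.
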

\begin{proof}
The derivatives of $\phi_z$ and the non-negativity of $\phi_z$ and $\phi_z'$ can be verified easily.
It is also clear that $\phi_z''$ is non-positive when $z \ge 2$.
Thus it remains to verify $\phi_1''(x) \le 0$ for every $x$.
We have
\[
	\phi_1''(x) = -\frac{1}{2x} \ln\Bigl(\frac{e}{x}\Bigr)^{-3/2} \Bigl( 2\ln\Bigl(\frac{e}{x}\Bigr) - \frac{1}{2} \ln\Bigl(\frac{e}{x^2}\Bigr)  \Bigr) .
\]
It follows from $\frac{1}{2} \ln(e/x^2) \le \ln(e^2/x^2) = 2\ln(e/x)$ that $\phi_1''(x) \le 0$.
\end{proof}

\begin{lemma} \label{lemma:schurconcave}
  $g$ is Schur-concave.
\end{lemma}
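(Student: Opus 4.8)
The plan is to verify the Schur--Ostrowski criterion (Theorem~\ref{thm:schur-ostrowski}). The domain $(0,1]^k$ is permutation-invariant, the first partials of $g$ exist there (by Claim~\ref{claim:derivatives} each $\phi_z$ is differentiable on $(0,1]$), and $g$ is invariant under all coordinate permutations. Hence it suffices to show $(\alpha_2-\alpha_1)\bigl(\tfrac{\partial g}{\partial\alpha_1}-\tfrac{\partial g}{\partial\alpha_2}\bigr)\ge 0$ for every $\alpha\in(0,1]^k$; the inequality for any other pair of coordinates follows by relabeling, and it is already symmetric in the two chosen indices.

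First I would fix $\alpha_3,\dots,\alpha_k$ and group the terms defining $g$ according to $S\cap\{1,2\}$. Terms with $1,2\notin S$ involve neither $\alpha_1$ nor $\alpha_2$ and drop out of $\tfrac{\partial g}{\partial\alpha_1}-\tfrac{\partial g}{\partial\alpha_2}$. Using that each $\phi_z\ge 0$ (Claim~\ref{claim:derivatives}(1)) and that $g$ is symmetric in coordinates $1$ and $2$, the remaining terms can be written as
\[
  \sum_{a\in[m]}\phi_a(\alpha_1)\,P_a \;+\; \sum_{a\in[m]}\phi_a(\alpha_2)\,P_a \;+\; \sum_{a,b\in[m]}\phi_a(\alpha_1)\phi_b(\alpha_2)\,Q_{a,b},
\]
where the coefficients $P_a\ge 0$ and $Q_{a,b}=Q_{b,a}\ge 0$ depend only on $\alpha_3,\dots,\alpha_k$ (each is a sum of products of values of the $\phi_z$'s; the symmetry $Q_{a,b}=Q_{b,a}$ and the fact that the same $P_a$ appears in the $\alpha_1$- and $\alpha_2$-sums both come from the $1\leftrightarrow 2$ symmetry of $g$). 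Differentiating gives
\[
  \frac{\partial g}{\partial\alpha_1}-\frac{\partial g}{\partial\alpha_2}
  = \sum_{a}\bigl(\phi_a'(\alpha_1)-\phi_a'(\alpha_2)\bigr)P_a
  + \sum_{a,b}\bigl(\phi_a'(\alpha_1)\phi_b(\alpha_2)-\phi_a(\alpha_1)\phi_b'(\alpha_2)\bigr)Q_{a,b}.
\]

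For the first sum, Claim~\ref{claim:derivatives}(3) gives $\phi_a''\le 0$, so $\phi_a'$ is non-increasing and $(\alpha_2-\alpha_1)\bigl(\phi_a'(\alpha_1)-\phi_a'(\alpha_2)\bigr)\ge 0$; with $P_a\ge 0$ this makes $(\alpha_2-\alpha_1)$ times the first sum non-negative. For the second sum, I would set $F_{a,b}(s,t):=\phi_a(s)\phi_b(t)+\phi_a(t)\phi_b(s)$ and use $Q_{a,b}=Q_{b,a}$ to symmetrize in $(a,b)$, rewriting it as $\tfrac12\sum_{a,b}Q_{a,b}\bigl(\partial_1 F_{a,b}-\partial_2 F_{a,b}\bigr)(\alpha_1,\alpha_2)$, where $\partial_1,\partial_2$ denote differentiation in the first and second argument. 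It then remains to show each $F_{a,b}$ is Schur-concave on $(0,1]^2$, i.e. $(\alpha_2-\alpha_1)\bigl(\partial_1 F_{a,b}-\partial_2 F_{a,b}\bigr)(\alpha_1,\alpha_2)\ge 0$. For $s\le t$,
\[
  \bigl(\partial_1 F_{a,b}-\partial_2 F_{a,b}\bigr)(s,t)
  = \bigl(\phi_a'(s)\phi_b(t)-\phi_a'(t)\phi_b(s)\bigr) + \bigl(\phi_b'(s)\phi_a(t)-\phi_b'(t)\phi_a(s)\bigr),
\]
and each bracket is non-negative: by Claim~\ref{claim:derivatives} every $\phi_z$ is non-negative and non-decreasing with $\phi_z'$ non-increasing, so for $s\le t$ we have $\phi_a'(s)\ge\phi_a'(t)\ge 0$ and $0\le\phi_b(s)\le\phi_b(t)$, whence $\phi_a'(s)\phi_b(t)\ge\phi_a'(t)\phi_b(t)\ge\phi_a'(t)\phi_b(s)$, and symmetrically for the other bracket. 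Thus $(\alpha_2-\alpha_1)$ times the second sum is also non-negative, and adding the two contributions verifies the Schur--Ostrowski condition, so $g$ is Schur-concave.

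I expect the main difficulty to be organizational rather than analytic: correctly setting up the grouping by $S\cap\{1,2\}$ and the symmetrization of the cross term so that it collapses to the two-variable Schur-concavity of $F_{a,b}(s,t)=\phi_a(s)\phi_b(t)+\phi_a(t)\phi_b(s)$. Once that reduction is in place, the three elementary properties of $\phi_z$ recorded in Claim~\ref{claim:derivatives} (non-negativity, monotonicity, concavity) are exactly what the argument needs, and nothing finer about the explicit form of $\phi_z$ is required.
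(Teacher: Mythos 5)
Your proof is correct and follows essentially the same route as the paper: both verify the Schur--Ostrowski criterion, split the terms of $g$ according to whether $S$ contains one or both of the two distinguished indices, handle the singleton case via concavity of $\phi_z$, and symmetrize the doubleton case in the two $z$-values before bounding the resulting two-variable expression (your $\partial_1 F_{a,b} - \partial_2 F_{a,b}$ is exactly the paper's $\psi_{a,b}$). The only cosmetic difference is in certifying the sign of that expression: the paper differentiates $\psi_{a,b}(x,\cdot)$ and uses $\psi_{a,b}(x,x)=0$, whereas you split it into two brackets and bound each directly from the monotonicity and non-negativity facts in Claim~\ref{claim:derivatives}.
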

\begin{proof}
  Fix $1 \le u \neq v \le k$ and write $g = g_1 + g_2$, where
  \[
  	g_1(\alpha_1, \ldots, \alpha_k)
  	:= \sum_{\ell=1}^d \sum_{\substack{S \subseteq[k], \abs{S}=\ell \\ (S \ni u \wedge S \not\ni v) \vee (S \not\ni u \wedge S \ni v)}} \sum_{\substack{z \in [m]^S \\ \sum_i z_i = d}} \prod_{i \in S} \phi_{z_i}(\alpha_i)
  \]
  and
  \[
  	g_2(\alpha_1, \ldots, \alpha_k)
  	:= \sum_{\ell=1}^d \sum_{\substack{S \subseteq[k], \abs{S}=\ell \\ (S \ni u \wedge S \ni v) \vee (S \not\ni u \wedge S \not\ni v)}} \sum_{\substack{z \in [m]^S \\ \sum_i z_i = d}} \prod_{i \in S} \phi_{z_i}(\alpha_i) .
  \]
  
  We will show that for every $\alpha \in (0,1]^k$, whenever $\alpha_v \le \alpha_u$ we have (1) $\Bigl(\frac{\partial g_1}{\partial \alpha_u} - \frac{\partial g_1}{\partial \alpha_v}\Bigr)(\alpha) \le 0$ and (2) $\Bigl(\frac{\partial g_2}{\partial \alpha_u} - \frac{\partial g_2}{\partial \alpha_v}\Bigr)(\alpha) \le 0$, from which the lemma follows from Theorem~\ref{thm:schur-ostrowski}.

  For $g_1$, since $\phi_z'' \le 0$ and $\alpha_v \le \alpha_u$, we have $\phi_{z_u}'(\alpha_v) \ge \phi_{z_u}'(\alpha_u)$.
  Moreover, as $\phi_z \ge 0$ and $\phi_z' > 0$, we have
  \begin{align*}
  	\frac{\partial g_1}{\partial \alpha_u}(\alpha) 
  	&\le \sum_{\ell=1}^d \sum_{\substack{S \subseteq[k], \abs{S}=\ell \\ (S \ni u \wedge S \not\ni v)}} \sum_{\substack{z \in [m]^S \\ \sum_i z_i = d}} \prod_{\substack{i \in S \\ i \neq u}} \phi_{z_i}(\alpha_i) \cdot \phi_{z_u}'(\alpha_u) \cdot \frac{\phi_{z_u}'(\alpha_v)}{\phi_{z_u}'(\alpha_u)} \\
   &= \sum_{\ell=1}^d \sum_{\substack{S \subseteq[k], \abs{S}=\ell \\ (S \ni u \wedge S \not\ni v)}} \sum_{\substack{z \in [m]^S \\ \sum_i z_i = d}} \prod_{\substack{i \in S \\ i \neq u}} \phi_{z_i}(\alpha_i) \cdot \phi_{z_u}'(\alpha_v) \\
  	&= \sum_{\ell=1}^d \sum_{\substack{S \subseteq[k], \abs{S}=\ell \\ (S \ni v \wedge S \not\ni u)}} \sum_{\substack{z \in [m]^S \\ \sum_i z_i = d}} \prod_{\substack{i \in S \\ i \neq v}} \phi_{z_i}(\alpha_i) \cdot \phi_{z_v}'(\alpha_v) 
  	= \frac{\partial g_1}{\partial \alpha_v}(\alpha) ,
  \end{align*}
  where in the second equality we simply renamed $z_u$ to $z_v$.

  We now show that $\Bigl(\frac{\partial g_2}{\partial \alpha_u} - \frac{\partial g_2}{\partial \alpha_v}\Bigr)(\alpha) \le 0$ whenever $\alpha_v \le \alpha_u$.  
  For all positive integers $z$ and $w$, define $\psi_{z,w}\colon (0,1]^2 \to \R$ by
  \[
  	\psi_{z,w}(x,y) := \phi_z'(x)\phi_w(y) + \phi_w'(x)\phi_z(y) - \phi_z(x)\phi_w'(y) - \phi_w(x)\phi_z'(y) .
  \]
  Note that when $x = y$ we have $\psi_{z,w}(x,x) = 0$.
  Moreover, when $z = w$ we have $\psi_{z,z}(x,y) = 2 (\phi_z'(x)\phi_z(y) - \phi_z(x)\phi_z'(y))$.
  For every $x, y \in (0,1]$, by Claim~\ref{claim:derivatives} we have
  \[
  	\frac{\partial}{\partial y} \psi_{z,w}(x,y)
  	= \phi_z'(x)\phi_w'(y) + \phi_w'(x)\phi_z'(y) - \phi_z(x)\phi_w''(y) - \phi_w(x)\phi_z''(y)
  	\ge 0 .
  \]
  Since $\psi_{z_u,z_v}(\alpha_u,\alpha_u) = 0$, we have $\psi_{z_u,z_v}(\alpha_u,\alpha_v) \le 0$ whenever $\alpha_v \le \alpha_u$, and so
  \begin{multline*}
  	\Bigl( \frac{\partial g_2}{\partial \alpha_u} - \frac{\partial g_2}{\partial \alpha_v} \Bigr)(\alpha) = \\
  	\sum_{\ell=2}^d \sum_{\substack{S \subseteq [k] \\ \abs{S}=\ell \\ S \ni u \wedge S \ni v}} \Bigl( \sum_{\substack{z \in [m]^S \\ \sum_i z_i = d \\ z_u = z_v}} \prod_{\substack{i\in S \\ i \neq u \\ i \neq v}} \phi_{z_i}(\alpha_i) \cdot \psi_{z_u,z_v}(\alpha_u,\alpha_v)/2 + \sum_{\substack{z \in [m]^S \\ \sum_i z_i = d \\ z_u < z_v}} \prod_{\substack{i\in S \\ i \neq u \\ i \neq v}} \phi_{z_i}(\alpha_i) \cdot \psi_{z_u,z_v}(\alpha_u,\alpha_v) \Bigr) \le 0
  \end{multline*}
  because the values $\phi_{z_i}$ are non-negative.
\end{proof}

\subsection{Lower bound}

In this section we prove Claim~\ref{claim:lowerbound}.
We first restate our claim.

\lowerbound*

\begin{proof}
  Let $k = d \cdot 2^m$ and $f_1, \ldots, f_k\colon \B^{mk} \to \B$ be the OR function on $k$ disjoint sets of $m$ bits.
It is easy to verify that $\hf_i(\emptyset) = 1 - 2^{-m}$ and $\abs{\hf_i(S)} = 2^{-m}$ for every $S \neq \emptyset$.
Consider the product test $f := \prod_{i=1}^k f_i$.
Using the fact that $1 - x \ge e^{-x(1+x)}$ for $x \in [0,1/2]$, we have
\[
  (1 - 2^{-m})^k
  \ge e^{-2^m (1 + 2^{-m}) k}
  \ge e^{-d (1 + 2^{-m})}
  \ge e^{-3d/2}.
\]
Hence,
\begin{align*}
  W_{1,d}[f]
  &= \sum_{\substack{z \in \{0,\ldots,m\}^k \\ \sum_i z_i = d}} \prod_{i=1}^k W_{z_i}[f_i] \\
  &\ge \sum_{\abs{S}=d} \Bigl( \prod_{i \in S} W_{1,1}[f_i] \prod_{i \not\in S} W_{1,0}[f_i] \Bigr) \\
  &= \binom{k}{d} \cdot (m 2^{-m})^d \cdot (1 - 2^{-m})^{k-d} \\
  &\ge \Bigl(\frac{d \cdot 2^m}{d}\Bigr)^d \cdot (m2^{-m})^d \cdot e^{-3d/2} \\
  &= (m/e^{3/2})^d . \qedhere
\end{align*}
\end{proof}
\section{Pseudorandom generators} \label{sec:prg}

In this section, we use Theorem~\ref{thm:L1prod} to construct two pseudorandom generators for product tests.
The first one (Theorem~\ref{thm:prg-prod}) has seed length $\tilde O(m + \log(k/\eps)) \log(k/\eps)$.
The second one (Theorem~\ref{thm:prg-xor}) has a seed length of $\tilde O(m + \log(n/\eps))$ but only works for product tests with outputs $\pmone$ and their variants (see Corollary~\ref{cor:prg-L1}).
We note that Theorem~\ref{thm:prg-xor} can also be obtained using Theorem~\ref{thm:L1prod-Bool} in place of Theorem~\ref{thm:L1prod}.

Both constructions use the Ajtai--Wigderson framework~\cite{AjtaiW89,GopalanMRTV12}, and follow from recursively applying the following theorem, which roughly says that $2^{-\tilde\Omega(m + \log(k/\eps))}$-almost $O(m + \log(k/\eps))$-wise independence plus constant fraction of noise fools product tests.

\bipnfp*

Theorem~\ref{thm:bipnfp} follows immediately by combining Theorem~\ref{thm:L1prod} and Lemma~\ref{lemma:ForbesK18} below.

\begin{lemma} \label{lemma:ForbesK18}
  Let $f\colon \B^n \to [-1,1]$ be a product test with $k$ functions of input length $m$.
  Let $d$ be a positive integer.
  Let $D, T, U$ be a $\delta$-almost $(d+m)$-wise independent, a $\gamma$-almost $(d+m)$-wise independent, and the uniform distributions over $\B^n$, respectively.
  Then
  \[
    \bigl| \E[f(D + T \wedge U)] - \E[f(U)] \bigr| \le k \cdot \bigl( \sqrt{\delta} \cdot W_{1,\le d+m}[f] + 2^{-d/2} + \sqrt{\gamma} \bigr) ,
  \]
  where ``$+$'' and ``$\wedge$'' are bit-wise XOR and AND respectively.
\end{lemma}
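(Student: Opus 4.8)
The plan is to follow the Fourier-analytic argument of Forbes and Kelley~\cite{ForbesK18}, organised around the product structure of $f=\prod_{i=1}^{k}f_i$. Using $\chi_S(D+T\wedge U)=\chi_S(D)\cdot\chi_{S\cap T}(U)$, taking $\E_U$ (which annihilates every $S$ meeting $T$) and then $\E_D$ gives the exact identity
\[
  \E\bigl[f(D+T\wedge U)\bigr]-\E\bigl[f(U)\bigr]=\sum_{\emptyset\neq S\subseteq[n]}\hat f(S)\,\hat D(S)\,\Pr_{T}\!\bigl[S\cap T=\emptyset\bigr],
\]
where $\hat D(S):=\E_{D}[\chi_S]$. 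Since $f$ is a product test, $\hat f(S)=\prod_i\hat{f_i}(S_i)$ vanishes unless $S=\bigsqcup_i S_i$ with $S_i\subseteq I_i$, so I would telescope this sum over the $k$ parts: grouping the summands by the largest index $i$ with $S_i\neq\emptyset$ writes the left side as $\sum_{i=1}^{k}E_i$, where $E_i$ involves only $h_i:=f_i-\E[f_i]$ (mean zero, on $\le m$ bits) on $I_i$ and the sub--product test $R_i:=\prod_{i'<i}f_{i'}$ (range $[-1,1]$) on $J_i:=\bigcup_{i'<i}I_{i'}$; concretely $E_i=c_i\sum_{\emptyset\neq S_i,\,S'}\hat{h_i}(S_i)\hat{R_i}(S')\hat D(S_i\sqcup S')\Pr_T[(S_i\cup S')\cap T=\emptyset]$ for some constant $|c_i|\le1$. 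Bounding each $E_i$ separately by $\sqrt\delta\,W_{1,\le d+m}[f]+2^{-d/2}+\sqrt\gamma$ yields the lemma, and this is the source of the factor $k$.

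Fix $i$ and split $E_i$ according to the number of coordinates $|S'|$ of $S'=S\cap J_i$. If $|S'|\le d$ then $S=S_i\sqcup S'$ has at most $d+m$ elements, so $|\hat D(S)|$ is controlled by the $\delta$-almost $(d+m)$-wise independence of $D$ (at most $\delta$, hence a fortiori $\sqrt\delta$); bounding $\Pr_T[\cdot]\le1$ and factoring leaves $\bigl(\sum_{\emptyset\neq S_i}|\hat{f_i}(S_i)|\bigr)\cdot\bigl(\sum_{|S'|\le d}|\hat{R_i}(S')|\bigr)=\bigl(\textstyle\sum_{z\ge1}W_{1,z}[f_i]\bigr)\cdot W_{1,\le d}\bigl[\prod_{i'<i}f_{i'}\bigr]$, and a comparison of Fourier levels shows this product is at most $W_{1,\le d+m}[f]$. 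So the low-degree part of $E_i$ contributes $\le\sqrt\delta\,W_{1,\le d+m}[f]$. The only role of the imperfect independence of $T$ is that $\Pr_T[(S_i\cup S')\cap T=\emptyset]$ equals $2^{-|S_i|-|S'|}$ up to an error governed by $\gamma$; isolating that error and estimating it by Cauchy--Schwarz over $T$'s defect gives the additive $\sqrt\gamma$ (per part).

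The main obstacle is the remaining range $|S'|>d$, where $|\hat D(S)|$ is only $\le1$ and the crude bound $\sum_{|S'|>d}|\hat f(S)|\Pr_T[\cdot]$ is hopeless --- for a product of many ``spread'' functions this quantity is exponentially large, so the estimate must genuinely exploit cancellation. The device, following~\cite{ForbesK18}, is the noise operator: replacing $\Pr_T[(S_i\cup S')\cap T=\emptyset]$ by $2^{-|S_i|-|S'|}$ turns $E_i$ into $c_i\,\E_D\bigl[(T_{1/2}h_i)(x_{I_i})\cdot(T_{1/2}R_i)(x_{J_i})\bigr]$, where $T_{1/2}$ is the noise operator $\widehat{T_{1/2}g}(S)=2^{-|S|}\hat g(S)$, and its $|S'|>d$ part is $c_i\,\E_D\bigl[(T_{1/2}h_i)(x_{I_i})\cdot g_i(x_{J_i})\bigr]$ with $g_i$ the restriction of $T_{1/2}R_i$ to Fourier levels above $d$. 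Now $g_i$ is $\ell_2$-negligible, since $\sum_{|S'|>d}\hat{R_i}(S')^2\,4^{-|S'|}\le4^{-d}\|R_i\|_2^2\le4^{-d}$; converting this $\ell_2$-smallness into a bound on $|\E_D[\cdots]|$ is done by a Cauchy--Schwarz arranged so that $\hat D$ is only ever summed over coordinate sets of size $\le d+m$ --- which is precisely why $(d+m)$-wise independence (rather than $d$-wise) is the right hypothesis --- together with the Parseval bound $\|f\|_2\le1$, and this is what produces the $2^{-d/2}$ term. Summing the three contributions $\sqrt\delta\,W_{1,\le d+m}[f]$, $2^{-d/2}$ and $\sqrt\gamma$ over the $k$ parts completes the proof; the delicate point throughout is to keep every invocation of $D$'s independence confined to a window of at most $d+m$ coordinates while the noise disposes of everything above level $d$.
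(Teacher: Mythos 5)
Your starting identity and the telescoping idea are fine, but the decomposition you choose is genuinely different from the paper's, and this difference is where the argument breaks. You group terms by the \emph{last nonempty block} $i^*(S)=\max\{i: S_i\neq\emptyset\}$, which produces pieces $E_i$ whose Fourier support ranges over all levels up to $im$; you then try to impose a threshold a posteriori by splitting each $E_i$ into $|S'|\le d$ and $|S'|>d$. The paper instead follows Forbes--Kelley's Proposition~6.1: it decomposes $f=\hf_\emptyset+L+\sum_i H_i f^{>i}$ where $H_i$ collects exactly those Fourier characters of $f^{\le i}$ for which the $d$th ``1'' lands in block $i$. That choice is not cosmetic: it guarantees each $H_i$ has Fourier support confined to levels $[d,d+m]$, which is precisely what lets the $(d+m)$-wise near-independence of $D$ and $T$ control \emph{every} $\hat D(\cdot)$ and $\Pr_T[\cdot\cap T=\emptyset]$ that arises, while the companion factor $f^{>i}$ has unbounded degree but is bounded pointwise by $1$ and so needs no Fourier control at all.

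In your decomposition the roles are reversed and the structure you need is missing. In $E_i^{\text{high}}$ the piece $g_i := (T_{1/2}R_i)^{>d}$ has Fourier support at all levels from $d+1$ up to $(i-1)m$, and although it is $\ell_2$-small it is \emph{not} bounded pointwise; the pointwise-bounded factor $h_i$ lives only on $m$ coordinates, so any Cauchy--Schwarz you run has to put the large-support $g_i$ inside an $L_2$ average over the pseudorandom input. Writing $E_i^{\text{high}} = c_i\,\E_{D,T}\bigl[\tilde h_i(D,T)\,\tilde g_i(D,T)\bigr]$ with $\tilde h_i,\tilde g_i$ the $U$-averages, Cauchy--Schwarz gives $\E_{D,T}[\tilde g_i^2]^{1/2}$, whose expansion involves $\hat D(S'\triangle S'')$ and $\Pr_T[(S'\cup S'')\cap T=\emptyset]$ for $|S'|,|S''|>d$ with $S',S''\subseteq J_i$, i.e.\ sets of size up to $2(i-1)m \gg d+m$. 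Neither $\delta$-almost nor $\gamma$-almost $(d+m)$-wise independence says anything about these, and $\|g_i\|_2\le 2^{-d}$ is of no help because the expectation is not over the uniform measure. There is no arrangement of Cauchy--Schwarz within your decomposition that confines $\hat D$ to sets of size $\le d+m$ as you assert. (Your low-degree bound, by contrast, is fine once you recall that the factor $c_i=\prod_{j>i}\E[f_j]$ recombines $|\hat{f_i}(S_i)|\,|\hat{R_i}(S')|\,|c_i|$ into $|\hf(S_i\cup S')|$, so that part does land inside $W_{1,\le d+m}[f]$.) To repair the argument you would have to switch to the FK-style decomposition by the location of the $d$th one, which is exactly what the paper does.
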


\begin{proof}
   We slightly modify the decomposition in~\cite[Proposition~6.1]{ForbesK18} as follows.
   Let $f$ be a product test and write $f = \prod_{i=1}^k f_i$.
   As the distribution $D + T \wedge U$ is symmetric, we can assume the function $f_i$ is defined on the $i$th $m$ bits.
   For every $i \in \{1, \ldots, k\}$, let $f^{\le i} = \prod_{j \le i} f_j$ and $f^{>i} = \prod_{j > i} f_j$.
   We decompose $f$ into
    \begin{align}
      f = \hf_{\emptyset} + L + \sum_{i=1}^k H_i f^{> i} , \label{eqn:FKdecomposition}
    \end{align}
    where
    \[
      L := \sum_{\substack{\alpha \in \B^{mk} \\ 0 < \abs{\alpha} < d}} \hf_{\alpha} \chi_{\alpha}
    \]
    and
    \[
      H_i := \sum_{\substack{\alpha = (\alpha_1, \ldots, \alpha_i) \in \B^{mi} : \\ 
      \text{the $d$th $1$ in $\alpha$ appears in $\alpha_i$} }} \hf^{\le i}_{\alpha} \chi_{\alpha} . 
    \]
		We now show that the expressions on both sides of Equation~\eqref{eqn:FKdecomposition} are identical.
		Clearly, every Fourier coefficient on the right hand side is a coefficient of $f$.
    To see that every coefficient of $f$ appears on the right hand side exactly once, let $\alpha = (\alpha_1, \dots, \alpha_k) \in \B^{mk}$ and $\hf_\alpha = \prod_{i=1}^k \hf_i(\alpha_i)$ be a coefficient of $f$.
    If $\abs{\alpha} < d$, then $\hf_\alpha$ appears in $\hf_\emptyset$ or $L$.
    Otherwise, $\abs{\alpha} \ge d$.
    Then the $d$th $1$ in $\alpha$ must appear in one of $\alpha_1, \ldots, \alpha_k$.
    Say it appears in $\alpha_i$.
    Then we claim that $\alpha$ appears in $H_i f^{>i}$.
    This is because the coefficient indexed by $(\alpha_1, \ldots, \alpha_i)$ appears in $H_i$,
    and the coefficient indexed by $(\alpha_{i+1}, \ldots, \alpha_k)$ appears in $f^{>i}$.
    Note that all the coefficients in each function $H_i$ have weights between $d$ and $d+m$,
    and because our distributions $D$ and $T$ are both almost $(d+m)$-wise independent, we get an error of $2^{-d} + \gamma$ in Lemma~7.1 in~\cite{ForbesK18}.
    The rest of the analysis follows from~\cite{ForbesK18} or~\cite{HLV-bipnfp}.
\end{proof}

\subsection{Generator for product tests}

We now prove Theorem~\ref{thm:prg-prod}.

\prgprod*

The high-level idea is very simple.
Let $f$ be a product test.
For every choice of $D$ and $T$ in Theorem~\ref{thm:bipnfp}, the function $f'\colon \B^T \to [-1,1]$ defined by $f'(y) := f(D + T \wedge y)$ is also a product test.
So we can apply Theorem~\ref{thm:bipnfp} again and recurse.
In Lemma~\ref{lemma:recursion} below we show that if we repeat this argument for $t = O(\log(k/\eps))$ times with $t$ independent copies of $D$ and $T$, then for every fixing of $D_1, \ldots, D_t$ and with high probability over the choice of $T_1, \ldots, T_t$, the restricted product test defined on $\B^{\bigwedge_{i=1}^t T_i}$ is a product test with $k$ functions of input length $m/2$.
Now we simply repeat above for $O(\log m) = \tilde O(1)$ steps so that $f$ becomes a constant function and we are done.

\begin{lemma} \label{lemma:recursion}
  If there is an explicit generator $G'\colon \B^{\ell'} \to \B^n$ that fools product tests with $k$ functions of input length $m/2$ with error $\eps'$ and seed length $\ell'$, then there is an explicit generator $G\colon \B^{\ell} \to \B^n$ that fools product tests with $k$ functions of input length $m$ with error $\eps' + \eps$ and seed length 
  \[
    \ell' + O(\log(k/\eps)) \bigl( (m + \log(k/\eps) )(\log m + \log\log(k/\eps)) + \log\log n \bigr)
    = \ell' + \tilde O(m + \log(k/\eps)) \log(k/\eps) .
  \]
\end{lemma}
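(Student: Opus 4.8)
The plan is to apply Theorem~\ref{thm:bipnfp} a bounded number of times and then invoke $G'$ once at the bottom of the recursion. Fix $t := O(\log(k/\eps))$ (the precise constant will be pinned down by the union bound below), set $d := m + O(\log(k/\eps))$, and choose $\delta \le 2^{-d}$ and $\gamma \le 2^{-d}$ small enough that the error bound of Theorem~\ref{thm:bipnfp} is at most $\eps/(3t)$; solving the two inequalities $2^{-(d-m)/2} \le \eps/(6kt)$ and $\sqrt{\delta}\,(170\sqrt{m\ln(ek)})^d \le \eps/(6kt)$ shows that $d = m + O(\log(k/\eps))$ and $\log(1/\delta) = O\bigl((m+\log(k/\eps))(\log m + \log\log k)\bigr)$ suffice. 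The generator $G$ samples $t$ independent pairs $(D_1,T_1),\dots,(D_t,T_t)$, each consisting of two independent strings in $\B^n$ that are respectively $\delta$- and $\gamma$-almost $d$-wise independent, samples a seed for $G'$ yielding $y \in \B^n$, and outputs $D_1 + T_1 \wedge (D_2 + T_2 \wedge (\cdots + T_t \wedge y))$. Using the standard seed length $O(d + \log(1/\delta) + \log\log n)$ for $\delta$-almost $d$-wise independence over $\B^n$, the total seed length is $\ell' + t \cdot O\bigl((m+\log(k/\eps))(\log m + \log\log(k/\eps)) + \log\log n\bigr) = \ell' + \tilde O(m+\log(k/\eps))\log(k/\eps)$, as claimed.

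For the error analysis I would run a hybrid over the $t$ rounds, where the $j$th hybrid replaces the uniform string in the innermost $t-j$ rounds. After fixing $D_1,T_1,\dots,D_j,T_j$, the map $y \mapsto f\bigl(D_1+T_1\wedge(\cdots+T_j\wedge y)\bigr)$ is again a product test with $k$ functions of input length $\le m$ --- product tests are closed under restriction and XOR-shifts --- and the restrictions of $D_{j+1},T_{j+1}$ to the coordinates still in play remain $\delta$- and $\gamma$-almost $d$-wise independent. Hence Theorem~\ref{thm:bipnfp} bounds each of the $t$ hybrid gaps by $\eps/(3t)$, so passing from $f$ evaluated at the uniform distribution to $f$ evaluated at the ``composition all the way down to $y$'' costs at most $\eps/3$.

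It remains to bound the cost of substituting the innermost uniform string by $G'$'s output. After all $D_i$ and $T_i$ are fixed, the set of live coordinates is $W := \bigcap_{i=1}^{t} \Supp(T_i)$, and $f\bigl(D_1+T_1\wedge(\cdots+T_t\wedge y)\bigr)$ depends only on $y_W$ and is a product test on $W$ whose $i$th function reads the $|W \cap I_i|$ bits of $W$ inside block $I_i$. The key combinatorial claim is that, with probability $\ge 1-\eps/6$ over the independent choices of $T_1,\dots,T_t$, every block satisfies $|W \cap I_i| \le m/2$. To see this, fix a block $I_i$ and a subset $A \subseteq I_i$ with $|A| = j_0 := \lfloor m/2 \rfloor + 1 \le d$; the events ``$A \subseteq \Supp(T_i)$'' are independent across $i$ and each has probability at most $2^{-j_0}+\gamma \le 2^{1-j_0}$ by $\gamma$-almost $d$-wise independence, so $A \subseteq W$ with probability at most $2^{(1-j_0)t}$. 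A union bound over the $\le 2^m$ choices of $A$ and the $k$ blocks gives failure probability at most $k\,2^m\,2^{(1-j_0)t}$, which is $\le \eps/6$ once $t = O(\log(k/\eps))$ (the borderline case $m=1$ is handled directly: each single-bit block survives all $t$ rounds with probability at most $(3/4)^t$). On this good event the restricted product test has $k$ functions of input length $\le m/2$, so $G'$ fools it with error $\eps'$; on the complementary event we use the trivial bound $2$ since $f$ is $[-1,1]$-valued. Combining, $\bigl|\E[f(U)] - \E[f(G)]\bigr| \le t \cdot \tfrac{\eps}{3t} + \eps' + 2 \cdot \tfrac{\eps}{6} = \eps' + \tfrac{2\eps}{3} \le \eps' + \eps$.

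The only genuinely delicate ingredient is the combinatorial claim: one has to choose $d \ge m$ so that medium-size subsets of a block are ``seen'' by the almost $d$-wise independent masks $T_i$, and one must exploit genuine independence \emph{across} the $t$ rounds (within a single round only almost $d$-wise independence is available) together with the union bound over all $\binom{m}{\lfloor m/2\rfloor+1}$ subsets of each of the $k$ blocks. Everything else is routine parameter bookkeeping --- choosing $d,\delta,\gamma,t$ so that the three sources of error (the telescoping hybrid error, which is $t$ times the bound of Theorem~\ref{thm:bipnfp}; the $G'$ error $\eps'$; and the failure of block-halving) are each $O(\eps)$, while keeping $\log(1/\delta) = \tilde O(m+\log(k/\eps))$ and hence the per-round seed length at $\tilde O(m + \log(k/\eps)) + O(\log\log n)$.
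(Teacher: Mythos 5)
Your proposal is correct and follows essentially the same route as the paper's proof: the identical nested ``bounded independence plus noise'' construction, a $t$-round hybrid telescoping via Theorem~\ref{thm:bipnfp} with $t=O(\log(k/\eps))$, and a union bound over size-$\lceil m/2\rceil$ subsets of each of the $k$ blocks to show that all block sizes halve after $t$ rounds. The only differences are cosmetic bookkeeping (separate $\delta,\gamma$ rather than a single $\delta$, explicit treatment of the $m=1$ edge case, and a slightly different error split $\eps/3+\eps/3$ vs.\ $\eps/2+\eps/2$).
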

\begin{proof}
  Let $C$ be a sufficiently large constant.
  Let $d = C(m + \log(k/\eps))$, $\delta = d^{-2d}$, and $t = C\log(k/\eps)$.
  Let $D_1, \ldots, D_t, T_1, \ldots, T_t$ be $2t$ independent $\delta$-almost $d$-wise independent distributions over $\B^n$.
  Define $D^{(1)} := D_1$ and $D^{(i+1)} := D_{i+1} + T_i \wedge D^{(i)}$.

  Let $D := D^{(t)}$, $T := \bigwedge_{i=1}^t T_i$.
  For a subset $S \subseteq [n]$, define the function $\PAD_S(x)\colon \B^{\abs{S}} \to \B^n$ to output $n$ bits of which the positions in $S$ are the first $\abs{S}$ bits of $x0^{\abs{S}}$ and the rest are $0$.
  Our generator $G$ outputs 
	\[
		D + T \wedge \PAD_T(G') .
	\]

  We first look at the seed length of $G$.
  By~\cite[Lemma~4.2]{NaN93}, sampling the distributions $D_i$ and $T_i$ takes a seed of length
  \begin{align*}
    s 
    &:= t \cdot O(d \log d + \log\log n) \\
    &= t \cdot O \bigl( (m + \log(k/\eps)) (\log m + \log\log(k/\eps)) + \log\log n \bigr) \\
    &= t \cdot \tilde O \bigl( m + \log(k/\eps) \bigr).
  \end{align*}
  Hence the total seed length of $G$ is $\ell' + s = \ell' + \tilde O(m + \log(k/\eps)) \log(k/\eps)$.

  \medskip

  We now look at the error of $G$.
  By our choice of $\delta$ and applying Theorem~\ref{thm:bipnfp} recursively for $t$ times, we have
  \begin{align*}
    \bigl| \E[f(D + T \wedge U)] - \E[f(U)] \bigr|
    &\le t \cdot k \cdot \Bigl( \sqrt{\delta} \cdot \bigl( 170 \cdot \sqrt{m \ln(ek)} \bigr)^d + 2^{-(d - m)/2} \Bigr) \\
    &\le t \cdot k \cdot \Bigl( \Bigl( \frac{170 \sqrt{m \ln(ek)}}{d} \Bigr)^d + 2^{-\Omega(d)} \Bigr) \\
    &\le t \cdot 2^{-\Omega(d)}
    \le \eps/2 .
  \end{align*}
  Next, we show that for every fixing of $D$ and most choices of $T$, the function $f_{D,T}(y) := f(D + T \wedge y)$ is a product test with $k$ functions of input length $m/2$, which can be fooled by $G'$.
  
  Because the variables $T_i$ are independent and each of them is $\delta$-almost $d$-wise independent, for every subset $I \subseteq [n]$ of size at most $m \le d$, we have
  \[
    \Pr \bigl[ \abs{T \cap I} \ge m/2 \bigr]
    \le \binom{\abs{I}}{m/2} (2^{-m/2} + \delta)^t
    \le 2^m \cdot 2^{-\Omega(mt)} 
    \le \eps/2k.
  \]
  It follows by a union bound over the $k$ subsets $I_1, \ldots, I_k$ that for every fixing of $D$, with probability at least $1 - \eps/2$ over the choice of $T$, the function $f_{D,T}$ is a product test with $k$ functions of input length $m/2$, which can be fooled by $G'$ with error $\eps'$.
  Hence $G$ fools $f$ with error $\eps' + \eps$.
\end{proof}

\begin{proof}[Proof of Theorem~\ref{thm:prg-prod}]
  We apply Lemma~\ref{lemma:recursion} recursively for $r := O(\log m) = \tilde O(1)$ times.
  Note that a product test of input length $0$ is a constant function, which can always be fooled with zero error.
  Hence we have a generator that fools product tests with $k$ functions of input length $m$, with error $r \cdot \eps$ and seed length
  \[
    r \cdot O \bigl( (m + \log(k/\eps)) (\log m + \log\log(k/\eps)) + \log\log n \bigr) \log(k/\eps)
    = \tilde O(\log(k/\eps) + m) \log(k/\eps) . \qedhere
	\]
	Replacing $\eps$ with $\eps/r$ proves the theorem.
\end{proof}

\subsection{Almost-optimal generator for XOR of Boolean functions}

In this section, we construct our generator for product tests with outputs $\pmone$, which correspond to the XOR of Boolean functions $f_i$ defined on disjoint inputs.
Throughout this section we will call these tests {\em $\pmone$-products}.
We first restate our theorem.

\prgxor*

Theorem~\ref{thm:prg-xor} relies on applying the following lemma recursively in different ways.
From now on, we will relax our tests to allow one of the $k$ functions to have input length greater than $m$, but bounded by $O(m + \log(n/\eps))$.

\begin{lemma} \label{lemma:xor-recursion}
	There exists a constant $C$ such that the following holds.
  Let $m$ and $s$ be two integers such that $m \ge C \log\log(n/\eps)$ and $s = 5(m + \log(n/\eps))$.
  If there is an explicit generator $G'\colon \B^{\ell'} \to \B^n$ that fools $\pmone$-products with $k' \le 16^{m+1}$ functions, $k'-1$ of which have input lengths $\le m/2$ and one has length $\le s$, with error $\eps'$ and seed length $\ell'$, then there is an explicit generator $G\colon \B^{\ell} \to \B^n$ that fools $\pmone$-products with $k \le 16^{2m+1}$ functions, $k-1$ of which have input lengths $\le m$ and one has length $\le s$, with error $\eps' + \eps$ and seed length $\ell = \ell' + O(m + \log(n/\eps)) (\log m + \log\log(n/\eps)) = \ell' + \tilde O(m + \log(n/\eps))$.
\end{lemma}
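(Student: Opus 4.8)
The plan is to build $G$ by composing a constant number $r=O(1)$ of derandomized bounded-independence-plus-noise steps (Theorem~\ref{thm:bipnfp}), as in the proof of Lemma~\ref{lemma:recursion}, and then handing the restricted test to $G'$. Fix a large constant $C_1$ and set $d:=C_1(m+\log(n/\eps))$ and $\delta:=2^{-C_1 d(\log m+\log\log(n/\eps))}$. Sample $2r$ independent $\delta$-almost $d$-wise independent distributions $D_1,T_1,\dots,D_r,T_r$ over $\B^n$, let $D$ be the accumulated shift ($D^{(1)}:=D_1$, $D^{(i+1)}:=D_{i+1}+T_i\wedge D^{(i)}$, $D:=D^{(r)}$) and $T:=\bigwedge_{i=1}^r T_i$, so $T$ keeps each coordinate with probability about $2^{-r}$. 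The generator $G$ outputs $D+T\wedge\PAD_T(G')$ when the restricted test $f_{D,T}(y):=f(D+T\wedge y)$ has at most $16^{m+1}$ functions, and $D+T\wedge\PAD_T(H)$ otherwise, where $H$ is an auxiliary generator for $\pmone$-products with many functions built as in~\cite{MekaRT18}, also of seed length $\tilde O(m+\log(n/\eps))$. Since $\pmone$-products are closed under restrictions and shifts, $f_{D,T}$ is again a $\pmone$-product on the coordinate set $T$, so both branches are legitimate.

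First I would bound the error contributed by the $r$ plus-noise steps. Here $f$ is a $\pmone$-product with $k\le 16^{2m+1}$ functions (so $\ln k=O(m)$) and input length at most $s=5(m+\log(n/\eps))$, so Theorem~\ref{thm:bipnfp} --- which rests on the Fourier bound of Theorem~\ref{thm:L1prod} (equivalently Theorem~\ref{thm:L1prod-Bool}) --- shows that one plus-noise step changes $\E[f]$ by at most $k\bigl(\sqrt\delta\,(170\sqrt{s\ln(ek)})^{d}+2^{-(d-s)/2}\bigr)$; with $d-s=(C_1-5)(m+\log(n/\eps))=\Omega(m+\log(k/\eps))$ and $\log(1/\delta)=\Omega(\log(k/\eps)+d\log(s\ln k))$ this is at most $\eps/(4r)$. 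Telescoping over the $r$ restrictions as in Lemma~\ref{lemma:recursion} bounds the total plus-noise error by $\eps/4$.

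The core of the argument is the structural claim: for every fixed $D$, with probability at least $1-\eps/4$ over $T$, the test $f_{D,T}$ can be regrouped into a $\pmone$-product in which every function but one has input length $\le m/2$ and the last one has length $\le s$. Call a non-bad function $f_i$ \emph{failed} if $\abs{I_i\cap T}>m/2$, and group all failed non-bad functions together with the restricted original bad function into one new bad function. Since $T$ is $\delta$-almost $d$-wise independent and keeps a $2^{-r}$ fraction of coordinates, a Chernoff bound shows the restricted original bad function has length $\le s/2$ except with probability $\le\eps/8$ (the slack coming from $s\ge\log(1/\eps)$); and, using that any $B_0\le d/m$ of the disjoint blocks look $\delta$-close to independent under $T$ together with the single-block estimate $\Pr[\abs{I_i\cap T}>m/2]\le 2^{-\Omega(rm)}$, a union bound over $B_0$-subsets gives $\Pr[\#\{\text{failed functions}\}\ge B_0]\le 2^{-\Omega(mB_0)}+\delta k^{B_0}\le\eps/8$ for $B_0:=\lceil\log(1/\eps)/m\rceil+O(1)$, provided $r$ is a constant large enough that $k\cdot 2^{-\Omega(rm)}\le 2^{-m}$. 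On this event the regrouped bad function has length at most $s/2+B_0 m\le s$, where the last step uses $s=5(m+\log(n/\eps))$ and $m\ge C\log\log(n/\eps)\ge C$. Conditioned on this good event, $G$ runs $G'$ if $f_{D,T}$ has $\le 16^{m+1}$ functions and $H$ otherwise, each incurring error $\eps'$; altogether $G$ fools $f$ with error at most $\eps/4+\eps/4+\eps'\le\eps+\eps'$.

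For the seed length, sampling the $2r=O(1)$ distributions costs, by~\cite[Lemma~4.2]{NaN93}, $O(d+\log(1/\delta)+\log\log n)=O\bigl((m+\log(n/\eps))(\log m+\log\log(n/\eps))+\log\log n\bigr)=\tilde O(m+\log(n/\eps))$ bits, on top of the seed $\ell'$ of $G'$ (and of $H$), giving $\ell=\ell'+\tilde O(m+\log(n/\eps))$. The main obstacle I anticipate is the structural claim of the third paragraph: one must pick $r$, the restriction density $2^{-r}$ and the threshold $B_0$ so that \emph{simultaneously} the failed functions are few enough in total length to be absorbed into the budget $s$ of the bad function, the failure bound survives a union over up to $16^{2m+1}$ functions, and $r$ stays a constant (otherwise the seed blows up). It is exactly the reconciliation of these constraints that forces the generous choice $s=5(m+\log(n/\eps))$, the hypothesis $m\ge C\log\log(n/\eps)$, and the recourse to a separate ``many functions'' generator $H$ rather than hoping the restriction by itself drives the number of functions down to $16^{m+1}$.
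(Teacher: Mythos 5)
Your structural argument is in the right spirit (regrouping failed blocks with the one long block to keep a budget of $s$ bits, and showing this holds with high probability over $T$), but there is a fundamental flaw in how you combine the two generators $G'$ and $H$: you define $G$ to output $D+T\wedge\PAD_T(G')$ ``when the restricted test $f_{D,T}$ has at most $16^{m+1}$ functions'' and $D+T\wedge\PAD_T(H)$ otherwise. A pseudorandom generator must be oblivious to the test $f$; its output distribution cannot branch on $f$ (or on $f_{D,T}$). As written, $G$ is not a valid PRG. Also note that the relevant case split should be on $k$, the number of blocks of $f$ \emph{before} the restriction, not on the number after.

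The paper's construction avoids this by \emph{always} XORing with the many-functions generator: $G := (D+T\wedge\PAD_T(G'))+G_{\oplus\Many}$. The key observation is that shifting by a fixed string preserves the class of $\pmone$-products with the same parameters. So in the correctness analysis one case-splits on $k$: if $k\ge 16^m$, fix $D,T,G'$ and view $G$ as $G_{\oplus\Many}$ plus a fixed shift, so Lemma~\ref{lemma:XOR-many} directly fools the shifted test; if $k<16^m$, fix $G_{\oplus\Many}$ and view $G$ as $D+T\wedge\PAD_T(G')$ plus a fixed shift, and then one runs the plus-noise argument plus the structural claim about $T$. Only the analysis branches, never the generator. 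Without this XOR trick (or some other obliviousness device), the proposal does not go through. The rest of your error and seed-length accounting, and the high-probability regrouping under $T$, are broadly consistent with the paper's proof, though the paper bounds the size of the ``bad variable set'' $BV$ directly (separately for variables coming from $I_k$ and for those coming from failed small blocks), using $\Pr[T\cap I=I]\le(2^{-|I|}+\delta)^{50}$ and a union bound over $\binom{k-1}{\lceil s/2m\rceil}\binom{m\lceil s/2m\rceil}{s/2}$ choices, rather than the Chernoff-style count of failed blocks you sketch; your version would need to be made rigorous against almost-$d$-wise independence, but that is a technicality compared to the obliviousness gap.
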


The proof of Lemma~\ref{lemma:xor-recursion} closely follows a construction by Meka, Reingold and Tal~\cite{MekaRT18}. 
First of all, we will use the following generator in~\cite{MekaRT18}.
It fools any $\pmone$-products when the number of functions $k$ is significantly greater than the input length $m$ of the functions $f_i$.

\begin{lemma}[Lemma~6.2 in~\cite{MekaRT18}] \label{lemma:XOR-many}
  There exists a constant $C$ such that the following holds.
  Let $n, k, m, s$ be integers such that $C \log\log(n/\eps) \le m \le \log n$ and $16^m \le k \le 2 \cdot 16^{2m}$.
  There exists an explicit pseudorandom generator $G_{\oplus \Many}\colon \B^\ell \to \B^n$ that fools $\pmone$-products with $k$ non-constant functions, $k-1$ of which have input lengths $\le m$ and one has length $\le s$, with error $\eps$ and seed length $O(s + \log(n/\eps))$.
\end{lemma}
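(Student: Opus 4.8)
Assuming the \verb|\end{lemma}| above, here is the plan for proving Lemma~\ref{lemma:XOR-many}.

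The plan is to split the task into two steps: first show that $f$ is already almost unbiased under the uniform distribution, and then invoke (a version of) the restriction-based construction of~\cite[Lemma~6.2]{MekaRT18}, whose argument applies verbatim to general Boolean $f_i$, using only that a non-constant $\pmone$-valued function on $m_i$ bits has $\abs{\E[f_i]}\le 1-2^{1-m_i}$. For the first step, writing $f=\prod_{i=1}^k f_i$ with $f_i$ non-constant on $m_i\le m$ bits for $i<k$, I would observe
\[
  \abs{\E[f(U)]} = \prod_{i=1}^k \abs{\E[f_i]} \le (1-2^{1-m})^{k-1} \le e^{-(k-1)\,2^{1-m}} \le e^{-8^m} \le \eps/2 ,
\]
where the last inequality uses $m\ge C\log\log(n/\eps)$, so $8^m\ge(\log(n/\eps))^{3C}\ge\log(2/\eps)$ for $C$ large. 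Hence it suffices to construct an explicit $G:=G_{\oplus\Many}$ of seed length $O(s+\log(n/\eps))$ with $\abs{\E[f(G)]}\le \eps/2$ for every such $f$, since then $\abs{\E[f(G)]-\E[f(U)]}\le\eps$.

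For the second step the plan is to exploit that $k\ge 16^m$ forces $f$ to behave like an iterated product of many ``mild'' pieces. Grouping the first $k-1$ functions into $T=\Theta(8^m)$ consecutive groups of $\Theta(2^m)$ functions, $f$ is the product (over disjoint coordinate sets) of group-functions $F_1,\dots,F_T$ with $\abs{\E[F_t]}\le(1-2^{1-m})^{\Theta(2^m)}\le 1/2$, times the single long function $f_k$. One then applies a pseudorandom restriction that fixes a pseudorandomly chosen, small-bias subset of the coordinates to pseudorandom values and leaves the rest free: with probability $1-\eps/\poly(n)$ over the short seed this makes all but a $\polylog(n)$ fraction of the $F_t$ constant while shrinking $f_k$'s input length, leaving a $\pmone$-product on $O(m+\log(n/\eps))$ bits, which a small-bias distribution of bias $2^{-\Theta(m+\log(n/\eps))}$ fools exactly. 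The bounded-independence defect of the small-bias pieces is controlled by the bounded $L_1$ Fourier tail of product tests (Theorem~\ref{thm:L1prod-Bool}), since only bounded-degree characters are relevant, and the constant group-functions contribute a product of $\Omega(8^m)$ signs that behave like independent biased coins, pushing $\abs{\E[f(G)]}$ below $\eps/2$.

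The step I expect to be the main obstacle is getting the seed length down to $O(s+\log(n/\eps))$, i.e.\ strictly linear in $m+\log(n/\eps)$ with no extra $\log m$ or $\log(1/\eps)$ factor. Off-the-shelf primitives are not enough: a small-bias distribution fails outright because $L_1[f]=\prod_i L_1[f_i]$ can be as large as $2^{\Omega(mk)}$, and $\delta$-almost $d$-wise independence plus noise (Theorem~\ref{thm:bipnfp}) costs $d\log(1/\delta)=\Omega\bigl((m+\log(n/\eps))\log m\bigr)$ for the $\delta$ it needs. The key point one must get right is that in the regime $k\ge 16^m$ one does not fool $f$ by a term-by-term Fourier estimate; instead the restriction makes enough of the pieces genuinely constant that the residual randomness acts like true independence across $\Omega(8^m)$ disjoint blocks — precisely what is needed, and precisely what a short pseudorandom restriction can supply.
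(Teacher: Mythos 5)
First, a point of comparison: the paper does not prove this statement at all --- it is imported wholesale from~\cite{MekaRT18} (Lemma~6.2 there), the only implicit observation being that the argument in~\cite{MekaRT18} uses nothing about the $f_i$ beyond the fact that a non-constant $\pmone$-valued function on $\le m$ bits has bias at most $1-2^{1-m}$, so it applies to arbitrary Boolean functions and not just short ROBPs. To the extent that your plan ends by invoking ``(a version of) the construction of~\cite{MekaRT18}'', you are doing what the paper does; your step~1 (under uniform input $\abs{\E[f(U)]}\le(1-2^{1-m})^{k-1}\le e^{-\Omega(8^m)}\le\eps/2$ once $k\ge 16^m$ and $m\ge C\log\log(n/\eps)$, so it suffices to force $\abs{\E[f(G)]}\le\eps/2$) is correct and is indeed the reason the ``many functions'' regime is tractable.

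The genuine gap is in the mechanism you sketch for step~2, which is circular at its core. If the restriction is generated from a seed of length $O(s+\log(n/\eps))$, then the signs of the group functions $F_t$ that it fixes are deterministic functions of that short seed and are nowhere near independent; the assertion that ``the constant group-functions contribute a product of $\Omega(8^m)$ signs that behave like independent biased coins'' is precisely the statement that a short-seed distribution drives the bias of an XOR of $\ge 16^m$ disjoint non-constant functions below $\eps$ --- i.e.\ the lemma being proved --- and nothing in your sketch supplies it: small bias within blocks does not compose across $\Omega(8^m)$ blocks (exactly the $L_1[f]=\prod_i L_1[f_i]$ obstruction you yourself note), Theorem~\ref{thm:L1prod-Bool} controls only low-degree mass, and the bound $\abs{\E[F_t]}\le 1/2$ holds under the uniform distribution, not conditioned on a pseudorandom restriction. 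There is also a quantitative problem upstream: $F_t$ depends on $\Theta(m2^m)$ bits, and (taking the constituent $f_i$ to be parities) it becomes constant only when every one of its $\Theta(2^m)$ constituents has all of its bits fixed, so under any restriction leaving more than an $O(2^{-m})$ fraction of coordinates alive essentially none of the $F_t$ become constant, while a restriction that fixes almost everything throws you back onto the circular claim above. (A non-circular variant --- keep a noise set $T$ and use that functions wholly inside $T$ receive truly uniform, mutually independent inputs, so the bias multiplies --- needs about $m2^m\log(1/\eps)$ truly uniform bits, far exceeding the allowed $O(s+\log(n/\eps))$ seed.) So as written the proposal has no self-contained route to $\abs{\E[f(G)]}\le\eps/2$; either reproduce the actual derandomized bias-amplification argument of~\cite{MekaRT18}, or cite the lemma as the paper does, noting that its proof only uses the $1-2^{1-m}$ bias bound for non-constant functions.
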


Here is the high-level idea of proving Lemma~\ref{lemma:xor-recursion}.
We consider two cases depending on whether $k$ is large with respect to $m$.
If $k \ge 16^m$, then by Lemma~\ref{lemma:XOR-many}, the generator $G_{\xor \Many}$ fools $f$.
Otherwise, we show that for every fixing of $D$ and most choices of $T$, the restriction of $f$ under $(D,T)$ is a $\pmone$-product with $k$ functions, $k-1$ of which have input length $\le m/2$ and one has length $\le s$.
More specifically, we will show that for most choices of $T$, the following would happen:
for the function with input length $\le s$, at most $s/2$ of its inputs remain in $T$;
for the rest of the functions with input length $\le m$, after being restricted by $(D,T)$, at most $\ceil{s/2m}$ of them have input length $> m/2$, and so they are defined on a total of $s/2$ positions in $T$.
Now we can think of these ``bad'' functions as one function with input length $\le s$, and the rest of the at most $k$ ``good'' functions have input length $m/2$.
So we can apply the generator $G'$ in our assumption.

\begin{proof} [Proof of Lemma~\ref{lemma:xor-recursion}]
	Let $C$ be the constant in Lemma~\ref{lemma:XOR-many} and $C'$ be a sufficiently large constant.

	Let $d = C's$ and $\delta = d^{-2d}$.
  Let $D_1, \ldots, D_{50}, T_1, \ldots, T_{50}$ be $100$ independent $\delta$-almost $d$-wise independent distributions over $\B^n$.
  Define $D^{(1)} := D_1$ and $D^{(i+1)} := D_{i+1} + T_i \wedge D^{(i)}$.

  Let $D := D^{(50)}$, $T := \bigwedge_{i=1}^{50} T_i$ and $G_{\oplus \Many}$ be the generator in Lemma~\ref{lemma:XOR-many} with respect to the values of $n, k, m, s$ given in this lemma.  
  For a subset $S \subseteq [n]$, define the function $\PAD_S(x)\colon \B^{\abs{S}} \to \B^n$ to output $n$ bits of which the positions in $S$ are the first $\abs{S}$ bits of $x0^{\abs{S}}$ and the rest are $0$.
  Our generator $G$ outputs 
	\[
		(D + T \wedge \PAD_T(G')) + G_{\oplus \Many} .
	\]

  We first look at the seed length of $G$.
  By Lemma~\ref{lemma:XOR-many}, $G_{\oplus \Many}$ uses a seed of length $O(s + \log(n/\eps)) = O(m + \log(n/\eps))$.
  By~\cite[Lemma~4.2]{NaN93}, sampling the distributions $D_i$ and $T_i$ takes a seed of length
  \[
    O(s \log s)
    = O \bigl( m + \log(n/\eps )\bigr) (\log m + \log\log(n/\eps))
    = \tilde O(m + \log(n/\eps)) .
  \]
  Hence the total seed length of $G$ is $\ell' + O( m + \log(n/\eps)) (\log m + \log\log(n/\eps)) = \ell' + \tilde O(m + \log(n/\eps))$.
  
  \medskip

  We now show that $G$ fools $f$.
	Write $f = \prod_{i=1}^k f_i$, where $f_i\colon\B^{I_i} \to \pmone$.
	Without loss of generality we can assume each function $f_i$ is non-constant.
	We consider two cases.

  \paragraph{$k$ is large:} If $k \ge 16^m$, then for every fixing of $D$, $T$ and $G'$, the function $f'(y) := f(D + T \wedge \PAD_T(G') + y)$ is also a $\pmone$-product with the same parameters as $f$.
	Note that we always have $k \le n$ and so $m \le \log n$.
  Hence it follows from Lemma~\ref{lemma:XOR-many} that the generator $G_{\oplus \Many}$ fools $f'$ with error $\eps$.
  Averaging over $D$, $T$ and $G'$ shows that $G$ fools $f$ with error $\eps$.

  \paragraph{$k$ is small:} Now suppose $k \le 16^m$.
  For every fixing of $G_{\oplus \Many}$, consider $f'(y) := f(y + G_{\oplus \Many})$.
  Again, $f'$ is a $\pmone$-product with the same parameters as $f$.
	In particular, it is a $\pmone$-product with $k$ functions with input length $s$.
  So, by our choice of $\delta$ and applying Theorem~\ref{thm:bipnfp} recursively for $50$ times, we have
  \begin{align*}
    \bigl| \E[f'(D + T \wedge U)] - \E[f'(U)] \bigr|
    &\le 50 \cdot k \cdot \Bigl( \sqrt{\delta} \cdot \bigl( 170 \cdot \sqrt{s \ln(ek)} \bigr)^d + 2^{-(d-s)/2} \Bigr) \\
    &\le 50 \cdot 2^s \cdot \Bigl( (170s/d)^d + 2^{-\Omega(s)} \Bigr) \\
    &\le 2^{-\Omega(s)}
    \le \eps/2 .
  \end{align*}
  Next, we show that for every fixing of $D$ and most choices of $T$, the function $f'_{D,T}(y) := f'(D + T \wedge y)$ is a $\pmone$-product with $k$ functions, $k-1$ of which have input lengths $\le m/2$ and one has length $\le s$, which can be fooled by $G'$.
  
  Because the variables $T_i$ are independent and each of them is $\delta$-almost $d$-wise independent, for every subset $I \subseteq [n]$ of size at most $d$, we have
  \[
    \Pr[T \cap I = I]
    = \prod_{i=1}^{50} \Pr[T_i \cap I = I]
    \le (2^{-\abs{I}} + \delta)^{50}
    \le (3/4)^{-50\abs{I}} .
  \]

	Without loss of generality, we assume $I_1, \ldots, I_{k-1}$ are the subsets of size at most $m$ and $I_k$ is the subset of size at most $s$.
	We now look at which subsets $T \cap I_i$ have length at most $m/2$ and which subsets do not.
	For the latter, we collect the indices in these subsets.

  Let $G:= \{i \in [k-1]: \abs{T \cap I_i} \le m/2\}$, $B:= \{i \in [k-1]: \abs{T \cap I_i} > m/2\}$ and $BV := \{j \in [n]: j \in \bigcup_{i\in B} (T \cap I_i) \}$.
  We claim that with probability $1 - \eps/2$ over the choice of $T$, we have $\abs{BV} \le s$.
  Note that the indices in $BV$ either come from $I_k$, or $I_i$ for $i \in [k-1]$.
  For the first case, the probability that at least $s/2$ of the indices in $I_k$ appear in $BV$ is at most
  \[
    \binom{\abs{I_k}}{s/2} (3/4)^{-25s}
    \le 2^s \cdot (3/4)^{-25s}
    \le \eps/4 .
  \]
  For the second case, note that if at least $s/2$ of the variables in $\bigcup_{i \in [k-1]} I_i$ appear in $BV$, then
  they must appear in at least $\ceil{s/2m}$ of the subsets $T \cap I_1, \ldots, T \cap I_{k-1}$.
  The probability of the former is at most the probability of the latter, which is at most 
  \[
    \binom{k-1}{\ceil{s/2m}} \binom{m \cdot \ceil{s/2m}}{s/2} (3/4)^{-25s}
    \le 16^{m \cdot (s/2m+1)} \cdot 2^{m \cdot (s/2m+1)} \cdot (3/4)^{-25s}
    \le \eps/4,
  \]
  because $k \le 16^m$ and $m \le s$.
  Hence with probability $1-\eps/2$ over the choice of $T$, the function $f'_{D,T}$ is a product $g \cdot h$, where $g$ is a product of $\abs{G} \le k-1$ functions of input length $m/2$, and $h$ is a product of $\abs{B}+1$ functions defined on a total of $\abs{BV} \le s$ bits.
  Recall that $k \le 16^m$, so by our assumption $G'$ fools $f'_{D,T}$ with error $\eps'$.
  Therefore $G$ fools $f$ with error $\eps + \eps'$.
\end{proof}

We obtain Theorem~\ref{thm:prg-xor} by applying Lemma~\ref{lemma:xor-recursion} repeatedly in different ways.

\begin{proof}[Proof of Theorem~\ref{thm:prg-xor}]
	Given a $\pmone$-product $f\colon \B^n \to \pmone$ with $k$ functions of input length $m$,
  we will apply Lemma~\ref{lemma:xor-recursion} in stages.
  In each stage, we start with a $\pmone$-product $f$ with $k_1$ functions, $k_1-1$ of which have input lengths $\le m_1 = \max\{m, 2\log(n/\eps)\}$ and one has length $\le s:= 5(m + \log(n/\eps))$.
	Note that $k_1 \le 16^{2 m_1 + 1}$.
	Let $C$ be the constant in Lemma~\ref{lemma:xor-recursion}.
  We apply Lemma~\ref{lemma:xor-recursion} for $t = O(\log m_1)$ times until $f$ is restricted to a $\pmone$-product $f'$ with $k_2$ functions, $k_2 - 1$ of which have input lengths $\le m_2$ and one has length $\le s$, where $m_2 = C \log\log(n/\eps)$, $k_2 \le 16^{2m_2 +1} \le (\log (n/\eps))^r$, and $r := 8C+4$ is a constant.
	This uses a seed of length
  \begin{align*}
    t \cdot O(m + \log(n/\eps)) (\log m + \log\log(n/\eps))
    &\le O(m + \log(n/\eps)) (\log m + \log\log(n/\eps))^2 \\
    &= \tilde O(m + \log(n/\eps)) .
  \end{align*}

  At the end of each stage, we repeat the above argument by grouping every $\ceil{\log (n/\eps) / m_2}$ functions of $f'$ that have input lengths $\le m_2$ as one function of input length $\le 2\log(n/\eps)$, so we can think of $f'$ as a $\pmone$-product with $k_3 := k_2 / \ceil{m_2/(\log n)} \le (\log(n/\eps))^{r-1} \log\log n$ functions, $k_3 - 1$ of which have input lengths $\le \log(n/\eps)$ and one has length $\le s$.

  Repeating above for $r+1 = O(1)$ stages, we are left with a $\pmone$-product of two functions, one has input length $\le C \log \log(n/\eps)$, and one has length $\le s$, which can then be fooled by a $2^{-\Omega(s)}$-biased distribution that can be sampled using $O(m + \log(n/\eps))$ bits~\cite{NaN93}.
  So the total seed length is $O(m + \log(n/\eps)) (\log m + \log\log(n/\eps))^2 = \tilde O(m + \log(n/\eps))$, and the error is $(r+1) \cdot t \cdot \eps$.
  Replacing $\eps$ with $\eps / (r+1)t$ proves the theorem.
\end{proof}

\section{Level-$k$ inequalities} \label{sec:levelk}

In this section, we prove Lemma~\ref{lemma:level-k-inequalities} that gives an upper bound on the $d$th level Fourier weight of a $[0,1]$-valued function in $L_2$-norm.
We first restate the lemma.

\levelk*

Our proof closely follows the argument in~\cite{Talagrand96}.

\begin{claim} \label{claim:Talagrand}
  Let $f\colon \B^n \to \R$ have Fourier degree at most $d$ and $\norm{f}_2 = 1$.
  Let $g\colon \B^n \to [0,1]$ be any function.
  If $t_0 \ge 2e^{d/2}$, then
  \[
    \E \bigl[g(x) \abs{f(x)} \bigr]
		\le \E[g] t_0 + 2e t_0^{1-2/d} e^{-\frac{d}{2e} t_0^{2/d}} .
  \]
\end{claim}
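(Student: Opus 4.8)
\textbf{Proof proposal for Claim~\ref{claim:Talagrand}.}

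The plan is to bound $\E[g\abs{f}]$ by splitting the expectation according to the size of $\abs{f(x)}$ at a threshold $t_0$, and then to use a strong tail bound on $\abs{f}$ coming from hypercontractivity (the $(2,q)$-hypercontractive inequality for low-degree functions). Concretely, I would write
\[
  \E\bigl[g(x)\abs{f(x)}\bigr] = \E\bigl[g(x)\abs{f(x)}\1[\abs{f(x)}\le t_0]\bigr] + \E\bigl[g(x)\abs{f(x)}\1[\abs{f(x)}> t_0]\bigr].
\]
The first term is at most $\E[g]\,t_0$ since $g\ge 0$ and on that event $\abs{f}\le t_0$; this gives the first summand in the claimed bound. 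For the second term, since $g$ is $[0,1]$-valued we drop $g$ and it remains to bound $\E[\abs{f}\1[\abs{f}>t_0]]$, a pure tail quantity for the low-degree function $f$ with $\norm{f}_2=1$.

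For the tail term I would use the standard consequence of hypercontractivity that a degree-$d$ function satisfies $\norm{f}_q \le (q-1)^{d/2}\norm{f}_2 = (q-1)^{d/2}$ for all $q\ge 2$, which by Markov's inequality yields the subgaussian-type tail $\Pr[\abs{f(x)}>t] \le \bigl((q-1)^{d/2}/t\bigr)^q$. Optimizing the choice of $q$ as a function of $t$ — roughly $q = \tfrac{d}{2e}\,t^{2/d}$, valid (i.e.\ $q\ge 2$) precisely when $t\ge t_0\ge 2e^{d/2}$ — gives a tail bound of the shape $\Pr[\abs{f}>t]\le e^{-\frac{d}{2e}t^{2/d}}$ up to constants. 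Then I would integrate this tail: $\E[\abs{f}\1[\abs{f}>t_0]] = t_0\Pr[\abs{f}>t_0] + \int_{t_0}^\infty \Pr[\abs{f}>t]\,dt$, and estimate the integral by noting that $t^{2/d}$ grows, so the integrand decays and the integral is dominated (up to the $2e$ factor and the polynomial correction $t_0^{1-2/d}$) by its value near $t_0$; a change of variables $u = t^{2/d}$ or simply bounding $\int_{t_0}^\infty e^{-ct^{2/d}}\,dt$ by $O(t_0^{1-2/d}e^{-ct_0^{2/d}})$ with $c = \tfrac{d}{2e}$ does it. Collecting constants should land exactly on $2e\,t_0^{1-2/d}e^{-\frac{d}{2e}t_0^{2/d}}$.

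The main obstacle I anticipate is not any single step but the bookkeeping of constants: one has to choose the hypercontractivity exponent $q(t)$ carefully so that the exponent in the tail is exactly $\tfrac{d}{2e}t^{2/d}$ (rather than $\tfrac{d}{2}t^{2/d}$ or some other constant), which forces the condition $t_0\ge 2e^{d/2}$ to ensure $q\ge 2$ throughout, and then to control the tail integral tightly enough that the prefactor comes out as $2e\,t_0^{1-2/d}$ and not something larger. A secondary point to be careful about is that the claim is stated for $f$ with range in $\R$ (not assumed Boolean), so I must invoke hypercontractivity in the form valid for arbitrary low-degree real functions on the cube, and the truncation argument must not use any special structure of $f$ beyond $\deg f\le d$ and $\norm{f}_2=1$. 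Once the tail bound and its integral are in hand, combining with the trivial first term is immediate.
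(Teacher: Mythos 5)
Your overall plan is the right one, and your re-derivation of the tail bound $\Pr[\abs{f}\ge t]\le e^{-\frac{d}{2e}t^{2/d}}$ for $t\ge(2e)^{d/2}$ from $(2,q)$-hypercontractivity plus Markov with the optimized exponent $q\approx t^{2/d}/e$ is exactly the standard proof of the concentration lemma the paper invokes as a black box (Lemma~2.2 of Dinur--Friedgut--Kindler--O'Donnell); the estimate of $\int_{t_0}^\infty e^{-\frac{d}{2e}t^{2/d}}\,dt$ by $2e\,t_0^{1-2/d}e^{-\frac{d}{2e}t_0^{2/d}}$ is likewise the right shape and is what the paper proves separately via a substitution to an incomplete Gamma integral.

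The gap is in how you split $\E[g\abs{f}]$. Thresholding on the event $\{\abs{f(x)}>t_0\}$ and dropping $g$ there, you correctly write $\E[\abs{f}\,\1[\abs{f}>t_0]]=t_0\Pr[\abs{f}>t_0]+\int_{t_0}^\infty\Pr[\abs{f}>t]\,dt$, but then you only estimate the integral and expect the constants to come out as $2e\,t_0^{1-2/d}e^{-\frac{d}{2e}t_0^{2/d}}$. The boundary term $t_0\Pr[\abs{f}>t_0]\le t_0\,e^{-\frac{d}{2e}t_0^{2/d}}$ does not go away, and in the stated range it dominates: $t_0\ge(2e)^{d/2}$ gives $t_0^{2/d}\ge 2e$, hence $t_0\ge 2e\,t_0^{1-2/d}$, so this term alone already exceeds the claimed error. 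Your decomposition therefore proves at best $\E[g\abs{f}]\le\E[g]t_0+2t_0\,e^{-\frac{d}{2e}t_0^{2/d}}$, which is weaker than the statement (still enough for the downstream Lemma~\ref{lemma:level-k-inequalities} up to a constant, but it is not the claim as written).

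To hit the stated bound you must truncate in the $t$-variable, not on the event. Write $\abs{f(x)}=\int_0^\infty\1[\abs{f(x)}\ge t]\,dt$, note that pointwise $g(x)\,\1[\abs{f(x)}\ge t]=\min\{g(x),\1[\abs{f(x)}\ge t]\}$ since $g\in[0,1]$, and take expectations to get $\E[g\abs{f}]\le\int_0^\infty\min\{\E[g],\Pr[\abs{f}\ge t]\}\,dt$. Now splitting this $t$-integral at $t_0$ gives exactly $\E[g]t_0+\int_{t_0}^\infty\Pr[\abs{f}\ge t]\,dt$ with no boundary term, and the rest of your estimate applies. This is the paper's argument (following Talagrand). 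The distinction matters precisely because the boundary term your split produces is not subsumed by the tail integral.
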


To prove this claim, we will use the following concentration inequality for functions with Fourier degree $k$ from~\cite{DinurFKO07}.

\begin{theorem}[Lemma~2.2 in~\cite{DinurFKO07}]
  Let $f\colon \B^n \to \R$ have Fourier degree at most $d$ and assume that $\norm{f}_2 := \sum_S \hf_S^2 = 1$.  Then for any $t \ge (2e)^{d/2}$,
  \[
    \Pr \bigl[ \abs{f} \ge t \bigr] \le e^{-\frac{d}{2e} t^{2/d}} .
  \]
\end{theorem}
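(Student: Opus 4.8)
The plan is to deduce this tail bound from the Bonami--Beckner $(2,q)$-hypercontractive inequality together with Markov's inequality, which is the standard route to concentration for low-degree functions on the cube. The only external ingredient is that for every $f\colon \B^n \to \R$ of Fourier degree at most $d$ and every real $q \ge 2$ one has $\norm{f}_q \le (q-1)^{d/2} \norm{f}_2$; this is exactly where the degree bound is used, and it is what will make the final exponent scale like $t^{2/d}$.

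First I would fix a real parameter $q \ge 2$, to be chosen at the end, and apply Markov's inequality to the nonnegative random variable $\abs{f}^q$. Since $x \mapsto x^q$ is increasing on $[0,\infty)$,
\[
  \Pr\bigl[\abs{f} \ge t\bigr] = \Pr\bigl[\abs{f}^q \ge t^q\bigr] \le \frac{\E[\abs{f}^q]}{t^q} = \frac{\norm{f}_q^q}{t^q} .
\]
Next I would invoke hypercontractivity and use the normalization $\norm{f}_2 = 1$ to get $\norm{f}_q^q \le (q-1)^{dq/2}$, hence
\[
  \Pr\bigl[\abs{f} \ge t\bigr] \le \Bigl( \frac{(q-1)^{d/2}}{t} \Bigr)^q .
\]

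It remains to optimize over $q$. I would take $q := 1 + t^{2/d}/e$: the hypothesis $t \ge (2e)^{d/2}$ gives $t^{2/d} \ge 2e$, so $q \ge 3 \ge 2$ and hypercontractivity applies. With this choice $(q-1)^{d/2} = (t^{2/d}/e)^{d/2} = t\, e^{-d/2}$, so the bracketed quantity equals $e^{-d/2}$ and the bound becomes
\[
  \Pr\bigl[\abs{f} \ge t\bigr] \le e^{-dq/2} = e^{-d/2} \cdot e^{-\frac{d}{2e} t^{2/d}} \le e^{-\frac{d}{2e} t^{2/d}} ,
\]
which is precisely the claimed inequality.

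There is no genuinely hard step here; the argument is routine once hypercontractivity is granted. The only points that require a little care are verifying that the chosen exponent $q$ lies in the range $q \ge 2$ where the hypercontractive inequality is valid — this is exactly what the threshold $t \ge (2e)^{d/2}$ buys us, comfortably, rather than the slightly smaller $t \ge e^{d/2}$ one might naively hope for — and noting that the suboptimal-but-clean choice of $q$ suffices because the leftover factor $e^{-d/2}$ is $\le 1$ and can simply be discarded.
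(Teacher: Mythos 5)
Your argument is correct: Markov on $\abs{f}^q$ plus the $(2,q)$-hypercontractive bound $\norm{f}_q \le (q-1)^{d/2}\norm{f}_2$ for degree-$d$ functions, with $q = 1 + t^{2/d}/e$ (valid since $t \ge (2e)^{d/2}$ forces $q \ge 3$), gives $\Pr[\abs{f}\ge t] \le e^{-dq/2} \le e^{-\frac{d}{2e}t^{2/d}}$ exactly as claimed. The paper itself imports this statement from~\cite{DinurFKO07} without proof, and your derivation is essentially the standard argument used there, so there is nothing further to compare.
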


We also need to bound above the integral of $e^{-\frac{d}{2e} t^{2/d}}$.

\begin{claim} \label{claim:incomplete-gamma}
  Let $d$ be any positive integer.
  If $t_0 \ge (2e)^{d/2}$, then we have
  \[
    \int_{t_0}^\infty e^{-\frac{d}{2e} t^{2/d}} dt
    \le 2e t_0^{1-2/d} e^{-\frac{d}{2e} t_0^{2/d}} .
  \]
\end{claim}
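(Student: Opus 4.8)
The plan is to reduce the tail integral to a single clean estimate by rescaling. First I would substitute $t = t_0 u$, which turns the left-hand side into $t_0 \int_1^\infty \exp\bigl(-\frac{d}{2e} t_0^{2/d} u^{2/d}\bigr)\,du$. The point of this substitution is that all the dependence on $d$ and $t_0$ now sits in the single constant $\frac{d}{2e}t_0^{2/d}$ multiplying $u^{2/d}$, and the hypothesis $t_0 \ge (2e)^{d/2}$ becomes simply $t_0^{2/d} \ge 2e$.

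Next I would linearize the exponent using $e^x \ge 1 + x$: applying this with $x = \frac{2}{d}\ln u \ge 0$ gives $u^{2/d} \ge 1 + \frac{2}{d}\ln u$ for every $u \ge 1$, hence $\exp\bigl(-\frac{d}{2e}t_0^{2/d}u^{2/d}\bigr) \le e^{-\frac{d}{2e}t_0^{2/d}}\cdot u^{-t_0^{2/d}/e}$. Plugging this in bounds the integral by $t_0\, e^{-\frac{d}{2e}t_0^{2/d}} \int_1^\infty u^{-t_0^{2/d}/e}\,du$. Since $t_0^{2/d}/e \ge 2 > 1$, the remaining integral converges and equals $1/(t_0^{2/d}/e - 1) = e/(t_0^{2/d} - e)$.

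It then remains to absorb the factor $e/(t_0^{2/d}-e)$ into the claimed bound. Using $t_0^{2/d} \ge 2e$ once more gives $t_0^{2/d} - e \ge t_0^{2/d}/2$, so $e/(t_0^{2/d} - e) \le 2e/t_0^{2/d}$, and therefore the integral is at most $t_0 \cdot (2e/t_0^{2/d}) \cdot e^{-\frac{d}{2e}t_0^{2/d}} = 2e\, t_0^{1 - 2/d}\, e^{-\frac{d}{2e}t_0^{2/d}}$, as desired. The computation is entirely routine; the only thing one has to watch is that the constant $2e$ in the conclusion is exactly what the condition $t_0^{2/d} \ge 2e$ buys -- it is used both to make the tail integral $\int_1^\infty u^{-p}\,du$ finite and to control $t_0^{2/d}-e$ from below. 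An alternative route would be to substitute $s = \frac{d}{2e}t^{2/d}$ and recognize the right-hand side as an upper incomplete Gamma integral $\Gamma(d/2, s_0)$ with $s_0 \ge d$, then invoke the standard tail bound $\Gamma(a,x) \le x^{a-1}e^{-x}/(1-(a-1)/x)$; this also works but introduces more moving parts than the rescaling argument above.
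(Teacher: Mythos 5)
Your proof is correct, and it takes a genuinely different and more elementary route than the paper's. The paper substitutes $s = \frac{d}{2e}t^{2/d}$ to turn the left-hand side into (a constant times) the upper incomplete Gamma integral $\Gamma_{s_0}(d/2) = \int_{s_0}^\infty s^{d/2-1}e^{-s}\,ds$ with $s_0 = \frac{d}{2e}t_0^{2/d}$, then unwinds the integration-by-parts recursion $\Gamma_{s_0}(a) = s_0^{a-1}e^{-s_0} + (a-1)\Gamma_{s_0}(a-1)$, using $s_0 \ge d-2$ to dominate the resulting sum by a geometric series with ratio $\le 1/2$, which is exactly where the factor $2$ appears. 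Your argument rescales by $t = t_0 u$ instead, linearizes the exponent via $u^{2/d} = e^{(2/d)\ln u} \ge 1 + \frac{2}{d}\ln u$, and reduces everything to the elementary integral $\int_1^\infty u^{-p}\,du = \frac{1}{p-1}$ with $p = t_0^{2/d}/e \ge 2$; the final absorption $t_0^{2/d} - e \ge t_0^{2/d}/2$ again uses the hypothesis to produce the factor $2$. Both give the same constant $2e$. What your version buys is that it avoids the incomplete Gamma machinery and the recursion entirely: the only tool is $e^x \ge 1+x$, so the proof is self-contained and arguably easier to verify. The paper's Gamma-function framing, on the other hand, makes transparent how the bound would sharpen if one assumed $s_0$ much larger than $d$ (the geometric ratio shrinks), which is less visible in the rescaled form. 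Either way, your proof is sound and I verified every step.
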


\begin{proof}
  First we apply the following change of variable to the integral.
  We set $s = \frac{d}{2e} t^{2/d}$ and obtain
  \[
    \int_{t_0}^\infty e^{-\frac{d}{2e} t^{2/d}} dt
    = e \Bigl( \frac{2e}{d} \Bigr)^{d/2-1} \int_{s_0}^\infty s^{d/2-1} e^{-s} ds ,
  \]
  where $s_0 = \frac{d}{2e} t_0^{2/d}$.
  Define 
	\[
		\Gamma_{s_0}(d) = \int_{s_0}^\infty s^{d-1} e^{-s} ds .
	\]
  (Note that when $s_0 = 0$ then $\Gamma_0(d)$ is the Gamma function.)
  Using integration by parts, we have
  \begin{align}
    \Gamma_{s_0}(d) = s_0^{d-1} e^{-s_0} + (d-1) \Gamma_{s_0}(d-1) . \label{eqn:gamma}
  \end{align}
  Moreover, when $d \le 1$, we have $\Gamma_{s_0}(d) \le s_0^{d-1} \int_{s_0}^\infty e^{-s} ds = s_0^{d-1} e^{-s_0}$.

  Note that if $t_0 \ge (2e)^{d/2}$, then $s_0 \ge d-2$.
  Hence, if we open the recursive definition of $\Gamma_{s_0}(d/2)$ in Equation~\eqref{eqn:gamma}, we have
  \begin{align*}
    \Gamma_{s_0}(d/2)
    &\le e^{-s_0} \sum_{i=0}^{\ceil{\frac{d}{2}}-1} s_0^{d/2-1-i} \prod_{j=1}^i (d/2 - j) \\
    &\le e^{-s_0} s_0^{d/2-1} \sum_{i=0}^{\ceil{\frac{d}{2}}-1} \Bigl(\frac{d/2-1}{s_0}\Bigr)^i \\
    &\le 2 e^{-s_0} s_0^{d/2-1} ,
  \end{align*}
  because the summation is a geometric sum with ratio at most $1/2$.
  Substituting $s_0$ with $t_0$, we obtain
  \begin{align*}
    e \Bigl( \frac{2e}{d} \Bigr)^{d/2-1} \int_{s_0}^\infty s^{d/2-1} e^{-s} ds
    &\le 2 e \Bigl( \frac{2e}{d} \Bigr)^{d/2-1} e^{-s_0} s_0^{d/2-1} \\
    &= 2 e t_0^{1-2/d} e^{-\frac{d}{2e} t_0^{2/d}} . \qedhere
  \end{align*}
\end{proof}

\begin{proof}[Proof of Claim~\ref{claim:Talagrand}]
  We rewrite $\abs{f(x)}$ as $\int_0^{\abs{f(x)}} \1 dt = \int_0^\infty \1(\abs{f(x)} \ge t) dt$ and obtain
  \begin{align*}
    \E_{x \sim \B^n}[ g(x) |f(x)| ]
    &= \E_{x \sim \B^n}\Bigl[ \int_0^{\infty} g(x) \1(\abs{f(x)} \ge t) dt \Bigr] \\
    &\le \E_{x \sim \B^n}\Bigl[ \int_0^{\infty} \min \bigl\{ g(x), \1(\abs{f(x)} \ge t) \bigr\} dt \Bigr] \\
    &= \int_0^\infty \min \Bigl\{ \E[g], \Pr_x[\abs{f(x)} \ge t] \Bigr\} dt \\
    &\le \int_0^{t_0} \E[g] dt + \int_{t_0}^\infty \Pr \bigl[ \abs{f(x)} \ge t \bigr] dt \\
    &\le \E[g] t_0 + \int_{t_0}^\infty e^{-\frac{d}{2e}t^{2/d}} dt.
  \end{align*}
  Since $t_0 \ge (2e)^{d/2}$, by Claim~\ref{claim:incomplete-gamma} this is at most $\E[g] t_0 + 2e t_0^{1-2/d} e^{-\frac{d}{2e} t_0^{2/d}}$.
\end{proof}

\begin{proof}[Proof of Lemma~\ref{lemma:level-k-inequalities}]
  Define $f$ to be $f(x) := \sum_{\abs{S}=d} \hf_S \chi_S(x)$, where $\hf_S = \hg_S \bigl( \sum_{\abs{T}=d} \hg_T^2 \bigr)^{-1/2}$.
  Note that $\norm{f}_2 = 1$, and we have
	\[
		\E[g(x) f(x)]
		= \frac{\sum_S \hg_S \E[g(x) \chi_S(x)]}{ \bigl( \sum_{\abs{T}=d} \hg_T^2 \bigr)^{1/2}}
		= \Bigl( \sum_{\abs{S}=d} \hg_S^2 \Bigr)^{1/2} .
	\]
  Let $t_0 = (2e \ln(e/\E[g]^{1/d}))^{d/2} \ge (2e)^{d/2}$.
  By Claim~\ref{claim:Talagrand},
  \[
    \Bigl( \sum_{\abs{S}=d} \hg_S^2 \Bigr)^{1/2}
    = \E[g(x) f(x)]
    \le \E[g(x) \abs{f(x)}]
    \le \E[g] t_0 + 2e t_0^{1-2/d} e^{-\frac{d}{2e} t_0^{2/d}}.
  \]
  By our choice of $t_0$, the second term is at most
	\[
		2e t_0^{1-2/d} e^{-\frac{d}{2e} t_0^{2/d}}
		\le \left( 2e \ln\Bigl( \frac{e}{\E[g]^{1/d}} \Bigr) \right)^{d/2} \frac{\E[g]}{e^d}
		\le (2/e)^{d/2} \E[g] \ln\Bigl(\frac{e}{\E[g]^{1/d}} \Bigr)^{d/2} ,
	\]
	which is no greater than the first term.
  So
  \[
    \Bigl( \sum_{\abs{S}=d} \hg_S^2 \Bigr)^{1/2}
    \le 2 \E[g] \bigl( 2e \ln(e/\E[g]^{1/d}) \bigr)^{d/2} .
  \]
  and the lemma follows.
\end{proof}

\subsection*{Acknowledgement}
I thank Salil Vadhan for asking about the Fourier spectrum of product tests.
I also thank Andrej Bogdanov, Gil Cohen, Amnon Ta-Shma, Avishay Tal and Emanuele Viola for very helpful conversations.
I am grateful to Emanuele Viola for his invaluable comments on the write-up.

\bibliographystyle{alpha}
\bibliography{OmniBib}

\newcommand{\etalchar}[1]{$^{#1}$}
\def\cprime{$'$}
\begin{thebibliography}{GMR{\etalchar{+}}12}

\bibitem[Aar10]{Aaronson10}
Scott Aaronson.
\newblock {BQP} and the polynomial hierarchy.
\newblock In {\em 42nd ACM Symp.~on the Theory of Computing (STOC)}, pages
  141--150. ACM, 2010.

\bibitem[ABO84]{AjB84}
Mikl{\'o}s Ajtai and Michael Ben-Or.
\newblock A theorem on probabilistic constant depth computations.
\newblock In {\em 16th ACM Symp.~on the Theory of Computing (STOC)}, pages
  471--474, 1984.

\bibitem[Ajt83]{Ajt83}
Mikl{\'o}s Ajtai.
\newblock {$\Sigma \sp{1}\sb{1}$}-formulae on finite structures.
\newblock {\em Annals of Pure and Applied Logic}, 24(1):1--48, 1983.

\bibitem[AKS87]{AKS87}
Mikl{\'o}s Ajtai, J\'{a}nos Koml\'{o}s, and Endre Szemer{\'e}di.
\newblock Deterministic simulation in logspace.
\newblock In {\em 19th ACM Symp.~on the Theory of Computing (STOC)}, pages
  132--140, 1987.

\bibitem[Ama09]{Amano09}
Kazuyuki Amano.
\newblock Bounds on the size of small depth circuits for approximating
  majority.
\newblock In {\em 36th Coll.~on Automata, Languages and Programming (ICALP)},
  pages 59--70. Springer, 2009.

\bibitem[ASWZ96]{ArmoniSWZ96}
Roy Armoni, Michael~E. Saks, Avi Wigderson, and Shiyu Zhou.
\newblock Discrepancy sets and pseudorandom generators for combinatorial
  rectangles.
\newblock In {\em 37th IEEE Symp.~on Foundations of Computer Science (FOCS)},
  pages 412--421, 1996.

\bibitem[AW89]{AjtaiW89}
Miklos Ajtai and Avi Wigderson.
\newblock Deterministic simulation of probabilistic constant-depth circuits.
\newblock {\em Advances in Computing Research - Randomness and Computation},
  5:199--223, 1989.

\bibitem[BPW11]{BogdanovPW11}
Andrej Bogdanov, Periklis~A. Papakonstantinou, and Andrew Wan.
\newblock Pseudorandomness for read-once formulas.
\newblock In {\em IEEE Symp.~on Foundations of Computer Science (FOCS)}, pages
  240--246, 2011.

\bibitem[BV10]{BrodyV10}
Joshua Brody and Elad Verbin.
\newblock The coin problem, and pseudorandomness for branching programs.
\newblock In {\em 51th IEEE Symp.~on Foundations of Computer Science (FOCS)},
  2010.

\bibitem[CGR14]{CohenGR14}
Gil Cohen, Anat Ganor, and Ran Raz.
\newblock Two sides of the coin problem.
\newblock In {\em Workshop on Randomization and Computation (RANDOM)}, pages
  618--629, 2014.

\bibitem[Cha02]{Chang02}
Mei-Chu Chang.
\newblock A polynomial bound in {F}reiman's theorem.
\newblock {\em Duke Math. J.}, 113(3):399--419, 2002.

\bibitem[CHHL18]{ChattopadhyayHHL18}
Eshan Chattopadhyay, Pooya Hatami, Kaave Hosseini, and Shachar Lovett.
\newblock Pseudorandom generators from polarizing random walks.
\newblock {\em Electronic Colloquium on Computational Complexity}, Technical
  Report TR18-015, 2018.
\newblock www.eccc.uni-trier.de/.

\bibitem[CHLT19]{ChattopadhyayHLT19}
Eshan Chattopadhyay, Pooya Hatami, Shachar Lovett, and Avishay Tal.
\newblock Pseudorandom generators from the second fourier level and
  applications to ac0 with parity gates.
\newblock In {\em I{TCS}'19---{P}roceedings of the 2019 {ACM} {C}onference on
  {I}nnovations in {T}heoretical {C}omputer {S}cience}. 2019.

\bibitem[CHRT18]{ChattopadhyayHRT18}
Eshan Chattopadhyay, Pooya Hatami, Omer Reingold, and Avishay Tal.
\newblock Improved pseudorandomness for unordered branching programs through
  local monotonicity.
\newblock In {\em ACM Symp.~on the Theory of Computing (STOC)}, 2018.

\bibitem[CRS00]{ChariRS00}
Suresh Chari, Pankaj Rohatgi, and Aravind Srinivasan.
\newblock Improved algorithms via approximations of probability distributions.
\newblock {\em J. Comput. System Sci.}, 61(1):81--107, 2000.

\bibitem[CSV15]{ChenSV15}
Sitan Chen, Thomas Steinke, and Salil~P. Vadhan.
\newblock Pseudorandomness for read-once, constant-depth circuits.
\newblock {\em CoRR}, abs/1504.04675, 2015.

\bibitem[DETT10]{DeETT10}
Anindya De, Omid Etesami, Luca Trevisan, and Madhur Tulsiani.
\newblock Improved pseudorandom generators for depth 2 circuits.
\newblock In {\em Workshop on Randomization and Computation (RANDOM)}, pages
  504--517, 2010.

\bibitem[DFKO07]{DinurFKO07}
Irit Dinur, Ehud Friedgut, Guy Kindler, and Ryan O'Donnell.
\newblock On the {F}ourier tails of bounded functions over the discrete cube.
\newblock {\em Israel J. Math.}, 160:389--412, 2007.

\bibitem[DHH18]{DoronHH18}
Dean Doron, Pooya Hatami, and William Hoza.
\newblock Near-optimal pseudorandom generators for constant-depth read-once
  formulas.
\newblock 2018.
\newblock ECCC TR18-183.

\bibitem[EGL{\etalchar{+}}98]{EvenGLNV98}
Guy Even, Oded Goldreich, Michael Luby, Noam Nisan, and Boban Velickovic.
\newblock Efficient approximation of product distributions.
\newblock {\em Random Struct. Algorithms}, 13(1):1--16, 1998.

\bibitem[FK18]{ForbesK18}
Michael~A. Forbes and Zander Kelley.
\newblock Pseudorandom generators for read-once branching programs, in any
  order.
\newblock In {\em 47th IEEE Symposium on Foundations of Computer Science
  (FOCS)}, 2018.

\bibitem[GKM15]{GopalanKM15}
Parikshit Gopalan, Daniel Kane, and Raghu Meka.
\newblock Pseudorandomness via the discrete fourier transform.
\newblock In {\em IEEE Symp.~on Foundations of Computer Science (FOCS)}, pages
  903--922, 2015.

\bibitem[GMR{\etalchar{+}}12]{GopalanMRTV12}
Parikshit Gopalan, Raghu Meka, Omer Reingold, Luca Trevisan, and Salil Vadhan.
\newblock Better pseudorandom generators from milder pseudorandom restrictions.
\newblock In {\em IEEE Symp.~on Foundations of Computer Science (FOCS)}, 2012.

\bibitem[GSW16]{GopalanRW16}
Parikshit Gopalan, Rocco~A. Servedio, and Avi Wigderson.
\newblock Degree and sensitivity: tails of two distributions.
\newblock In {\em 31st {C}onference on {C}omputational {C}omplexity}, volume~50
  of {\em LIPIcs. Leibniz Int. Proc. Inform.}, pages Art. No. 13, 23. Schloss
  Dagstuhl. Leibniz-Zent. Inform., Wadern, 2016.

\bibitem[GY14]{GopalanY14}
Parikshit Gopalan and Amir Yehudayoff.
\newblock Inequalities and tail bounds for elementary symmetric polynomials.
\newblock {\em Electronic Colloquium on Computational Complexity {(ECCC)}},
  21:19, 2014.

\bibitem[HLV18]{HLV-bipnfp}
Elad Haramaty, Chin~Ho Lee, and Emanuele Viola.
\newblock Bounded independence plus noise fools products.
\newblock {\em SIAM J. Comput.}, 47(2):493--523, 2018.

\bibitem[HT18]{HatamiT18}
Pooya Hatami and Avishay Tal.
\newblock Pseudorandom generators for low-sensitivity functions.
\newblock In {\em 9th {I}nnovations in {T}heoretical {C}omputer {S}cience},
  volume~94 of {\em LIPIcs. Leibniz Int. Proc. Inform.}, pages Art. No. 29, 13.
  Schloss Dagstuhl. Leibniz-Zent. Inform., Wadern, 2018.

\bibitem[IMR14]{ImpagliazzoMR14}
Russell Impagliazzo, Cristopher Moore, and Alexander Russell.
\newblock An entropic proof of {C}hang's inequality.
\newblock {\em SIAM J. Discrete Math.}, 28(1):173--176, 2014.

\bibitem[IMZ12]{ImpagliazzoMZ12}
Russell Impagliazzo, Raghu Meka, and David Zuckerman.
\newblock Pseudorandomness from shrinkage.
\newblock In {\em IEEE Symp.~on Foundations of Computer Science (FOCS)}, pages
  111--119, 2012.

\bibitem[INW94]{INW94}
Russell Impagliazzo, Noam Nisan, and Avi Wigderson.
\newblock Pseudorandomness for network algorithms.
\newblock In {\em 26th ACM Symp.~on the Theory of Computing (STOC)}, pages
  356--364, 1994.

\bibitem[KK13]{KellerK13}
Nathan Keller and Guy Kindler.
\newblock Quantitative relation between noise sensitivity and influences.
\newblock {\em Combinatorica}, 33(1):45--71, 2013.

\bibitem[KS18]{KoppartyS18}
Swastik Kopparty and Srikanth Srinivasan.
\newblock Certifying polynomials for {$\rm AC^0[\oplus]$} circuits, with
  applications to lower bounds and circuit compression.
\newblock {\em Theory Comput.}, 14:Article 12, 24, 2018.

\bibitem[LSS{\etalchar{+}}18]{LimayeSSTV18}
Nutan Limaye, Karteek Sreenivasiah, Srikanth Srinivasan, Utkarsh Tripathi, and
  S~Venkitesh.
\newblock The coin problem in constant depth: Sample complexity and parity
  gates.
\newblock In {\em Electronic Colloquium on Computational Complexity (ECCC)},
  number TR18--157, 2018.

\bibitem[Lu02]{Lu02}
Chi-Jen Lu.
\newblock Improved pseudorandom generators for combinatorial rectangles.
\newblock {\em Combinatorica}, 22(3):417--433, 2002.

\bibitem[LV17]{LV-rop}
Chin~Ho Lee and Emanuele Viola.
\newblock More on bounded independence plus noise: Pseudorandom generators for
  read-once polynomials.
\newblock In {\em Electronic Colloquium on Computational Complexity (ECCC)},
  volume~24, page 167, 2017.

\bibitem[LV18]{LeeV-coin}
Chin~Ho Lee and Emanuele Viola.
\newblock The coin problem for product tests.
\newblock {\em ACM Trans. Comput. Theory}, 10(3):Art. 14, 10, 2018.

\bibitem[Man95]{Mansour95}
Yishay Mansour.
\newblock An {$O(n^{\log \log n})$} learning algorithm for {DNF} under the
  uniform distribution.
\newblock {\em J. Comput. System Sci.}, 50(3, part 3):543--550, 1995.
\newblock Fifth Annual Workshop on Computational Learning Theory (COLT)
  (Pittsburgh, PA, 1992).

\bibitem[MRT18]{MekaRT18}
Raghu Meka, Omer Reingold, and Avishay Tal.
\newblock Pseudorandom generators for width-3 branching programs.
\newblock {\em arXiv preprint arXiv:1806.04256}, 2018.

\bibitem[Nis92]{Nis92}
Noam Nisan.
\newblock Pseudorandom generators for space-bounded computation.
\newblock {\em Combinatorica}, 12(4):449--461, 1992.

\bibitem[NN93]{NaN93}
Joseph Naor and Moni Naor.
\newblock Small-bias probability spaces: efficient constructions and
  applications.
\newblock {\em SIAM J.~on Computing}, 22(4):838--856, 1993.

\bibitem[NZ96]{NiZ96}
Noam Nisan and David Zuckerman.
\newblock Randomness is linear in space.
\newblock {\em J.~of Computer and System Sciences}, 52(1):43--52, February
  1996.

\bibitem[O'D14]{ODonnell14}
Ryan O'Donnell.
\newblock {\em Analysis of Boolean Functions}.
\newblock Cambridge University Press, 2014.

\bibitem[PPT92]{PevcaricPT92}
Josip~E. Pe\v{c}ari\'{c}, Frank Proschan, and Y.~L. Tong.
\newblock {\em Convex functions, partial orderings, and statistical
  applications}, volume 187 of {\em Mathematics in Science and Engineering}.
\newblock Academic Press, Inc., Boston, MA, 1992.

\bibitem[RS17]{RossmanS17}
Benjamin Rossman and Srikanth Srinivasan.
\newblock Separation of {$\rm AC^0[\oplus]$} formulas and circuits.
\newblock In {\em 44th {I}nternational {C}olloquium on {A}utomata, {L}anguages,
  and {P}rogramming}, volume~80 of {\em LIPIcs. Leibniz Int. Proc. Inform.},
  pages Art. No. 50, 13. Schloss Dagstuhl. Leibniz-Zent. Inform., Wadern, 2017.

\bibitem[RSV13]{ReingoldSV13}
Omer Reingold, Thomas Steinke, and Salil~P. Vadhan.
\newblock Pseudorandomness for regular branching programs via {F}ourier
  analysis.
\newblock In {\em Workshop on Randomization and Computation (RANDOM)}, pages
  655--670, 2013.

\bibitem[ST18]{ServedioT18}
Rocco~A. Servedio and Li-Yang Tan.
\newblock Improved pseudorandom generators from pseudorandom multi-switching
  lemmas.
\newblock {\em CoRR}, abs/1801.03590, 2018.

\bibitem[Ste04]{Steele04}
J.~Michael Steele.
\newblock {\em The {C}auchy-{S}chwarz master class}.
\newblock MAA Problem Books Series. Mathematical Association of America,
  Washington, DC; Cambridge University Press, Cambridge, 2004.

\bibitem[Ste13]{Steinberger13}
John~P. Steinberger.
\newblock The distinguishability of product distributions by read-once
  branching programs.
\newblock In {\em IEEE Conf.~on Computational Complexity (CCC)}, pages
  248--254, 2013.

\bibitem[SV10]{ShV-dec}
Ronen Shaltiel and Emanuele Viola.
\newblock Hardness amplification proofs require majority.
\newblock {\em SIAM J.~on Computing}, 39(7):3122--3154, 2010.

\bibitem[SVW14]{SteinkeVW14}
Thomas Steinke, Salil~P. Vadhan, and Andrew Wan.
\newblock Pseudorandomness and fourier growth bounds for width-3 branching
  programs.
\newblock In {\em Workshop on Randomization and Computation (RANDOM)}, pages
  885--899, 2014.

\bibitem[Tal96]{Talagrand96}
Michel Talagrand.
\newblock How much are increasing sets positively correlated?
\newblock {\em Combinatorica}, 16(2):243--258, 1996.

\bibitem[Tal17]{Tal17}
Avishay Tal.
\newblock Tight bounds on the fourier spectrum of {AC0}.
\newblock In {\em Conf.~on Computational Complexity (CCC)}, pages 15:1--15:31,
  2017.

\bibitem[TX13]{TrevisanX13}
Luca Trevisan and TongKe Xue.
\newblock A derandomized switching lemma and an improved derandomization of
  ac0.
\newblock In {\em Computational Complexity (CCC), 2013 IEEE Conference on},
  pages 242--247. IEEE, 2013.

\bibitem[Val84]{Valiant84-Majority}
Leslie~G. Valiant.
\newblock Short monotone formulae for the majority function.
\newblock {\em J. Algorithms}, 5(3):363--366, 1984.

\bibitem[Vio09]{ViolaBPvsE}
Emanuele Viola.
\newblock On approximate majority and probabilistic time.
\newblock {\em Computational Complexity}, 18(3):337--375, 2009.

\bibitem[Vio14]{Viola-rbd}
Emanuele Viola.
\newblock Randomness buys depth for approximate counting.
\newblock {\em Computational Complexity}, 23(3):479--508, 2014.

\bibitem[Wat13]{Watson13}
Thomas Watson.
\newblock Pseudorandom generators for combinatorial checkerboards.
\newblock {\em Computational Complexity}, 22(4):727--769, 2013.

\end{thebibliography}

\end{document}